\definecolor{blueish}{HTML}{007F99}
\definecolor{purpleish}{HTML}{72177A}
\newtheorem{claim}{Claim}
\newtheorem{question}{Question}
\newcommand{\eps}{\epsilon}
\newcommand{\vol}{\operatorname{vol}}
\newcommand{\change}[1]{{#1}}
\renewcommand{\hl}[1]{{#1}}
\title{Robust Factorizations and Colorings of Tensor Graphs\thanks{Submitted to the editors February 2023.}}
\author{Joshua Brakensiek\thanks{Stanford University.
  (\email{jbrakens@cs.stanford.edu}). Funded in part by an NSF Graduate
  Research Fellowship and a Microsoft Research PhD Fellowship.}
\and Sami Davies\thanks{University of California, Berkeley. (\email{samidavies@berkeley.edu}). Funded in part by a Microsoft Research PhD Fellowship and an NSF CI Fellowship. Part of the work was done while this author was at Northwestern University.}}
\begin{document}

\maketitle

\begin{abstract}
Since the seminal result of Karger, Motwani, and Sudan,
algorithms for approximate 3-coloring have primarily centered around SDP-based rounding.
However, it is likely that important combinatorial or algebraic insights are 
needed in order to break the $n^{o(1)}$ threshold. 
One way to develop new understanding in graph coloring is to study 
special subclasses of graphs. 
For instance, Blum studied the 3-coloring of random graphs, 
and Arora and Ge studied the 3-coloring of graphs with low threshold-rank.

In this work, we study graphs which arise from a tensor product, which
appear to be novel instances of the 3-coloring problem.  We consider
graphs of the form $H = (V,E)$ with $V =V( K_3 \times G)$ and
$E = E(K_3 \times G) \setminus E'$, where
$E' \subseteq E(K_3 \times G)$ is any edge set such that no vertex has
more than an $\epsilon$ fraction of its edges in $E'$.  We show that
one can construct $\widetilde{H} = K_3 \times \widetilde{G}$ with
$V(\widetilde{H}) = V(H)$ that is close to $H$.  For arbitrary $G$,
$\widetilde{H}$ satisfies
$|E(H) \Delta E(\widetilde{H})| \leq O(\epsilon|E(H)|)$.  Additionally
when $G$ is a mild expander, we provide a 3-coloring for $H$ in
polynomial time. These results partially generalize an exact tensor
factorization algorithm of
Imrich.
On the other hand, without any assumptions on $G$, we show that it is
\textsf{NP}-hard to 3-color $H$.
\end{abstract}

\begin{keywords}
  graph reconstruction, tensor factorization, 3-coloring, approximation algorithms
\end{keywords}

\begin{MSCcodes}
  05C70, 05C15, 05C85, 68Q25, 68R10, 68W25
\end{MSCcodes}

\section{Introduction}

The 3-coloring problem is one of the most classical problems in
theoretical computer science \cite{karp1975computational}. Although it
is NP-hard, much effort has been made to understand the
\emph{approximate} 3-coloring problem: given a 3-colorable
graph\footnote{In this paper, all graphs are undirected, simple (i.e., no double
  edges) and loopless (no edge from a vertex to itself).} on
$n$ vertices as input, what is the fewest number of colors one can
efficiently color the graph with?  Initially, combinatorial algorithms
were dominant in approximate 3-coloring, bringing us Wigderson's
famous $O(\sqrt{n})$-approximation \cite{wigderson1983improving}, as
well as Blum's $\widetilde{O}(n^{3/8})$-approximation and Blum's
3-coloring algorithm for many random 3-colorable graphs \cite{Blum94}.
Then, SDP-algorithms took center stage\footnote{The one exception to
  this trend is Kawarabayashi and Thorup's
  $\widetilde{O}(n^{4/11})$-approximation
  \cite{kawarabayashi2012combinatorial}.}.  This began with the
celebrated work of Karger, Motwani, and Sudan
\cite{karger1998approximate}.  With a lot of extra work, clever
observations were made to augment their algorithm or combine it with
combinatorial algorithms and obtain better approximation
results \cite{AC06,Chlamtac07, kawarabayashi2017coloring}.  We give a
more complete history of the 3-coloring problem in Section \ref{sec:
  related-work}.

It is unclear how much further success can be obtained by combining 
SDP algorithms with fancier combinatorial techniques, 
and it is likely that completely new ideas are needed.
One way to continue building new insight on 3-coloring is to focus on interesting subclasses of 3-colorable graphs. 
Indeed, Blum did just this in targeting random 3-colorable graphs \cite{Blum94}, 
and Arora and Ge generalized this result by studying low threshold rank 3-colorable graphs \cite{arora2011new}. 
Additionally, the improvement from Kawarabayashi and Thorup comes 
from focusing on graphs with high degree \cite{kawarabayashi2017coloring}.
Overall, we do not fully understand what properties make graphs easy or hard to color, 
and our work is a further exploration of this.

In this work, we propose a new class of 3-colorable graphs that are
interesting for the 3-coloring problem: graphs that are close to the
tensor graph $K_3 \times G$.  For undirected graphs $F$ and $G$, their
\emph{tensor product}\footnote{Also known as the cardinal, direct, or
  Kronecker product, among other names.} $F \times G$ is a graph on
the vertex set $V(F) \times V(G)$, where vertices $(f,g)$ and
$(f',g')$ are incident if and only if $(f,f') \in E(F)$ and
$(g,g') \in E(G)$. Observe that if $G$
is connected, we can set $F=K_3$ and the graph $P = K_3 \times G$ is
easy to 3-color. In particular, we first locate (say by brute force)
one triple of the form $K_3 \times \{g\}$ for some $g \in G$ in $P$,
which we call a \emph{core} triple. We color the core triple with
three distinct colors, and then observe the colors of the neighbors in
the graph are forced. That is, for any $g'$ in $g$'s
neighborhood, the core triples $K_3 \times \{g\}$ and
$K_3 \times \{g'\}$ have 6 edges between them such that if $g$'s copy
of $K_3$ is colored with three distinct colors, there is only one
valid coloring for $g'$'s copy.  This coloring propagates through out
$P$.  See Figure \ref{fig: coloring} for an illustration.  On the
other hand, suppose we delete edges from $P$ to form a graph $H$.  If
the number of deletions is large enough that a coloring is not
immediately forced from fixing the colors on one core triple, then
it is not obvious how to 3-color $H$. 

Another way to view this is via LP hierarchies.  In particular,
consider a $O(1)$-level Sherali-Adams lift on the basic 3-coloring LP.
This simple coloring algorithm that was successful for 3-coloring $P$
is also successful for 3-coloring $H$ exactly when the lifted LP
variables directly provide the core triples.  In fact, we prove this
algorithm is successful when $G$ is an expander for our deletion model
in Theorem \ref{thm:no-sparse-cuts}.  Instead of arguing that a lifted
LP/ SDP rounding procedure could succeed through properties of the LP
solution\change{--}we do not know whether such a proof is possible in our
setting\change{--}we study its combinatorial (or topological!) analog.  We
believe this is a more promising avenue to complement the existing
3-coloring work, and overall lead to more progress on the problem.

\begin{figure}
  \begin{center}
    \includegraphics[width = 11cm]{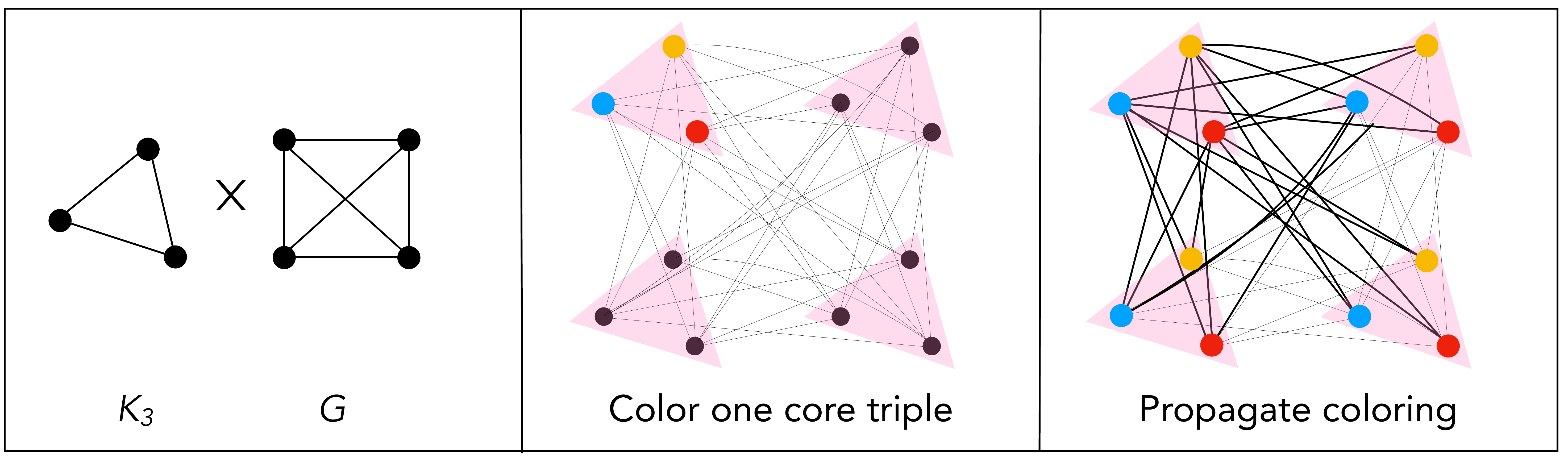}
  \end{center}
  
  \caption{To color $K_3 \times G$, fix a \change{set of 3 vertices} with 3 colors. 
  Then, see what colors this forces on the rest of the graph. Some initial \change{choice of 3 vertices} will induce a valid 3-coloring.}
  \label{fig: coloring}
  \end{figure}

However, this seemingly simple family of graphs does not seem
to be properly captured by the current literature on coloring algorithms.
In particular, to the best of our knowledge,
there is no guarantee that previous analyses of coloring algorithms would work well on $H$.
First, $H$ can look far from random, which prohibits a
guaranteed success by Blum's coloring tools~\cite{Blum94}. 
Additionally, the threshold rank of $H$ is uncontrollable. 
If $G$ has high threshold rank, 
then the tensor $K_3 \times G$ has high threshold rank, 
and the deletion of edges to form $H$ has varied effects on the spectrum.
Overall, this means we have no guarantee for
a polynomial time $O(\log n)$-coloring by the algorithm from Arora and
Ge \cite{arora2011new}\change{.}\footnote{It is possible that there 
exists some reweighting of the edges of $H$ such that the reweighted graph has at most polylog$(n)$ threshold rank. 
This leaves open the possibility that a combination of combinatorial techniques
together with the algorithm by Arora and Ge \cite{arora2011new} could produce a quasi-polynomial time algorithm using $O(\log n)$ many colors.}
To the best of our knowledge, current analyses of Gaussian cap rounding procedures 
would not color near tensors in constant colors.
In response, we ask the following question:

\bigskip
{\centering 
\emph{Suppose edges are deleted from $P = K_3 \times G$ to build $H$. 
When can one 3-color $H$ in polynomial time?}\par
}
\bigskip

If no edges are deleted from the tensor product, polynomial time
factorization is possible.  We will describe an algorithm presented by
Imrich \cite{imrich1998factoring} in Section~\ref{Imrich-details},
which shows that if $F$ and $G$ are connected non-bipartite, and prime
with respect to the tensor product \footnote{ I.e. there is no such
  graphs $F_1,F_2$ on more than 1 vertex where $F=F_1 \times F_2$, and
  same for $G$.} one can reduce to a factoring problem for the
\emph{Cartesian product}.\footnote{The Cartesian product of two graphs
  $F, G$ is a graph $F \square G$ on $V(F) \times V(G)$ such that
  $(u_1,v_1) \sim (u_2,v_2)$ in $F \square G$ if $(u_1,u_2) \in F$ and
  $v_1 = v_2$ or $u_1 = u_2$ and $(v_1,v_2) \in F$. Note that unless
  $F$ and $G$ contain loops, $F \times G$ and $F \square G$ have
  disjoint sets of edges.} In particular, Imrich constructs a
Cartesian product graph $F' \square G'$ (without knowing what $F'$ and
$G'$ are), where $V(F') = V(F)$ and $V(G')=V(G)$.  Then, a similarity
metric by Winkler~\cite{winkler1987factoring} (see also
\cite{hochstrasser1992note,imrich2007recognizing, FeigenbaumHS85}) can
be used on $F' \square G' $ to identify components, one of which will
build the graph $F$ and the other $G$.  Until then, one important
thing to note about these procedures is that they are very brittle to
any deviations from a tensor graph.  In other words, these procedures
cannot be run on graphs $H$ which are very close to a tensor, e.g.
adding a small number of edges to $H$ would turn it into a tensor.
Tensor product graphs (and graphs close to a tensor product) are in a
wide range of applications, including image
processing~\cite{signal-processing}, network design~\cite{LCKFG10},
complex datasets~\cite{big-data}, dynamic location
theory~\cite{hammack2011handbook}, and chemical graph
theory~\cite{wienertrees}.  We note that graphs near tensor products
are especially important in modeling social networks \cite{big-data}.
Therefore, an interesting combinatorial question is

\bigskip
{\centering
\emph{When can one approximately factor a graph that is close to a tensor product in polynomial time?}\par
}
\bigskip

We now present some models and results on these questions.

\subsection{Problem and theorem statements}

Due to both the motivation from 3-coloring and our interest in robust
tensor graph factoring algorithms, we study the following question.
Let $K_3$ be the clique on three vertices, $G$ be a graph on $n$
vertices, and $P = K_3 \times G$ be their tensor product.  We consider
graphs of the form $H = (V(P), E(P)\setminus E')$, for
$E' \subseteq E(P)$ an edge set such that no vertex $v$ in $H$ has
more than an $\epsilon$ fraction of its incident edges in $E'$.  We
say that such an $H$ is \emph{$\epsilon$-near} the triangle tensor
product $P$.  Our deletion model is very general, as the deleted edges
could be \change{adversarially} chosen in such a way that nodes with
substantially different looking neighborhoods in $P$ look the same in
$H$.  
This model appears to be novel, as approximate graph products have mainly only been studied for the
Cartesian product, not for the tensor product--we will discuss the
related work more in Subsection \ref{sec: related-work}.

Suppose we are given $H$ as above. Our primary reconstruction goal is
the \emph{$\ell_1$ reconstruction goal},\footnote{As the name suggests, there are other natural
reconstruction goals, see Section~\ref{sec:conclusion} for more
details.} in which we aim to construct a graph $\widetilde{H} = K_3 \times \widetilde{G}$
with $V(\widetilde{H}) = V(H)$ and
\[|E(H) \Delta E(\widetilde{H})| \leq O(\epsilon|E(H)|).\] 

We prove the following algorithmic results in Section \ref{sec: proof-alg}.
Our first theorem holds for any $G$.

\begin{theorem}
\label{thm:main}
Assume $\eps = \Omega(|V(H)|/|E(H)|)$. 
Let $H$ be $\epsilon$-near $K_3 \times G$.  Then, there is an
algorithm running in time $O(n^6)$ that constructs a tensor
$\widetilde{H} \cong K_3 \times \widetilde{G}$ with $V(\widetilde{H})
= V(H)$ achieving the $\ell_1$ reconstruction goal.
\end{theorem}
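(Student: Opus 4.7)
The plan is an enumerate-and-propagate strategy exploiting the rigidity of $K_3 \times G$. Since $|E'| \leq \epsilon|E(P)|$, the original $P$ itself already satisfies $|E(H) \Delta E(P)| = O(\epsilon|E(H)|)$, so it suffices to algorithmically find \emph{any} tensor $\widetilde{H} = K_3 \times \widetilde{G}$ whose symmetric difference with $H$ is on that order. The key combinatorial fact driving the algorithm is that in $P$, every vertex outside a fiber $F_g = \{(1,g),(2,g),(3,g)\}$ is adjacent to exactly two of its three vertices, and which pair it is adjacent to determines the external vertex's $K_3$-label (e.g., adjacent to $(1,g),(2,g)$ but not $(3,g)$ forces label $3$).

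The algorithm enumerates all $\binom{n}{3} = O(n^3)$ triples $\{a,b,c\} \subseteq V(H)$ as candidate seed fibers, with a fixed labeling $(1,g_0),(2,g_0),(3,g_0)$. For each seed it builds a triple partition of $V(H)$ by BFS: neighbors of already-labeled vertices receive tentative $K_3$-labels via the ``adjacent-to-which-pair'' rule above, and vertices sharing large common-neighborhood overlap (cross-referenced against multiple already-labeled fibers) are grouped across different $K_3$-labels into fibers; iteration outward completes a partition $\mathcal{T}$. Given $\mathcal{T}$, construct $\widetilde{G}$ by including $TT'$ iff at least $4$ of the $6$ possible tensor-edges between $T$ and $T'$ actually appear in $H$, and set $\widetilde{H} = K_3 \times \widetilde{G}$ using the computed labels. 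Finally, output the seed minimizing $|E(H) \Delta E(\widetilde{H})|$. The total runtime is $O(n^3)$ seeds times $O(n^3)$ per-seed work, matching the claimed $O(n^6)$.

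For the error bound, consider the seed corresponding to a true fiber $F_{g_0}$ of $P$. In the noiseless case the propagation exactly recovers $P$'s partition; under $\epsilon$-fraction deletions, each missing edge contributes only $O(1)$ to $|E(H) \Delta E(\widetilde{H})|$, either by causing a single vertex to be mis-propagated into a neighboring (but structurally similar) fiber, or by flipping a single tensor-edge in $\widetilde{H}$ relative to $H$. Summing over all of $E'$ gives total error $O(|E'|) = O(\epsilon|E(H)|)$, which the algorithm inherits by selecting the best-scoring seed. The hypothesis $\epsilon = \Omega(|V(H)|/|E(H)|)$ provides the $\Omega(n)$ slack needed to absorb $O(n)$-size boundary effects: vertices whose labels remain ambiguous can be placed into arbitrary triples and charged against this slack.

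The hard part will be designing the propagation rule so that it is robust to adversarially chosen $E'$. An adversary can delete edges in $E'$ so that (a) many external vertices are adjacent to only one of $\{a,b,c\}$, leaving their $K_3$-label underdetermined, or (b) vertices from genuinely distinct $G$-fibers become indistinguishable via common-neighborhood overlap (for instance, by mimicking $G$-twins). The analysis must specify a tie-breaking scheme --- e.g., majority vote against several reference fibers together with a consistent arbitrary fallback --- and then prove that each unresolvable local ambiguity can be charged to $O(1)$ deleted edges, so that the cumulative error is bounded by $O(\epsilon|E(H)|)$ and matches the target $\ell_1$ reconstruction goal.
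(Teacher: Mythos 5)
Your proposal takes a genuinely different route — seed-and-propagate BFS rather than the paper's global candidate-edge and triangle-compatibility construction — but it has several gaps, one of which is fundamental.

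The deepest issue is that your fiber-grouping rule (``vertices sharing large common-neighborhood overlap... are grouped... into fibers'') cannot distinguish vertices from the same true fiber from vertices that merely share a $K_3$-label. In $K_3 \times G$, if $u = (t,g)$ and $v = (t',g')$ then $|I_P(u,v)| = |I_{K_3}(t,t')| \cdot |I_G(g,g')|$; since $|I_{K_3}(t,t)| = 2$ but $|I_{K_3}(t,t')| = 1$ for $t \neq t'$, a ``monochrome'' pair with $|I_G(g,g')| \approx \deg_G/2$ produces the \emph{same} common-neighborhood size (roughly half the degree) as a quasi-core pair with $|I_G(g,g')| \approx \deg_G$. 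This is exactly the ambiguity isolated in the paper's Proposition~\ref{prop:C-edges}, and the paper resolves it not by thresholding overlaps but by passing to the triangle graph $\mathcal{T}(C)$ and using the six-edge compatibility pattern ($K_3 \times K_2$) to topologically separate quasi-core from monochrome triangles (Lemmas~\ref{lem:tri-types} and~\ref{lem:components}). Without an analogous mechanism, an adversary can design $E'$ so that your propagation builds monochrome ``fibers,'' and the resulting $\widetilde{H}$ will fail the $\ell_1$ bound by a constant factor.

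Two further gaps: (a) Your error accounting claims each deleted edge contributes $O(1)$ to $|E(H) \Delta E(\widetilde{H})|$, but a single misplacement of a vertex $v$ into the wrong fiber changes $\Theta(|\Gamma_H(v)|)$ edges of $\widetilde{H}$, not $O(1)$. The paper's bound comes from the max-min matching guarantee (Lemma~\ref{lem:quasi-core-match}), which ensures every matched triple is close to a core triple, so errors are charged to the \emph{disjunction} $\Delta(g_u,g_v,g_w)$ which is $O(\eps\deg)$ per triple; your proposal has no analog. (b) A single BFS seed cannot reach vertices lying across a sparse cut in $G$. If $G$ has several components (or near-components), the core structure of $H$ decomposes into multiple core components of $\mathcal{T}(C)$, and a BFS from one seed leaves a constant fraction of $V(H)$ unlabeled — far beyond the $O(n)$ slack from $\eps = \Omega(|V(H)|/|E(H)|)$. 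The paper handles this by looping over components of $\mathcal{T}(C)$, checking cut sparsity (Proposition~\ref{prop:core-cut}), and taking a disjoint union of factorizations. You would need to replicate that recursion, at which point the seed-enumeration framing adds nothing. Your honest acknowledgment that ``the hard part will be designing the propagation rule so that it is robust'' is exactly where the work lies — the robustness issues above are not tie-breaking details but the central technical content of the theorem.
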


\hl{Our second theorem holds when $G$ is an expander. }For any vertex $v
\in V(G)$, let $\Gamma_G(v) = \{u : (u,v) \in G\}$ denote the neighbors of $v$.
We will call $G$ an \emph{$\alpha$-edge-expander} if for every $S \subseteq
G$, the following holds
\[
|E(G) \cap (S \times \bar{S})| \ge \alpha \min\left(\sum_{v \in S}
  |\Gamma_G(v)|, \sum_{v \in \bar{S}} |\Gamma_G(v)|\right).
\]

\begin{theorem}
  \label{thm:no-sparse-cuts}
  Fix $\eps < 1/40$.
  Let $H$ be $\epsilon$-near $K_3 \times G$, where graph $G$ is a $3\epsilon$-edge-expander.
  Then, there is an algorithm running in time $\textsf{poly}(n)$ that
  extracts a valid 3-coloring of $H$.
\end{theorem}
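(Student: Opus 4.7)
The plan is to run Theorem \ref{thm:main} as a black box and then use the expansion of $G$ to show that the fiber partition it produces matches the true fiber partition of $V(H)$, which is all that is needed to extract a proper 3-coloring. Concretely, I would first invoke the $O(n^6)$ algorithm on $H$ to obtain $\widetilde H \cong K_3 \times \widetilde G$ with $V(\widetilde H) = V(H)$ and $|E(H) \Delta E(\widetilde H)| = O(\epsilon|E(H)|)$. The factorization automatically partitions $V(H)$ into fibers of size $3$ via the projection $V(\widetilde H) \to V(\widetilde G)$ and assigns a $K_3$-coordinate $c(v) \in \{1,2,3\}$ to each vertex; I define the candidate coloring $\chi(v) := c(v)$. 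Because every edge of $K_3 \times G$ joins vertices with distinct $K_3$-coordinates, it suffices to show that the fiber partition produced by the factorization coincides with the ``true'' fiber partition of $V(H)$ coming from the projection onto $V(G)$; the specific edge set of $\widetilde G$ is irrelevant to the coloring.

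To show the fiber partitions agree, call a $G$-vertex $g$ \emph{bad} if the three vertices of the true fiber above $g$ are not all mapped into a common $\widetilde G$-fiber, and let $S \subseteq V(G)$ be the bad set. The heart of the argument is a charging scheme: splitting the true fiber above $g$ across two or more $\widetilde G$-fibers forces a constant fraction of the $6\deg_G(g)$ cross-edges at $g$ to lie in $E(H) \Delta E(\widetilde H)$, since in a genuine tensor $K_3 \times \widetilde G$ the bipartite pattern between any two fibers is completely rigid (either empty or the canonical $6$-cycle). Summing over $S$ yields a lower bound of order $\sum_{g \in S} \deg_G(g)$ on edit edges chargeable to $S$, while the vertex-normalized $\epsilon$-near hypothesis on $H$ together with the $3\epsilon$-edge-expansion of $G$ provides an incompatible upper bound as soon as $\epsilon < 1/40$, forcing $S = \emptyset$.

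Once the partitions coincide, $\chi$ is automatically proper on every edge of $K_3 \times G$ and hence on $H$, and the algorithm returns $\chi$ in total time dominated by the $O(n^6)$ factorization. The main obstacle I anticipate is nailing down the charging argument so that the numerics line up with $3\epsilon$-edge-expansion (rather than, say, $\Omega(1)$-expansion): one must quantify carefully how many symmetric-difference edges a single bad fiber is forced to incur, independently of \emph{how} it is misaligned, and translate the vertex-local $\epsilon$-near hypothesis on $H$ into a clean boundary budget for cuts $(S,\bar S)$ in $G$. If the constants in the black-box approach prove too tight, a plausible alternative is to replace Theorem \ref{thm:main} by an expander-aware propagation algorithm that identifies one safely correct core triangle and spreads the coloring through $G$ via BFS, using expansion to rule out global inconsistencies.
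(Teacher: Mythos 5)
Your proposal has a genuine gap in the central claim that matching fiber partitions suffices to make $\chi$ a proper coloring. Even if the $\widetilde G$-fibers perfectly coincide with the true fibers, the $K_3$-coordinate $c(v)$ assigned by the factorization is only determined up to a per-fiber permutation of $\{a,b,c\}$, and these permutations are pinned down relative to each other only along edges of $\widetilde H$, not $H$. In particular, if $\widetilde G$ is disconnected (which Theorem~\ref{thm:main} permits: the algorithm explicitly outputs a disjoint union of factorizations, one per component $U^{(i)}$), the relative permutation between components is arbitrary, so an edge of $H$ crossing two $\widetilde G$-components can have $\chi(u)=\chi(v)$. Your explicit assertion that ``the specific edge set of $\widetilde G$ is irrelevant to the coloring'' is exactly where this goes wrong: the number of $H$-edges on which $\chi$ fails is bounded only by $|E(H)\setminus E(\widetilde H)| \le O(\eps |E(H)|)$, which is generically nonzero, so you get an approximately proper coloring, not a proper one. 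You would need a second expansion argument to show that all fibers share one global permutation, which amounts to redoing the core-component argument and abandons the black-box use of Theorem~\ref{thm:main}.

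The charging argument against ``bad'' fibers is also not sound as stated. When $G$ is a strong expander such as $K_n$, all vertices of $G$ are pairwise $\eps$-confusable (Definition~\ref{def: confusable}), so the matched triples produced by Lemma~\ref{lem:quasi-core-match} are quasi-core but generally not core. The factorization can therefore split a true fiber across several $\widetilde G$-fibers that represent $\eps$-confusable vertices, and because those neighborhoods nearly coincide, the induced symmetric-difference charge is only $O(\eps)$ per vertex --- it is not the ``constant fraction of the $6\deg_G(g)$ cross-edges'' your scheme needs. So $S=\emptyset$ does not follow from expansion, and indeed is false for highly confusable $G$. The proper notion that expansion controls is not ``fiber agrees with the truth'' but ``there is a unique core component of $\mathcal T(C)$.'' That is precisely the route taken in the paper: Proposition~\ref{prop:core-cut} shows the cut around any core component is at most $\frac{5\eps}{1-\eps}$-sparse, the $3\eps$-edge-expansion forces any cut to be at least $\left(9-\frac{1}{1-\eps}\right)\eps$-dense, and these two bounds are incompatible for $\eps<1/40$, leaving a single core component that Proposition~\ref{prop:core-color} colors by propagation. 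Your final ``plausible alternative'' --- identify one safely correct core triangle and spread the coloring by BFS, using expansion to rule out inconsistencies --- is essentially the paper's proof, and is the approach you should develop instead of the black-box plan.
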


As we shall see, this theorem is possible as the expansion of $G$
allows for \change{the} partial three-coloring found in Theorem~\ref{thm:main} to
be converted  into a globally consistent three-coloring. We can replace
the condition in Theorem~\ref{thm:no-sparse-cuts} that $G$ is a
$3\epsilon$-edge-expander with a small-set expansion condition on
$G$. See Section~\ref{sec:ext} for more details.

We complement our algorithmic results by showing that 3-coloring $H$ is hard for general $G$.
\begin{theorem}\label{thm:no-3-col}
  Given as input \change{a} graph $\widetilde{H}$ which is $\eps$-near
  $K_3 \times G$ for some $G$, it is \textsf{NP}-hard to find a
  $3$-coloring of $\widetilde{H}$.
\end{theorem}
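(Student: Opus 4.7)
The plan is a polynomial-time reduction from the \textsf{NP}-hard 3-coloring problem on general graphs. Given an input graph $G_0 = (V_0, E_0)$, construct a factor graph $G$ by replacing each vertex $v \in V_0$ with a triangle $T_v = \{v^{(1)}, v^{(2)}, v^{(3)}\}$ on three fresh vertices, and for each edge $(u,v) \in E_0$ adding a single linking edge in $G$ between $u^{(1)} \in T_u$ and $v^{(1)} \in T_v$. Let $P := K_3 \times G$, and let $\widetilde{H}$ be $P$ minus a carefully chosen set $E'$ of edge deletions with per-vertex fraction at most $\eps$.

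The structural heart of the argument is to analyze 3-colorings of each block $K_3 \times T_v \cong K_3 \times K_3$. Its maximal independent sets are exactly the three ``rows'' $\{i\} \times T_v$ and three ``columns'' $\{1,2,3\} \times \{v^{(j)}\}$, so every 3-coloring is, up to color permutation, either a \emph{row} coloring $c(i,v^{(j)}) = \sigma(i)$ or a \emph{column} coloring $c(i,v^{(j)}) = \tau_v(j)$. A short case analysis on the six cross-edges induced in $P$ by each linking edge $(u^{(1)}, v^{(1)})$ shows (a) a row-mode block adjacent to a column-mode block is infeasible (the column side would be forced to avoid all three colors), and (b) two adjacent column-mode blocks impose exactly the single constraint $\tau_u(1) \ne \tau_v(1)$, which is precisely the 3-coloring constraint for the original edge $(u,v)$. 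Consequently the 3-colorings of $P$ split into the ``all-row'' family (trivial, exists unconditionally) and the ``all-column'' family, the latter in natural bijection with 3-colorings of $G_0$ via the decoding $v \mapsto \tau_v(1)$.

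The critical step is to choose $E'$ so as to eliminate the all-row family on $\widetilde{H}$ while preserving at least one all-column coloring whenever $G_0$ is 3-colorable. I would accomplish this by attaching a small \emph{anchor} sub-gadget (for example, an additional triangle $T^\ast$ linked by a handful of edges to one designated block $T_v$) chosen so that every 3-coloring of $\widetilde{H}$ forces at least one block into column mode; the propagation rule (a) then cascades column mode throughout $G$. The $\eps$-deletion budget is easily respected since only $O(1)$ extra deletions per block are needed while each vertex has $\Omega(\deg_G(v))$ incident edges in $P$. The reduction then concludes: any 3-coloring of $\widetilde{H}$ must be all-column, so reading off $\tau_v(1)$ yields a proper 3-coloring of $G_0$ in polynomial time, proving \textsf{NP}-hardness.

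The main obstacle is designing the anchor. Since all-row colorings of $P$ are proper on every subgraph of $P$ under $P$'s natural labeling, the anchor must rule out row-mode tripartitions of $\widetilde{H}$ \emph{globally}, including under alternative labelings of $V(\widetilde{H})$ as $\{1,2,3\} \times V(G')$ that an algorithm might implicitly adopt (and which remain admissible because the $\eps$-near promise only fixes some factor $G$, not a specific labeling). At the same time the anchor must not overconstrain $\widetilde{H}$ and eliminate the column-mode colorings that encode 3-colorings of $G_0$. Striking this balance within the per-vertex $\eps$-deletion budget is the delicate technical part of the proof.
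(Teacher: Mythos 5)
Your construction has a fatal structural flaw that no ``anchor'' gadget can repair. You propose building $\widetilde{H}$ by starting from a \emph{fixed} tensor $P = K_3 \times G$ and only deleting edges. But any subgraph of $K_3 \times G$ is automatically $3$-colorable via the projection onto the $K_3$ factor (your ``row coloring''), regardless of what $E'$ you remove. So the soundness direction --- mapping non-$3$-colorable $G_0$ to non-$3$-colorable $\widetilde{H}$ --- is impossible under this framework: every output of your reduction is trivially $3$-colorable. You sense this tension when you note that ``all-row colorings of $P$ are proper on every subgraph,'' but you treat it as a design obstacle for the anchor rather than recognizing that it rules out the entire delete-from-a-known-tensor plan. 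Adding the anchor to $G$ does not help (the row coloring of the enlarged tensor still survives), and adding extra edges to $\widetilde{H}$ beyond those of $P$ contradicts your own setup.

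The paper avoids this trap by reversing the logical order. It constructs a gadget graph $H$ directly, with vertex blocks $\{v\} \times [3]^3 \cong K_3^{\times 3}$ and inter-block edges encoding the constraints of an intermediate problem (``$3$-coloring with equality,'' which pads each vertex with equality constraints so that coloring constraints are a tiny $O(\eps)$ fraction of its degree). This $H$ is \emph{not} defined as a deletion of a fixed tensor; rather, in the completeness direction the paper shows that \emph{given a satisfying assignment} $\psi$ one can build a graph $G_\psi$ and an embedding under which $H \subseteq K_3 \times G_\psi$ with at most an $\eps$-fraction of edges missing per vertex. Crucially, $G_\psi$ depends on the solution, so no tensor factorization is available to an adversary in the unsatisfiable case. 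In the soundness direction, the Greenwell--Lov\'asz theorem (any $3$-coloring of $K_3^{\times L}$ is a coordinate projection) forces every $3$-coloring of $H$ to decode to a satisfying assignment, so $H$ is not $3$-colorable when the instance is unsatisfiable. Your row/column case analysis of $K_3 \times K_3$ is essentially this theorem in the $L=2$ case and is fine as far as it goes, but the larger reduction built around it cannot work. A second gap: you have not addressed the per-vertex $\eps$-budget quantitatively; in your construction the vertices $u^{(2)}, u^{(3)}$ give rise to degree-$4$ vertices in $P$, where deleting even a single edge already exceeds any small $\eps$ fraction. The paper handles this with the $\eps$-loose padding step, which your proposal omits entirely.
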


Our algorithms are combinatorial, although they seem to draw on some
topological properties of graph tensors (see the technical overview).

\subsection{Related work}\label{sec: related-work}

As our work bridges the rich theory of graph factoring algorithms
with the world of approximate graph coloring, there are a number of
prior works which relate to our investigation.

\paragraph{Approximate coloring algorithms} We refer the reader to the introduction in \cite{kawarabayashi2017coloring} 
for a detailed recap of 3-coloring progress over the past several decades.
A simple algorithm by Wigderson \cite{wigderson1983improving} uses $O(\sqrt{n})$ colors
and has remained an important subroutine in coloring graphs with high degree \cite{karger1998approximate}.
\change{The barrier of $O(\sqrt{n})$ colors was broken by Berger and Rompel \cite{berger1990better}.
Blum introduced some intricate combinatorial techniques
which inspired much future coloring work, }
in particular those of Kawarabayashi and Thorup \cite{kawarabayashi2012combinatorial,kawarabayashi2017coloring}.
Since the seminal result of Karger, Motwani, and Sudan \cite{karger1998approximate}, 
algorithmic results in approximate graph coloring have focused on SDP based
algorithms \cite{AC06, Chlamtac07,arora2011new,kawarabayashi2017coloring}.
Several integrality gap instances for the original 3-coloring SDP formulation
presented by Karger, Motwani, and Sudan~\cite{karger1998approximate}
have been found by Karger, Motwani, and Sudan
\cite{karger1998approximate}; Frankl et. al~\cite{frankl-intgap,
arora2011new} ; and Feige et. al~\cite{feige-intgap}.  These graphs
have valid solutions to the SDP, but have chromatic number at least
$n^{c}$, for small constant $c>.01$ and $n$ the size of the vertex
set.  On the other hand, if a 3-colorable graph has \emph{threshold rank} $D$, 
i.e. once its eigenvalues are scaled to be in $[-1,1]$,
at most $D$ of them are less than $-1/16$, then one can color the graph in
$O(\log n)$ colors in time $n^{O(D)}$ \cite{arora2011new}.

\paragraph{Hardness of approximate graph coloring} The hardness results for approximate graph coloring are quite far from
the algorithmic results. For a 3-colorable graph it is known that it
is NP-hard to color with 5 colors~\cite{BBKO21}, beating a
long-standing record of 4 colors~\cite{KLS93,GK04,BG16}. For
$k$-colorable graphs, it is NP-hard to color with
$\binom{k}{\lfloor k/2\rfloor}$ colors~\cite{WZ20}. Under a variety of
conditional assumptions, it is hard to color a 3-colorable graph with
superconstant colors (up to roughly $\operatorname{polylog} n$),
see~\cite{DMR09,DS10,WZ20,GS20}. 

\paragraph{Graph factoring algorithms} Given a graph product, one can
efficiently factor a graph with respect to the product into prime
graphs, i.e. graphs that cannot be further non-trivially factored
according to the product.  For the Cartesian product, any finite,
simple, connected graph can be factored in polynomial time
~\cite{FeigenbaumHS85, winkler1987factoring, Feder92,
  imrich2007recognizing}.  Moreover, Imrich and
Peterin~\cite{imrich2007recognizing} present an algorithm that
performs this factorization in \hl{time linear in the number of}
edges.  Similarly for the strong product, whose graph unions the edges
of the Cartesian product and the tensor product graphs, a polynomial
time factorization was found by Feigenbaum and Sch{\"{a}}ffer
\cite{FeigenbaumS92-strong} for finite, simple, connected graphs.  To
factor a graph with respect to the tensor product, one can combine
algorithms of Imrich and Winkler \cite{imrich1998factoring,
  winkler1987factoring}; we detail these algorithms--and how they can
be combined to decompose a tensor product--at the end of Section
\ref{Imrich-details}.
For both the Cartesian product and the strong product, prime
factorizations (i.e., a decomposition into irreducible factors) are
unique for finite, connected, simple graphs
~\cite{hammack2011handbook}.  In the case of the tensor product, we
are required to make the additional assumption that the graph is
non-bipartite, as without it, a unique factorization with respect to the tensor product does not exist.
With this additional assumption, the unique
factorization can also be found in polynomial time
\cite{imrich1998factoring}. 

\paragraph{Approximate graph products} The theory of approximate graph products has
also been studied before, but in different settings and with different
goals from us.  For a graph $H$ that is close to a product graph $H'$,
i.e., it takes a small number of edge deletions or insertions to
transform $H$ to the product graph $H'$, previous works study how to
find $H'$ \cite{ZmazekZ01,ZmazekZ07-strong, HellmuthIKS09,
  HIKS09-strong, HellmuthIK13-star}.  Feigenbaum and
Haddad~\cite{FeigenbaumH89-factorable} showed that for the Cartesian
product, obtaining such an $H'$ with the fewest possible edge
insertions or the fewest possible edge deletions is \textsf{NP}-hard.
Overall, the Cartesian product is the most well studied graph product,
and both approximate Cartesian product and strong product graphs have
connections to theoretical biology, as they model evolutionary
relationships of observable characteristics
\cite{HellmuthIKS09,HIKS09-strong}.  To the best of our knowledge,
prior to our work, approximate graph products have only been studied
for the Cartesian product and the strong product, and not with respect
to the tensor product.  The closest problem is the Nearest Kronecker
Product problem, which given $A \in \mathbb{R}^{m \times n}$ seeks to
find $B \in \mathbb{R}^{m_1 \times n_1}$ and
$C \in \mathbb{R}^{m_2 \times n_2}$ such that $||A - B \times C||_F$
is minimized, for $B \times C $ the Kronecker product of $B$ and $C$
and $m = m_1 \cdot m_2$, $n = n_1 \cdot n_2$
\cite{Kronecker-approx,Kronecker-product}.  Note that the relation
between this problem and the approximate tensor product problem lies
in the fact that the adjacency matrix of the tensor product of two
graphs is the Kronecker product of the underlying adjacency matrices.
Another similar problem is the closest separable state problem in
Quantum systems, which approximates the entanglement of a system by
measuring is how far it is from a composite of separable states
\cite{modi2010unified, wiesniak2020distance}.
More on product graphs, their factorizations, and approximate graph
products can be found in the book by Hammack, Imrich, and
Klav{\v{z}}ar \cite{hammack2011handbook}.

\paragraph{Learning theory} A related line of work to ours is tensor
decomposition in the learning theory community.  Tensors (not
necessarily tensor graphs, just tensors) represent higher order
information from data.  A common goal is uncover the latent (hidden)
variables underlying some data in order to understand it in a lower
dimensional form \cite{moitra2014algorithmic}.  One popular way to
achieve this is the CP (CANDECOMP/ PARAFAC) decomposition, which
writes a tensor as a sum of rank 1 tensors
\cite{roughgarden2021beyond, janzamin2019spectral}.  Other related
decompositions are PCA, Tensor Robust PCA, and the Tucker
decomposition \cite{roughgarden2021beyond, janzamin2019spectral,
  lu2016tensor}. A similar research topic is reconstructing a
partially observed tensor (e.g.,~\cite{zhang2018recovery}).

\subsection{Technical overview}

We now present overviews for the proofs of Theorems \ref{thm:main}, 
\ref{thm:no-sparse-cuts}, and \ref{thm:no-3-col}.
The full proofs for Theorems \ref{thm:main} and \ref{thm:no-sparse-cuts} are in Section \ref{sec: proof-alg}, 
and the full proof for Theorem \ref{thm:no-3-col} is in Section
\ref{sec: hardness}.

\subsubsection{Overview of reconstructing a tensor graph}\label{Imrich-details}

To give intuition for our results, we first summarize Imrich's
algorithm for efficiently factoring a tensor $P = F \times G$. The
first goal is to find a graph $S$ on the same vertex set as $P$ that
is isomorphic to a Cartesian product $S = F' \square G'$. The procedure
Imrich uses to construct $S$ was inspired by an algorithm of
Feigenbaum and Sch\"affer \cite{FeigenbaumS92-strong} for the strong
product.

The key to constructing $S$ is the following observation on
intersections of neighborhoods in tensor graphs.  For any vertices
$u, v$ in $P$, let $I_P(u,v) = \Gamma_P(u) \cap \Gamma_P(v)$. Since
$P$ is a tensor graph, one can show that
$I_P(u,v) = I_F(u_f,v_f) \times I_G(u_g,v_g)$, where $u_f$ is the
projection of $u$ onto $V(F)$, etc. In particular, if $I_P(u,v)$ is
maximal (as a set) among $v \neq u$, then it must be that either
$u_f = v_f$ or $u_g = v_g$. In particular, adding all such maximal
edges $(u,v)$ to $S$ will keep $S$ consistent with a Cartesian product
$F' \square G'$, where $F'$ is on the same vertices as $F$ and $G'$ is
on the same vertices as $G$. If $S$ is not a connected graph, we
repeat this procedure, where the maximal $I_P(u,v)$'s are found for
$v$'s which are not in the same connected component as
$u$.\footnote{There are other considerations which Imrich carefully
  handles, such as if there is a third vertex $w$ with
  $I_P(u,v) = I_P(u,w)$ and $\Gamma_P(w) \subseteq \Gamma_P(v)$.}

\begin{figure}[h]
  \begin{center}
\includegraphics[width = 11cm]{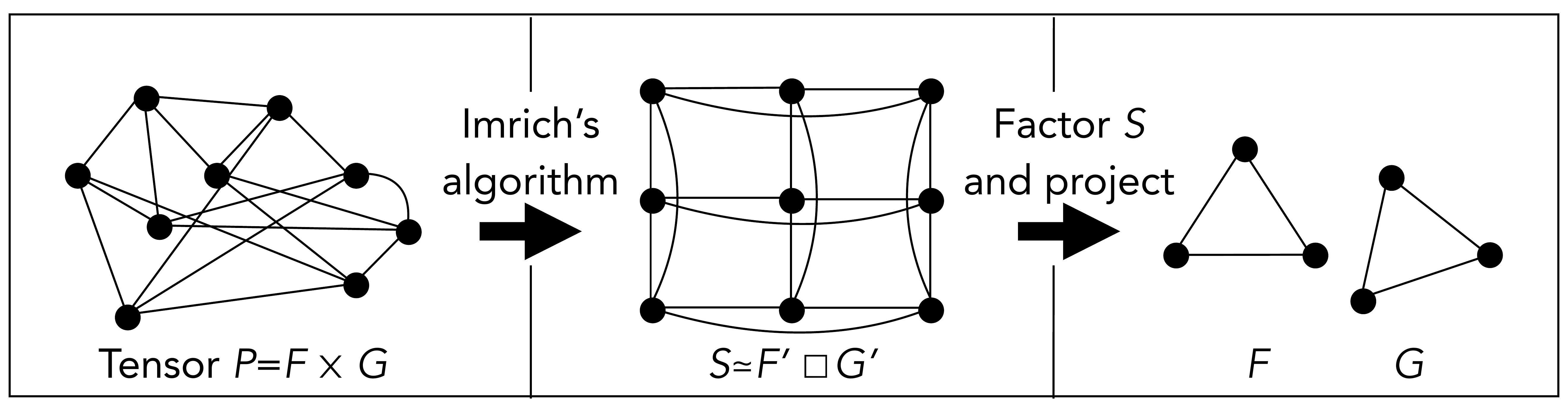}
  \end{center}
  \caption{An illustration of Imrich's algorithm. 
  The algorithm is given an unknown connected, nonbipartite tensor $P$. 
Imrich's algorithm builds $S\cong F'  \square G'$, with $V(G) = V(G')$ and $V(F) = V(F')$. 
Then, the algorithm factors $S$ to find $F'$ and $G'$, and projects $P$ onto $F'$ and $G'$ to find $F$ and $G$.}
  \label{fig: Imrich}
\end{figure}

Once $S$ is constructed, factoring the Cartesian product can be done
with a variety of algorithms, including the one of
Winkler~\cite{winkler1987factoring}. Winkler's algorithm is
rather elegant. Let $d_S(u,v)$, the length of the shortest path
between $u$ and $v$ in $S$. Define $(u_1, v_1), (u_2, v_2) \in S$ to
be \emph{similar} if $d_S(u_1, u_2) + d_S(v_1, v_2) \neq d_S(v_1, u_2) + d_S(u_1, v_2).$
While this similarity relation may not be transitive, we can perform
a depth first search to find all components that are
\emph{transitively similar}. The components found from Winkler's algorithm either correspond precisely
to the Cartesian product factors $F'$ and $G'$
of $S$, with $V(F') = V(F)$ and $V(G') = V(G)$,
or $S$ (and thus $P$) can be decomposed into
more than two factors. From this
factorization, we can extract $F$ and $G$ from $P$ with the following
projection trick. Given $u_f,v_f \in V(F')$ and $u_g,v_g \in V(G')$
with $((u_f,u_g),(v_f,v_g)) \in E(P)$, we add $(u_f,v_f)$ to $E(F)$ and
$(u_g,v_g)$ to $E(G)$. See Figure \ref{fig: Imrich} for an illustration. 
This completes the factoring algorithm.

\subsubsection{Reconstruction Algorithm}

To approximately factor a graph $H$, we need to approximate for each
vertex of $v$ the $t \in K_3$ and $g \in G$ corresponding to $v$,
which we call the ``color class'' and ``$G$ class'' of $v$
respectively. Since $G$ is unknown, we first try to group the vertices
of $H$ into triangles, which estimate $K_3 \times \{g\}$ for some $g \in G$, 
then use the edges between these triangles to estimate the edges of $G$.

\paragraph{Candidate edge graph} For Imrich, it sufficed to connect
pairs of nodes in the surrogate Cartesian product graph whose
neighborhoods' intersection in $P$ satisfied some maximality
criteria. To make this maximality criteria more robust, we define what
is known as the \emph{candidate edge graph $C$} on $V(H)$. Informally,
two vertices $u, v \in V(H)$ form an edge of $C$ if the intersection
of their neighborhoods has size approximately half their degrees (see
Section~\ref{sec: graphs}). Note that the edges of $C$ and the edges
of $H$ are qualitatively quite different (they can even be
disjoint). However, a key property of $C$ is that for every vertex
$g \in G$, the three vertices of $H$ corresponding to
$K_3 \times \{g\}$ form a triangle in $C$. We call this a \emph{core}
triangle.

We show the triangles of $C$, which we call $\mathcal T(C)$, have a
very particular form.  Such a triple is one of two types: it is either
(1) ``close'' to some core triangle (which we call \emph{quasi-core})
or (2) contains vertices whose color classes are all the same and
whose $G$ classes have very structured pairwise intersection (which we
call \emph{monochrome}, see Lemma \ref{lem:tri-types}). Because the
color classes of $H$ are hidden information, we cannot directly
determine if any triangle of $C$ is quasi-core or monochrome.

\paragraph{Triangle components} Instead, we separate these two types
of triangles topologically. We say that two triangles of
$\mathcal T(C)$ are \emph{compatible} if the subgraph of $H$ on the
six vertices of these triangles is isomorphic to $K_3 \times K_2$. In
other words, it is consistent that these two triangles correspond to
$K_3 \times \{g\}$ and $K_3 \times \{g'\}$ for some
$(g, g') \in E(H)$. This compatible relation divides $\mathcal{T}(C)$
into connected components; specifically, one can build a graph with vertex set $\mathcal{T}(C)$, 
and where two triangles in $\mathcal{T}(C)$ are connected exactly when they are compatible. 
Perhaps the most crucial (although easy to
prove) technical lemma of this paper is that each component of
$\mathcal T(C)$ consists only of quasi-core triangles (which we call a
\emph{core component}) or only of monochrome triangles (see
Lemma~\ref{lem:components}).

\paragraph{Coloring algorithm} Assume we pick an arbitrary component
$Y_j$ of $\mathcal T(C)$. Let $U_j$ be the vertices of $H$ covered by
$Y_j$. By guessing the colors of one of the triangles of $Y_j$ and
then performing a depth-first search, we can efficiently color all the
vertices of $H[U_j]$ (i.e., the subgraph of $H$ induced by $U_j$); or,
if it fails, we can deduce that $Y_j$ is not a core component (see
Proposition~\ref{prop:core-color}). If $G$ is a sufficiently good
expander, then we can show that $H$ has enough edges to force there to
be only a single core component (although there can be a large number
of monochrome components). Thus, by looping over all possible $Y_j$'s
(of which their are clearly at most $O(n^3)$), our coloring algorithm
will succeed on one of them, proving Theorem~\ref{thm:no-sparse-cuts}.

\paragraph{Matching algorithm to build triangles} By finding the
valid 3-coloring of $H[U_j]$, we can augment this by finding an
approximate tensor factorization of $H[U_j]$. This is the heart of
Lemma~\ref{lem:quasi-core-match}. The key idea is we take the three
color classes of $H[U_j]$, which we call
$\mathcal A, \mathcal B, \mathcal C$, and perform a tripartite
matching algorithm on them.  More formally, one can build a weighted
tripartite graph on
$(\mathcal{A}\dot{\cup }\mathcal{B}\dot{\cup } \mathcal{C}, E)$, where
the bipartite subgraphs between $\mathcal{A}$ and $\mathcal{B}$, as
well as $\mathcal{B}$ and $\mathcal{C}$, are complete, with each edge
having weight corresponding to their pairwise intersection, normalized
by the degrees of the vertices. By finding a max-min matching,\footnote{\change{Also known as a bottleneck matching.}} that is
a matching which maximizes the weight of the minimum weight edge
(which can be done in polynomial time), on this tripartite graph, we
find a chosen set of triples on the vertices in $Y_j$. In particular,
assuming $Y_j$ is a core component, each of the triples found will be
approximately quasi-core triangles. We can then build a tensor graph
corresponding to $H[U_j]$ by having each triple found by the matching
correspond to a vertex of the reconstructed graph $\widetilde{G}$, and
have each edge of $\widetilde{G}$ correspond to any pair of triples
sharing at least one edge in $H$. Analyzing this factorization
accurately requires showing that the every error in the reconstruction
can be ``charged'' to a mismatch in the neighborhoods of the matched
triples. The max-min guarantee implies that the number of such
mismatches is of $O(\eps)$ edge density, allowing us to achieve the
$\ell_1$ reconstruction goal for this subgraph.

\paragraph{Finishing the factorization} To factor the whole graph, we recursively apply
Lemma~\ref{lem:quasi-core-match}. In particular, we loop through the
triangle components of $\mathcal T(C)$. If the lemma successfully
factors the subgraph of $H$ induced by that component \emph{and} the
cut between that subgraph and the rest of $H$ has sufficiently few
edges, we recurse on the remainder of $H$.  Our tensor graph is then
the disjoint union of the tensors products found for each subgraph. By
combining the reconstruction guarantees for each connected component,
we have the factorization achieves the $\ell_1$ reconstruction goal,
proving Theorem \ref{thm:main}.

Note that we cannot easily achieve a 3-coloring through such a
recursive algorithm, as even though each component is correctly
3-colored, the sparse edges between the components make finding a
globally consistent coloring intractable. We formalize this hardness
in Theorem~\ref{thm:no-3-col}.

\subsubsection{Hardness}

The proof of Theorem~\ref{thm:no-3-col} is ultimately a reduction from
3-coloring, albeit in a roundabout manner. Given an instance $G$ of
the 3-coloring problem, we replace each vertex of $u \in G$ with
\change{27 copies $(u, x)$ for $x \in [3]^3$ corresponding to a copy
  of $K_3 \times K_3 \times K_3$.}  By a result of Greenwell and
Lovasz~\cite{Greenwell1974}, there are three types of 3-colorings of
\change{$K_3 \times K_3 \times K_3$}, arising from the three different
triangles in the tensor product. For each edge
\change{$\{u,v\} \in E(G)$ in the base graph}, we add edges between
their \change{copies} such that any valid $3$-colorings of
\change{these copies} must have different types. One can show that $G$ is
3-colorable if and only if the graph produced by the gadget reduction
is. Many similar gadget reductions have been performed in the hardness
of approximation literature, such as in \cite{DMR09}.

One can show that if $G$ is 3-colorable, then the graph resulting from
this reduction is a subset of $K_3 \times G'$ from some graph
$G'$. However, the reduction may not be $\eps$-near to this tensor.
To circumvent this, \change{instead of reducing from 3-coloring directly, 
we reduce from ``3-coloring with equality''\footnote{\change{This name was coined by a reviewer.}}, 
where each vertex now has many copies which are forced to be equal by equality constraints. 
There are so many copies that less than an $\eps$ fraction of the edges from each vertex correspond to 3-coloring constraints. 
If we repeat the aforementioned reduction, we then obtain a graph which is $\eps$-near to a tensor of the form $K_3 \times G'$.}

One can show that if $G$ is 3-colorable, then this graph is
$\eps$-near a $K_3$ tensor. However, if $G$ is not 3-colorable, then
the resulting graph isn't even \change{3-}colorable. This is enough to
establish Theorem~\ref{thm:no-3-col}, for if we had a polynomial-time
algorithm for 3-coloring these graphs $\eps$-near a $K_3$ tensor, then
we could solve the general 3-coloring problem.

\subsection{Paper outline}

In Section~\ref{sec: prelims}, we define the necessary terms and
concepts needed to prove our algorithmic results. Then, in Section
\ref{sec: proof-alg} we prove the main algorithmic results:
Theorem~\ref{thm:main} and Theorem~\ref{thm:no-sparse-cuts}. Section
\ref{sec: hardness} contains the proof of our main hardness theorem:
Theorem \ref{thm:no-3-col}. We conclude in
Section~\ref{sec:conclusion} with directions for future work\hl{.}

\section{Algorithm Preliminaries}\label{sec: prelims}

In this section, we give the key definitions and other concepts needed
to prove the algorithmic results.

\subsection{Important definitions and propositions}\label{sec: defs}

We let $P= K_3 \times G$ denote the tensor product.
As in the problem statement, the graph $H$ is $\epsilon$-near a triangle tensor, 
as the following definition details.

\begin{definition}
  \change{A g}raph $H$ is \emph{$\epsilon$-near a triangle tensor} if 
  there exists some product $P$ with
  $H \cong (V(P),E(P) \setminus E')$, 
  for $E'$ a set of edges where no vertex in $P$ has more than
  an $\epsilon$ fraction of its incident edges in $E'$.
\end{definition}
Let $X$ be an arbitrary graph.
We denote the neighborhood of a vertex $v$ of $X$ by $\Gamma_X(v)$,
where we specify the graph in which we take the neighborhood in the
subscript. 
Additionally, for every subset $S \subseteq V(X)$, define 
\[
  \vol_X(S) = \sum_{v \in S} |\Gamma_X(v)|.
\]

Recall that Theorem \ref{thm:no-sparse-cuts} considers
graphs whose tensor component $G$ is an $\alpha$-edge-expander. That
is, for every $S \subseteq V(G)$, the following
holds\footnote{See~\cite{chung97} for the spectral implications of
  this definition of an expander.}
\[
|E(G) \cap (S \times \bar{S})| \ge \alpha \min\left(\vol_G(S), \vol_G(\bar{S})\right).
\]

Additionally, in our arguments, we need to argue about vertices of $H$ and $G$ with similar degrees. We choose the following definition
\begin{definition} 
For any graph $X$, vertices $u,v \in X$ with $|\Gamma_X(u)| \geq |\Gamma_X(v)|$ have \emph{$\eps$-similar degree} when 
$|\Gamma_X(u)| - |\Gamma_X(v)|  \leq 2 \epsilon |\Gamma_X(u)|.$  
\end{definition}

\hl{Frequently, we refer to the intersection of two vertices.
Let the set of vertices in the intersection of $u$ and $v$ in graph $X$ be denoted by
$I_X(u,v) = \Gamma_X(u) \cap \Gamma_X(v)$, 
and similarly let the intersection of $u,v,w$ 
in graph $X$ be $I_X(u,v,w) = \Gamma_X(u) \cap \Gamma_X(v) \cap \Gamma_X(w)$.}

\hl{We denote the vertices of $K_3$ as $\{a,b,c\},$ 
and often refer to these as \emph{colors}. 
To refer to an arbitrary color in $K_3$, we use the variable $t \in \{a,b,c\}$.}
For every vertex $v$ in $H$, there is a hidden pair of labels 
associated with $v$ from $P$.
Formally, $v$ corresponds to a node from the tensor $P$, 
whose vertices are tuples $(t,g) \in V(K_3) \times V(G)$.
We call $t$ the unknown color class of $v$ and $g$ the unknown $G$-class of $v$.
We use the same notation as in our informal problem statement,
and say that the function $\psi_{K_3}: V(H) \rightarrow V(K_3) = \{a,b,c\}$ extracts the unknown
color class of $v$,
and the function $\psi_{G}: V(H) \rightarrow V(G)$ extracts the unknown $G$ class of $v$. 
Then, we can define the complete hidden label function for $v \in V(H)$ as 
$\psi(v) = (\psi_{K_3}(v), \psi_G(v))$. 

\begin{definition}
For $v \in H$,  the functions $\psi_{K_3}: V(H) \rightarrow V(K_3) $ and $\psi_{G}: V(H) \rightarrow V(G)$ 
map $v$ to its unknown, underlying \emph{color class} and \emph{$G$-class} of $v$, respectively. 
Together, they give the hidden label
$\psi(v) = (\psi_{K_3}(v), \psi_G(v))$.  
\end{definition}

We refer to the hidden label function's inverse, $\psi^{-1}$, when discussing the unknown location of $(t,g) \in K_3 \times G=P$ 
in the graph $H$.
Instead of writing out $\{\psi(v_1), \psi(v_2),\ldots\}$ for a set $S = \{v_1,v_2,\ldots\}$, 
we use the shorthand $\psi(S)$, with similar notation for
$\psi^{-1}$. Note that $\psi$ is a bijection.

Next, we define a set of triples that will be important for our reconstruction algorithm.
\begin{definition}\label{def: core}
In the graph $H$, the \change{\emph{core triangles}}\footnote{\change{Note that the use of the word ``core" is unrelated to cores as defined by Hell and Ne{\v{s}}et{\v{r}}il \cite{hell1992core}. Also note 
the core triangles are not copies of $K_3$ in $H$.}} are the triples of the form 
\[\psi^{-1}(K_3 \times \{g\}) = \{\psi^{-1}(a,g), \psi^{-1}(b,g), \psi^{-1}(c,g)\}.\]
\end{definition}
One might wish to find the core triangles of all vertices in $G$ in order to prove our reconstruction goal, but such a task is impossible. 
In particular, nodes in $G$ may have such similar neighborhoods that 
it is impossible to distinguish vertices in $H$ with their $G$-classes.
This motivates our next definition.
\begin{definition}\label{def: confusable}
  Vertices $g,g' \in G$ are \emph{$\eps$-confusable} when
$ |I_G(g,g')| > (1-9\epsilon )\max\{|\Gamma_G(g)|,
|\Gamma_G(g')|\}.$ 
\end{definition}

Since the $\eps$-confusable vertices of $G$ make finding the core triangles impossible, we are content to reconstruct quasi-core triangles.
\begin{definition}\label{def: quasi-core}
A \change{set of 3 vertices} $\{v_1,v_2,v_3\} \subseteq H$ is a \emph{quasi-core} triangle
if the color classes of all three vertices are different and the $G$ classes of all three vertices are all $\eps$-confusable with each other, 
i.e. for $\psi_G(v_i)=g_i$, for $i \in [3]$, $g_1,g_2,g_3 $ are all $\eps$-confusable with each other.
\end{definition}
Note that core triangles are quasi-core.
Finding a set of disjoint quasi-core triangles that cover all nodes in $V(H)$ is challenging because of triangles that  
are not quasi-core, but may look similar to quasi-core triangles (see Proposition \ref{prop:C-edges}).
We call such triangles monochrome--as their nodes have the same color class--in order to distinguish them from quasi-core triangles. 
\begin{definition}\label{def: monochrome}
A \change{set of 3 vertices} $\{v_1,v_2,v_3 \}= \psi^{-1}(t \times \{g_1,g_2,g_3\}) \subseteq H$ 
\change{with $|\Gamma_H(v_1)| \geq |\Gamma_H(v_2)|,$ $|\Gamma_H(v_3)|$} is a \emph{monochrome} triangle if
\[
(1-8\epsilon) \cdot \frac{|\Gamma_H(v_1)|}{4}\leq |I_G(g_i,g_j)| \leq (1+9 \epsilon) \cdot \frac{|\Gamma_H(v_1)|}{4},
\] for all $i \neq j$.
\end{definition}

In fact, the only triples of vertices from $H$ that will be relevant to us are the quasi-core 
(which includes core) and monochrome triangles (see Lemma \ref{lem:tri-types}).

The following two propositions will be used very frequently in proving
Theorem \ref{thm:main}.  The first, Proposition~\ref{prop:error},
implies that $ |I_H(u,v)|$ is very close to
$|I_{K_3}(t_1,t_2)| \cdot |I_G(g_1,g_2)|$, and it follows from the
definition of the tensor structure, plus the constraint that no node
has more than an $\epsilon$ fraction of its edges removed to form $H$
from $P$.  Similarly, the second, Proposition~\ref{prop: bound-on-G},
says that for $u \in H$ with $\psi_G(u)=g$, $|\Gamma_G(g)|$ is very
close to $|\Gamma_H(u)|/2$.

\begin{proposition}\label{prop:error} 
  For $u,v \in H$ with $|\Gamma_H(u)|\geq |\Gamma_H(v)|$, let $\psi(u) = (t_1,g_1)$ and $\psi(v) =( t_2, g_2)$. 
  Then for $\epsilon \leq 1/3$,
  \[
    |I_{K_3}(t_1,t_2)| \cdot |I_G(g_1,g_2)| - 3\epsilon |\Gamma_H(u)| \leq |I_H(u,v)| \le |I_{K_3}(t_1,t_2)| \cdot |I_G(g_1,g_2)|.
  \]
\end{proposition}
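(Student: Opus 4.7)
The plan is to handle the two inequalities separately, relying on the tensor structure of $P$ for the upper bound and on the $\epsilon$-near deletion model for the lower bound. Since $H$ is obtained from $P = K_3 \times G$ by removing edges, any common neighbor of $u$ and $v$ in $H$ is also a common neighbor in $P$, so $I_H(u,v) \subseteq I_P(u,v)$. By the definition of the tensor product, a vertex $(t,g)$ is a neighbor of both $u = (t_1, g_1)$ and $v = (t_2, g_2)$ in $P$ if and only if $t \in I_{K_3}(t_1,t_2)$ and $g \in I_G(g_1, g_2)$. Thus $|I_P(u,v)| = |I_{K_3}(t_1,t_2)| \cdot |I_G(g_1, g_2)|$, which immediately yields the upper bound.

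For the lower bound, I would estimate the number of common neighbors lost when passing from $P$ to $H$. Let $E'$ be the deleted edge set; by the $\eps$-near assumption, $|\Gamma_P(u) \setminus \Gamma_H(u)| \le \eps |\Gamma_P(u)|$, and similarly for $v$. Rearranging $|\Gamma_P(u)| \le |\Gamma_H(u)| + \eps |\Gamma_P(u)|$ gives the useful translation
\[
|\Gamma_P(u)| \le \frac{|\Gamma_H(u)|}{1-\eps},
\]
and likewise for $v$.

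Now I would write
\[
|I_P(u,v) \setminus I_H(u,v)| \le |\Gamma_P(u) \setminus \Gamma_H(u)| + |\Gamma_P(v) \setminus \Gamma_H(v)| \le \eps\bigl(|\Gamma_P(u)| + |\Gamma_P(v)|\bigr).
\]
Using $|\Gamma_H(v)| \le |\Gamma_H(u)|$, both $|\Gamma_P(u)|$ and $|\Gamma_P(v)|$ are at most $|\Gamma_H(u)|/(1-\eps)$, so the right-hand side is bounded by $\tfrac{2\eps}{1-\eps} |\Gamma_H(u)|$. Finally, applying $\eps \le 1/3$ gives $\tfrac{2\eps}{1-\eps} \le 3\eps$, which combined with $|I_H(u,v)| \ge |I_P(u,v)| - |I_P(u,v) \setminus I_H(u,v)|$ and the tensor identity for $|I_P(u,v)|$ delivers the stated lower bound.

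There is no real obstacle here; the only subtle point is being careful to bound $|\Gamma_P(v)|$ by $|\Gamma_H(u)|/(1-\eps)$ using the hypothesis $|\Gamma_H(u)| \ge |\Gamma_H(v)|$ before taking $\eps \le 1/3$ to replace the awkward factor $2\eps/(1-\eps)$ by the clean constant $3\eps$. Everything else is direct unpacking of the tensor product definition and the edge-deletion assumption.
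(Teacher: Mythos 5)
Your proof is correct and follows essentially the same route as the paper: use the tensor structure to identify $|I_P(u,v)|$ with the product of intersection sizes, observe that deletions only shrink common neighborhoods to get the upper bound, and for the lower bound charge each lost common neighbor to a deleted edge at $u$ or $v$, translate $|\Gamma_P(\cdot)|$ to $|\Gamma_H(\cdot)|$ via the factor $1/(1-\eps)$, and finish with $\tfrac{2\eps}{1-\eps}\le 3\eps$ for $\eps\le 1/3$. The paper's version writes the same chain of inequalities in a marginally different order (bounding $|\Gamma_H(u)|+|\Gamma_H(v)|$ by $2|\Gamma_H(u)|$ rather than bounding each $|\Gamma_P(\cdot)|$ separately), but the argument is the same.
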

\begin{proof}
From the definition of the tensor structure, we have that
\[
    |I_{K_3}(t_1,t_2)| \cdot |I_G(g_1,g_2)| =|I_P(u,v)|.
\] 
Further, deleting edges can never increase the size of an intersection in the graph, 
so $|I_H(u,v)|\leq |I_P(u,v)|$. 
The upper bound follows by combining the two equations. 
Since no node has more than an $\epsilon$ fraction of the edges \change{that} are deleted from $P$ to form $H$, 
we see that
\begin{align*}
  |I_H(u,v)| &\geq |I_P(u,v)|- \epsilon \left (|\Gamma_P(u)| +
               |\Gamma_P(v)| \right )\\
&\geq  |I_P(u,v)|- \frac{\epsilon}{1-\epsilon}\left (|\Gamma_H(u)| + |\Gamma_H(v)| \right )
\end{align*}
The lower bound follows since $|\Gamma_H(u)|\geq |\Gamma_H(v)|$, and $2\frac{\epsilon}{1-\eps} \leq3 \eps$ for $\eps \leq \frac13$.
\end{proof}

\begin{proposition}\label{prop: bound-on-G}
  For \change{any} vertex $u$ in $H$ with $\psi_G(u) = g$,
\begin{equation}
  \label{eq:bounded-G}
  \frac{|\Gamma_H(u)|}{2}\leq |\Gamma_G(g)| \leq  \frac{1}{1-\epsilon} \cdot\frac{|\Gamma_H(u)|}{2}.
\end{equation}
\end{proposition}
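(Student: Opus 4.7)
The plan is to unpack the definition of the tensor product $P = K_3 \times G$ to compute $|\Gamma_P(u)|$ exactly in terms of $|\Gamma_G(g)|$, and then use the $\epsilon$-near hypothesis to sandwich $|\Gamma_H(u)|$ between $(1-\epsilon)|\Gamma_P(u)|$ and $|\Gamma_P(u)|$. Rearranging these inequalities will yield both halves of the proposition.

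More concretely: fix $u \in V(H)$ with $\psi(u) = (t, g)$. By the definition of the tensor product, the neighbors of $u$ in $P$ are exactly the pairs $(t', g')$ with $t' \in \Gamma_{K_3}(t)$ and $g' \in \Gamma_G(g)$. Since every vertex of $K_3$ has degree $2$, this gives the identity
\[
|\Gamma_P(u)| \;=\; |\Gamma_{K_3}(t)| \cdot |\Gamma_G(g)| \;=\; 2\,|\Gamma_G(g)|.
\]
Now I invoke the fact that $H$ is $\epsilon$-near $P$, so at most an $\epsilon$ fraction of $u$'s incident edges in $P$ were removed. This gives the two-sided bound
\[
(1-\epsilon)\,|\Gamma_P(u)| \;\leq\; |\Gamma_H(u)| \;\leq\; |\Gamma_P(u)|.
\]

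Substituting $|\Gamma_P(u)| = 2|\Gamma_G(g)|$ and dividing through by $2$ (and by $(1-\epsilon)$ for the upper bound) yields
\[
\frac{|\Gamma_H(u)|}{2} \;\leq\; |\Gamma_G(g)| \;\leq\; \frac{1}{1-\epsilon} \cdot \frac{|\Gamma_H(u)|}{2},
\]
which is exactly the claim. There is no real obstacle here; the proposition is essentially a direct computation from the definitions of the tensor product and of $\epsilon$-nearness, and I expect the formal write-up to take only a couple of lines. The only subtlety worth flagging is confirming that the $\epsilon$-near condition applies to $u$ in the direction we need (i.e., that the edges of $P$ incident to $u$, not of some auxiliary graph, are what get deleted), which is immediate from the definition at the start of Section~\ref{sec: defs}.
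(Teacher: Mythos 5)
Your proof is correct and matches the paper's argument: both use the identity $|\Gamma_P(u)| = |\Gamma_{K_3}(t)|\cdot|\Gamma_G(g)| = 2|\Gamma_G(g)|$ from the tensor structure, sandwich $|\Gamma_H(u)|$ between $(1-\epsilon)|\Gamma_P(u)|$ and $|\Gamma_P(u)|$ via the $\epsilon$-near hypothesis, and rearrange. The write-up is a bit more explicit than the paper's two-line version, but the reasoning is identical.
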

\begin{proof}
The lower bound on $|\Gamma_G(g)| $ follows 
from the observation that $|\Gamma_H(u)|\leq  |\Gamma_P(u)|= 2|\Gamma_G(g)| $
On the other hand, since the number of deletions from each vertex is
bounded, 
\[|\Gamma_H(u)| \ge (1-\eps)|\Gamma_P(u)| = 2 (1-\eps) |\Gamma_G(g)|.\]
\end{proof}

\subsection{Important graphs}\label{sec: graphs}

We define several graphs here that will be used throughout the proof section.
\paragraph{The candidate edge graph, $C$}

First, we construct a graph on $V(H)$,
whose edges contain those from the core triangles (see Proposition \ref{prop:core-contained}). 
Since our eventual goal is to find quasi-core triangles
and the edges in our graph are candidates for edges in the core triangles, 
we will call this graph the \emph{candidate edge} graph, $C$.
We define the graph $C$ on $V(H)$ such that distinct
$u, v \in V(H)$ \change{with $|\Gamma_H(u)| \geq |\Gamma_H(v)|$} form an edge $(u,v) \in E(C)$ if and only if
\begin{enumerate}
\item[(i)] $u$ and $v$ have $\eps$-similar degree in $H$, and
\item[(ii)]
  \begin{align}
    (1-6 \eps) \cdot \frac{|\Gamma_H(u)|}{2} \leq |I_H(u,v)| \leq
    \frac{1}{1-\epsilon} \cdot \frac{|\Gamma_H(u)|}{2}.\label{eq:j}
  \end{align}
\end{enumerate}

\change{In what follows, $C_i$ will denote a component of $C$.}

\begin{figure}
  \begin{center}
\includegraphics[width = 4cm]{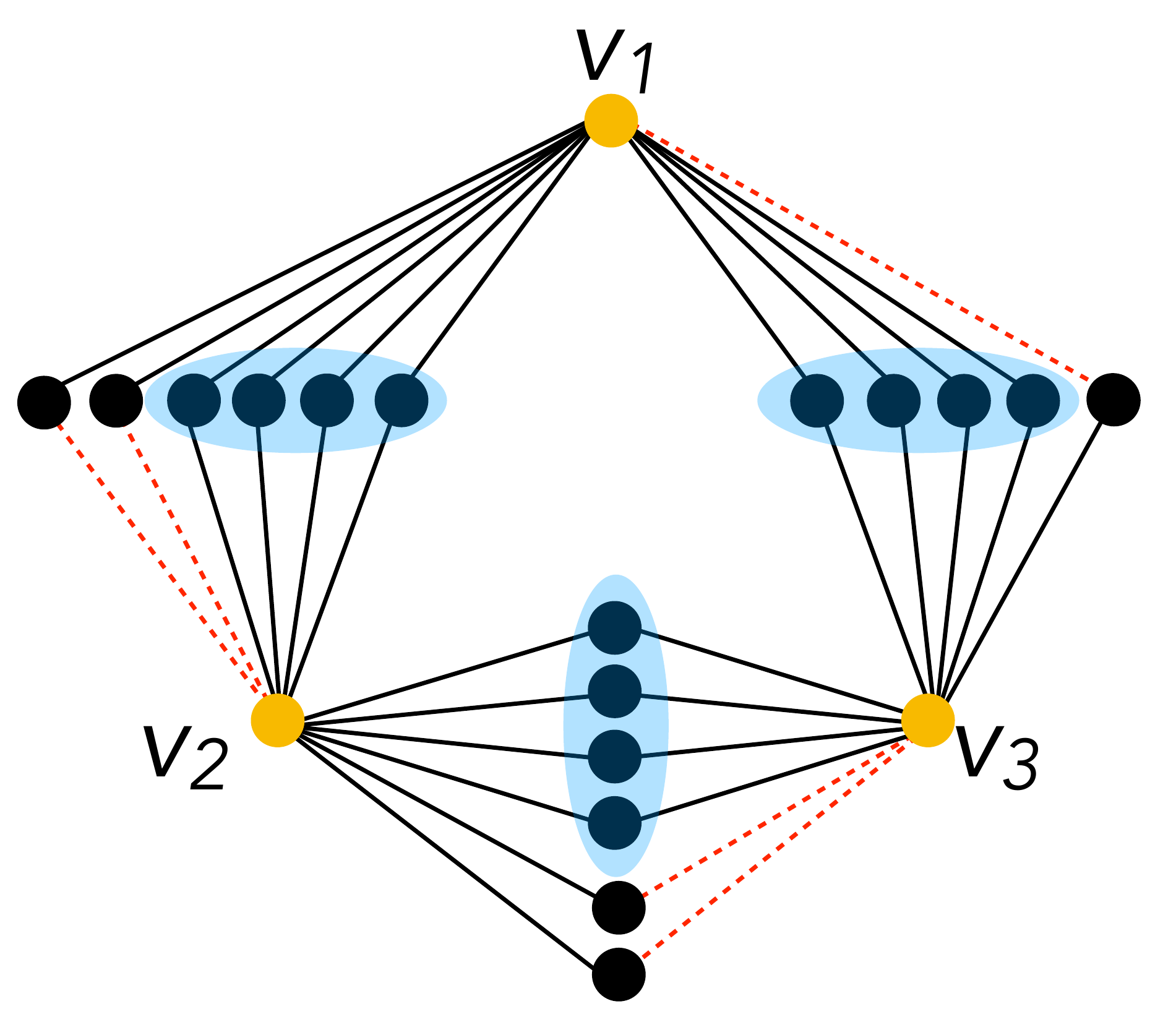}
  \end{center}
  \caption{An illustration of how the neighborhoods of three vertices in $C$ must look in order to form a triangle in $\mathcal{T}(C)$. 
  The dashed lines are edges deleted in building $H$ from $P$. 
}
\label{fig: neighborhood}
\end{figure}

\paragraph{The triangle graph, $\mathcal{T}(C)$}
Next, we construct a graph $\mathcal T(C)$, whose vertices are all
sets of triples from $V(C)$ such that for $v_1,v_2,v_3 \in V(C)$,
$(v_i,v_j) \in E(C)$ for all $i \neq j$ with $i,j \in \{1,2,3\}$. See
Figure~\ref{fig: neighborhood} for an illustration.
Note that by definition, since there are no self-loops in $C$,
$v_1,v_2,v_3$ are all distinct.  We will refer to the vertices of
$\mathcal{T}(C)$ as triangles.  An important property of $\mathcal{T}(C)$ is that it
includes all the core triangles (see Proposition
\ref{prop:core-contained}). We say the triangles
$T_1, T_2 \in \mathcal{T}(C)$ are adjacent if and only if they are
compatible in the following sense.

\begin{definition}\label{def:compatible}
Triangles $T_1, T_2 \in \mathcal{T}(C)$ are \emph{compatible}
exactly when
\begin{enumerate}
\item [(i)] $T_1$ and $T_2$ are on disjoint sets of vertices of $C$, and
\item [(ii)] there is an indexing $(u_1,u_2,u_3)$ of $T_1$ and
  $(v_1,v_2,v_3)$ of $T_2$ such that $(u_i,v_j) \in E(H)$ if and only
  if $i \neq j$.
\end{enumerate}
\end{definition}

The compatibility condition is defined so that triangles $T_1$ and $T_2$ corresponding to core triangles 
of $g$ and $g'$ in $\mathcal{T}$, i.e. $T_1=\psi^{-1}(K_3 \times \{g\})$ and $T_2=\psi^{-1}(K_3 \times \{g'\})$,
are adjacent in $\mathcal{T}$ if $(g,g') \in E(G)$ and none of the edges between $T_1$ and $T_2$
are deleted in building $H$ from $P$.

We say that two triangles of $\mathcal T(C)$ are \emph{connected} if
there is a path of compatible triangles from one to the
other. \change{Let $Y_j$ denote a connected component of
$\mathcal{T}(C)$}.  Note that a
component $Y_j$ of $\mathcal{T}(C)$ might intersect many components of
$C$, because the triangles of $Y_j$ are connected via edges of $H$,
which may cross different components of $C$.  We will show in Lemma
\ref{lem:discrete-covering} that for any $Y_j$ and any $C_i$, $Y_j$
either contains every vertex in $C_i$ or none of them.

\section{Algorithmic Results}\label{sec: proof-alg}

In this section, we provide formal algorithm statements and proofs for our algorithmic results.
Through out, we take $\eps_0 =1/40$.

\subsection{Properties of $C$, the Graph of Candidate Edges}
We use the graph $C$ to identify the core triangles, or triangles that
are close to the core triangles.  The following \change{proposition} shows that the
edges within core triangles will be kept in $C$.
\begin{proposition}\label{prop:core-contained}
\change{Fix $\epsilon < \eps_0.$ For all $g \in G$, all pairs of distinct vertices in $\psi^{-1}(K_3 \times \{g\} )$ 
contain an edge in $C$.}
\end{proposition}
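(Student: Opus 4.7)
The plan is to directly verify the two defining conditions of $C$ for any pair of vertices $u = \psi^{-1}(t_1, g)$ and $v = \psi^{-1}(t_2, g)$ with $t_1 \neq t_2$ in $K_3$. WLOG assume $|\Gamma_H(u)| \geq |\Gamma_H(v)|$.

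First I would handle condition (i), the $\epsilon$-similar degree requirement. In the product $P = K_3 \times G$, both $u$ and $v$ have degree exactly $2|\Gamma_G(g)|$, since the two vertices in $K_3 \setminus \{t_1\}$ are each connected to every neighbor of $g$ in $G$, and likewise for $t_2$. Since at most an $\epsilon$ fraction of edges at each vertex are removed in passing from $P$ to $H$, I get $|\Gamma_H(u)| \leq 2|\Gamma_G(g)|$ and $|\Gamma_H(v)| \geq (1-\epsilon) \cdot 2|\Gamma_G(g)| \geq (1-\epsilon)|\Gamma_H(u)|$. Therefore $|\Gamma_H(u)| - |\Gamma_H(v)| \leq \epsilon |\Gamma_H(u)| \leq 2\epsilon |\Gamma_H(u)|$, which is exactly the $\epsilon$-similar degree condition.

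Next I would verify condition (ii), the intersection bound. Since $t_1 \neq t_2$, $|I_{K_3}(t_1,t_2)| = 1$ (the unique third vertex of $K_3$), and $I_G(g,g) = \Gamma_G(g)$. Applying Proposition \ref{prop:error} yields
\[
|\Gamma_G(g)| - 3\epsilon|\Gamma_H(u)| \;\leq\; |I_H(u,v)| \;\leq\; |\Gamma_G(g)|.
\]
Combining this with the bounds $\frac{|\Gamma_H(u)|}{2} \leq |\Gamma_G(g)| \leq \frac{1}{1-\epsilon}\cdot\frac{|\Gamma_H(u)|}{2}$ from Proposition \ref{prop: bound-on-G} gives on the upper side $|I_H(u,v)| \leq \frac{1}{1-\epsilon}\cdot\frac{|\Gamma_H(u)|}{2}$, and on the lower side
\[
|I_H(u,v)| \;\geq\; \frac{|\Gamma_H(u)|}{2} - 3\epsilon|\Gamma_H(u)| \;=\; (1-6\epsilon) \cdot \frac{|\Gamma_H(u)|}{2}.
\]
This is exactly inequality \eqref{eq:j}, so $(u,v) \in E(C)$.

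There is no real obstacle here: the proposition is essentially a sanity check that the thresholds in the definition of $C$ were chosen loosely enough to admit the pairs coming from genuine core triangles. The only thing to be mindful of is that Proposition \ref{prop:error} requires $\epsilon \leq 1/3$, which is satisfied since $\epsilon < \epsilon_0 = 1/40$, and that the WLOG ordering $|\Gamma_H(u)|\geq|\Gamma_H(v)|$ is consistent with how $C$ is defined.
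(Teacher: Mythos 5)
Your proof is correct and follows essentially the same route as the paper's: condition (i) is verified by bounding both $|\Gamma_H(u)|$ and $|\Gamma_H(v)|$ between $(1-\epsilon)\cdot 2|\Gamma_G(g)|$ and $2|\Gamma_G(g)|$ via Proposition~\ref{prop: bound-on-G}, and condition (ii) by combining Proposition~\ref{prop:error} (with $|I_{K_3}(t_1,t_2)|=1$ and $I_G(g,g)=\Gamma_G(g)$) with Proposition~\ref{prop: bound-on-G} to sandwich $|I_H(u,v)|$ into the range required by Equation~\eqref{eq:j}.
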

\begin{proof}
Fix distinct $u,v \in C$ with $\psi(u) = (a,g)$ and $\psi(v) =( b,g)$. 
By rewriting the equation in Proposition \ref{prop: bound-on-G},
both $u$ and $v$ have neighborhood sizes satisfying
\[(1-\epsilon)\cdot 2 |\Gamma_G(g)| \leq |\Gamma_H(u)|, |\Gamma_H(v)| \leq 2 |\Gamma_G(g)|.\]
Therefore, $u$ and $v$ have $\eps$-similar degree in $H$.

Now, we show $|I_H(u,v)|$ falls in the specified range for 
$(u,v)$ to be an edge in $C$.
Using Proposition \ref{prop:error}, 
set $t_1=a, t_2=b,$ and $g_1=g_2=g$ to see that
\begin{align*}
  |I_{K_3}(a,b)| \cdot |I_G(g,g)| - 3\epsilon |\Gamma_H(u)| &\leq |I_H(u,v)| \le |I_{K_3}(a,b)| \cdot |I_G(g,g)\change{|}
\end{align*}
\change{Since $|I_G(g,g)| =|I_G(g)| $ and $  |I_{K_3}(a,b)| =1$, the above simplifies to
\begin{align*}
  |\Gamma_G(g)|- 3\epsilon |\Gamma_H(u)| &\leq |I_H(u,v)| \le  |\Gamma_G(g)|.
\end{align*}}
We can further lower bound $|I_H(u,v)|$ by using the fact that 
$|\Gamma_H(u)| \leq 2|\Gamma_G(g)|$ from Proposition \ref{prop: bound-on-G}, so
\begin{equation}\label{eqn: core-big-int}
\left (1-6 \eps\right ) \cdot |\Gamma_H(u)|/2 \leq  |I_H(u,v)|.
\end{equation}

On the other hand, $|\Gamma_H(u)| \geq 2 (1-\eps) |\Gamma_G(g)|$
from the lower bound in Proposition \ref{prop: bound-on-G}.
Using this as an upper bound on $|\Gamma_G(g)|$, we see that 
$|I_H(u,v)| \leq |\Gamma_H(u)|/ (2(1-\epsilon))$.
Comparing with Equation \ref{eq:j}, it follows that $(u,v)$ is an edge in $C$.
\end{proof}

Next, Proposition~\ref{prop:C-edges} \change{together with Proposition \ref{prop:error}} proves that edges in $C$ either come from 
(1) vertices whose unknown color classes are the same and whose $G$ classes
have intersection \change{size roughly half that of their neighborhoods in $G$}
or (2) vertices whose unknown color classes are different and whose $G$ classes have almost identical intersection.
The edges from core triangles are of \change{type (2)}.

\begin{proposition}\label{prop:C-edges}
  Fix $\epsilon < \eps_0.$ For $(u,v) \in C$, let $\psi(u) = (t_1,g_1)$ and $\psi(v) = (t_2,g_2)$, 
where $|\Gamma_H(u)| \geq |\Gamma_H(v)|$. 
Either (1) $t_1=t_2$ and 
\[
(1-6 \epsilon)\cdot \frac{|\Gamma_H(u)|}{4}\leq |I_G(g_1,g_2)| \leq (1+ 8\epsilon) \cdot \frac{|\Gamma_H(u)|}{4}.
\]
or
(2) $t_1 \neq t_2$ and 
\[
(1-6\epsilon) \cdot \frac{|\Gamma_H(u)|}{2}\leq |I_G(g_1,g_2)| \leq (1+8 \epsilon) \cdot \frac{|\Gamma_H(u)|}{2}.
\]
\end{proposition}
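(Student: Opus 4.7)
The proof proceeds by a direct case analysis on whether $t_1=t_2$ or $t_1 \neq t_2$, combining the structural identity of Proposition~\ref{prop:error} (which controls $|I_H(u,v)|$ in terms of $|I_{K_3}(t_1,t_2)| \cdot |I_G(g_1,g_2)|$) with the quantitative hypothesis that $(u,v)\in E(C)$, which constrains $|I_H(u,v)|$ via condition~(ii) in the definition of $C$. The only preliminary observation needed is the explicit computation of $|I_{K_3}(t_1,t_2)|$ in $K_3$: since $\Gamma_{K_3}(t)$ consists of the two colors other than $t$, one has $|I_{K_3}(t,t)| = 2$, and for $t_1 \neq t_2$, the common neighborhood is the unique third color, giving $|I_{K_3}(t_1,t_2)| = 1$.

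First I would dispatch case~(2), where $t_1 \neq t_2$. Substituting $|I_{K_3}(t_1,t_2)| = 1$ into Proposition~\ref{prop:error} yields
\[
|I_G(g_1,g_2)| - 3\epsilon|\Gamma_H(u)| \;\leq\; |I_H(u,v)| \;\leq\; |I_G(g_1,g_2)|.
\]
The lower bound on $|I_G(g_1,g_2)|$ then follows immediately by combining the left inequality above with the upper bound on $|I_H(u,v)|$ from~(ii), while the upper bound follows from the right inequality above combined with the lower bound in~(ii); a short calculation using $\tfrac{1}{1-\epsilon} \leq 1+2\epsilon$ (valid since $\epsilon < \epsilon_0 = 1/40$) shows the error terms add up to at most $4\epsilon|\Gamma_H(u)|$, which is exactly the $(1+8\epsilon)|\Gamma_H(u)|/2$ slack claimed.

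For case~(1), where $t_1 = t_2$, I would substitute $|I_{K_3}(t_1,t_2)| = 2$ into Proposition~\ref{prop:error}, obtaining
\[
2|I_G(g_1,g_2)| - 3\epsilon|\Gamma_H(u)| \;\leq\; |I_H(u,v)| \;\leq\; 2|I_G(g_1,g_2)|.
\]
Dividing by $2$ and pairing these with the bounds in~(ii) immediately yields the claimed estimate. Specifically, the left inequality in~(ii) combined with $|I_H(u,v)| \le 2|I_G(g_1,g_2)|$ gives the lower bound $(1-6\epsilon)|\Gamma_H(u)|/4 \leq |I_G(g_1,g_2)|$, and the right inequality in~(ii) combined with the $3\epsilon|\Gamma_H(u)|$ loss gives an upper bound of $\tfrac{|\Gamma_H(u)|}{4(1-\epsilon)} + \tfrac{3\epsilon|\Gamma_H(u)|}{2}$, which is bounded by $(1+8\epsilon)|\Gamma_H(u)|/4$ for $\epsilon < \epsilon_0$.

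There is no real obstacle here; the proof is essentially algebraic bookkeeping. The main thing to be careful about is threading the slight loss of a factor of $1/(1-\epsilon)$ from the upper bound of~(\ref{eq:j}) together with the additive $3\epsilon|\Gamma_H(u)|$ loss from Proposition~\ref{prop:error}, and verifying that these combine to fit inside the $8\epsilon$ multiplicative slack on the right-hand side in both cases. Since $\epsilon < 1/40$, one has plenty of room, and the inequality $\tfrac{1}{1-\epsilon} \leq 1 + 2\epsilon$ is all that is needed for the linearization.
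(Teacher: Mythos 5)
Your proof is correct and follows essentially the same route as the paper: a case analysis on $t_1 = t_2$ versus $t_1 \neq t_2$, substituting $|I_{K_3}(t_1,t_2)| \in \{1,2\}$ into Proposition~\ref{prop:error}, and then crossing the resulting bounds on $|I_H(u,v)|$ with those from condition~(ii) in the definition of $C$. The final linearization $\frac{1}{1-\eps} + 6\eps \leq 1 + 8\eps$ that you invoke to close both cases is exactly the inequality the paper uses.
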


\begin{proof}
If $t_1=t_2$, then by Proposition \ref{prop:error} 
\[
 2\cdot |I_G(g_1,g_2)| - 3 \eps |\Gamma_H(u)| \leq  |I_H(u,v)| \leq  2 \cdot|I_G(g_1,g_2)|,
\]
and recall from Equation \ref{eq:j} that
\[(1-6\epsilon) \cdot \frac{|\Gamma_H(u)|}{2} \leq |I_H(u,v)| \leq \frac{1}{1-\epsilon} \cdot \frac{|\Gamma_H(u)|}{2}.\]
Crossing the upper and lower bounds from the inequalities, we see that
\[
 (1-6 \epsilon)\cdot \frac{|\Gamma_H(u)|}{4}\leq |I_G(g_1,g_2)| \leq\left ( \frac{1}{1-\epsilon} +6\epsilon\right ) \cdot \frac{|\Gamma_H(u)|}{4}.
\]
The same argument holds when $t_1 \neq t_2$, 
but \change{when applying Proposition \ref{prop:error}, the factor $|I_{K_3}(t_1,t_2)|$} is now a 1 instead of a 2, 
so we obtain
\[
(1-6 \epsilon) \cdot \frac{|\Gamma_H(u)|}{2}\leq |I_G(g_1,g_2)| \leq\left ( \frac{1}{1-\epsilon} +6\epsilon\right ) \cdot \frac{|\Gamma_H(u)|}{2}.
\]
Then we see that the statement holds, since for $\eps < \eps_0$
\[\frac{1}{1-\epsilon} +6\epsilon \leq 1+8 \eps.\]
\end{proof}

Next, we consider certain triples in $C$ with the graph $\mathcal{T}(C)$. 
Recall that the triples in $\mathcal{T}(C)$ have an identifiable structure that is found in the core triangles 

\subsection{Properties of $\mathcal{T}(C)$, the Graph of Triangles of $C$}

Recall \change{ we let $C_i$ denote a component of $C$.}
\hl{We will show that
one can use the partitioning induced by the components of $C$ to
partition the core triangles.  Further, if $\psi^{-1}(t,g)$ and
$\psi^{-1}(t',g')$ are in the same component of $C$, then they remain
in the same component of $\mathcal{T}(C)$, i.e.,
$\psi^{-1}(K_3 \times \{g\})$ and $\psi^{-1}(K_3 \times \{g'\})$ are
in the same component in $\mathcal{T}(C)$, as shown in the following
lemma.}

\begin{lemma}\label{lem:unbroken-comp}
 \hl{Fix $\epsilon < \eps_0.$ If $\psi(v)=(t,g)$ and $\psi(v')=(t',g')$
  are in the same component of $C$, then
  $\psi^{-1}(K_3 \times \{g\})$ and $\psi^{-1}(K_3 \times \{g'\})$ are
  in the same component in $\mathcal{T}(C)$.}
\end{lemma}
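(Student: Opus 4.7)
The plan is to reduce to a one-step claim by induction on the length of a path in $C$. By Proposition~\ref{prop:core-contained}, every core triangle $\psi^{-1}(K_3 \times \{g\})$ is a vertex of $\mathcal{T}(C)$, so it suffices to show the following: whenever $(v_0,v_1) \in E(C)$ with $\psi(v_i)=(t_i,g_i)$, the core triangles $T_0 = \psi^{-1}(K_3 \times \{g_0\})$ and $T_1 = \psi^{-1}(K_3 \times \{g_1\})$ lie in the same component of $\mathcal{T}(C)$. If $g_0 = g_1$ this is trivial since $T_0 = T_1$, so I may assume $g_0 \neq g_1$, and in particular I am in one of the two cases of Proposition~\ref{prop:C-edges}.

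The strategy is to find an intermediate core triangle $T'' = \psi^{-1}(K_3 \times \{g''\})$ that is simultaneously compatible with both $T_0$ and $T_1$. I will choose $g'' \in I_G(g_0,g_1)$: then $(g_0,g'')$ and $(g_1,g'')$ are both edges of $G$, so in the tensor $P$ each of $T_0 \cup T''$ and $T_1 \cup T''$ spans a copy of $K_3 \times K_2$ when indexed by color. All that remains is to ensure the twelve relevant $H$-edges survive the deletion, in which case indexing $T_0, T_1, T''$ by color witnesses pairwise compatibility in the sense of Definition~\ref{def:compatible}. Since $G$ is loopless, $g'' \notin \{g_0,g_1\}$ automatically, so $T_0, T_1, T''$ are genuinely distinct vertices of $\mathcal{T}(C)$.

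To produce such a $g''$, I will lower-bound $|I_G(g_0,g_1)|$ and upper-bound the set of ``bad'' $g''$'s for which the twelve-edge survival condition fails. Proposition~\ref{prop:C-edges} gives $|I_G(g_0,g_1)| \geq (1-6\eps)|\Gamma_H(v_0)|/4$ when $t_0 = t_1$, or $\geq (1-6\eps)|\Gamma_H(v_0)|/2$ when $t_0 \neq t_1$, which Proposition~\ref{prop: bound-on-G} converts into a constant fraction of $|\Gamma_G(g_0)|$. On the other hand, each vertex $u \in T_0 \cup T_1$ spoils at most $\eps|\Gamma_P(u)| = 2\eps|\Gamma_G(\psi_G(u))|$ choices of $g''$, since every deleted edge at $u$ kills exactly one candidate. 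Using the $\eps$-similarity of $|\Gamma_H(v_0)|$ and $|\Gamma_H(v_1)|$ together with Proposition~\ref{prop: bound-on-G} to compare $|\Gamma_G(g_1)|$ to $|\Gamma_G(g_0)|$, summing over the six vertices of $T_0 \cup T_1$ bounds the bad set by $O(\eps)|\Gamma_G(g_0)|$.

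The main obstacle I anticipate is the $\eps$-bookkeeping: one must check that even in the tighter ``same color'' case of Proposition~\ref{prop:C-edges} the lower bound on $|I_G(g_0,g_1)|$ strictly exceeds the bad count at $\eps < \eps_0 = 1/40$. A direct computation gives a good count of at least roughly $(1-6\eps)(1-\eps)/2 \cdot |\Gamma_G(g_0)|$ versus a bad count of at most roughly $12\eps/(1-\eps) \cdot |\Gamma_G(g_0)|$, which leaves a positive surplus at $\eps = 1/40$ and so guarantees a valid $g''$. Producing this $g''$ yields $T_0 \sim T'' \sim T_1$ in $\mathcal{T}(C)$, and iterating along any $C$-path between $v$ and $v'$ connects $\psi^{-1}(K_3 \times \{g\})$ to $\psi^{-1}(K_3 \times \{g'\})$ in $\mathcal{T}(C)$, completing the lemma.
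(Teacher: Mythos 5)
Your proposal is correct and follows essentially the same route as the paper: reduce to adjacent $v_0, v_1$ in $C$, use Proposition~\ref{prop:C-edges} to lower-bound $|I_G(g_0,g_1)|$, count the ``bad'' $g''$ spoiled by deleted edges at the six vertices of $T_0 \cup T_1$, and show the surplus is positive for $\eps < \eps_0$, producing a $g''$ whose core triangle is compatible with both $T_0$ and $T_1$. The only cosmetic differences are that you explicitly dispose of the $g_0 = g_1$ case and note that looplessness of $G$ guarantees $T''$ is disjoint from $T_0, T_1$; the constants match the paper's up to negligible bookkeeping.
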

\begin{proof}

Suppose for now that $v$ and $v'$ are incident in component $C_i$ of $C$.
We want to show that there is some $g'' \in I_G(g,g')$ 
where $\psi^{-1}(K_3 \times \{g''\})$ is compatible with both
$\psi^{-1}(K_3 \times \{g\})$ and $\psi^{-1}(K_3 \times \{g'\}).$  
Such a $g''$ guarantees that triangle 
$\psi^{-1}(K_3 \times \{g''\})$ is incident to both 
triangles $\psi^{-1}(K_3 \times \{g'\})$  and $\psi^{-1}(K_3 \times \{g\})$  in $\mathcal{T}(C),$
proving they are in the same component of $\mathcal{T}(C)$.
We have a lower bound on $|I_G(g,g')|$ from 
Proposition \ref{prop:C-edges}, 
so it remains to argue that not too many compatibility relations are destroyed
in the process of deleting edges from $P = K_3 \times G$ to form $H$.

Recall that for any node $u$ in $\psi^{-1}(K_3 \times \{g\})$,
$|\Gamma_H(u)| \leq 2 |\Gamma_G(g)|$.
Using this and the fact that at most an $\epsilon$ fraction of edges are deleted from any vertex,
there are at most $6 \epsilon \cdot |\Gamma_G(g)|$ edges deleted from vertices 
$\psi^{-1}(K_3 \times \{g\})$ that would connect to core triangles.
That implies that out of the $ |I_G(g,g')|$ possible core triangles $\psi^{-1}(K_3 \times \{g''\})$ for $g'' \in I_G(g,g')$, 
at least $|I_G(g,g')|-6 \epsilon |\Gamma_G(g)|$ of them still maintain all 6 edges between  $\psi^{-1}(K_3 \times \{g''\})$ and 
$\psi^{-1}(K_3 \times \{g\})$ that are necessary for the triangles to be compatible.
This similarly holds for $\psi^{-1}(K_3 \times \{g'\})$,
at least $|I_G(g,g')|-6 \epsilon |\Gamma_G(g')|$ of them still maintain all 6 edges between  $\psi^{-1}(K_3 \times \{g''\})$ and 
$\psi^{-1}(K_3 \times \{g'\})$.

Then from Proposition
\ref{prop:C-edges}, we have that
\[(1-6\epsilon) \cdot \max(|\Gamma_H(v)|, |\Gamma_H(v')|)/4\leq |I_G(g,g')|.\]
 Since for $u \in \{v,v'\}$, $|\Gamma_H(u)| \geq 2 |\Gamma_G(\psi_G(u))|(1-\epsilon)$ by Proposition \ref{prop: bound-on-G},
it follows that
\[(1-6 \epsilon)\cdot 2\max(|\Gamma_G(g)|, |\Gamma_G(g')|) \cdot (1-\epsilon)/4\leq |I_G(g,g')|.\]
Therefore, for $g'' \in I_G(g,g')$, the number of triangles
$\psi^{-1}(K_3 \times \{g''\})$ 
in $\mathcal{T}(C)$ that have the 12 required edges to be compatible with both 
$\psi^{-1}(K_3 \times \{g\})$ and $\psi^{-1}(K_3 \times \{g'\})$ is at least
\[ ((1-6\epsilon)(1-\epsilon)/2- 12 \epsilon  ) \max(|\Gamma_G(g)| ,|\Gamma_G(g')|),\]
which is positive since $(1-6\epsilon)(1-\epsilon)/2 - 12 \epsilon >0$
for $\eps < \eps_0.$

Now suppose that $v$ and $v'$ are not necessarily incident in $C_i$. 
There is some path of nodes in $C_i$ connecting them, 
and every pair of incident vertices in that path is in the same component of $\mathcal{T}(C)$.
Transitively, it must be that $v$ and $v'$ are also in the same component of $\mathcal{T}(C)$. 
\end{proof}

Next, we discuss the structure of $\mathcal{T}(C)$ in more depth. 
In particular, we will look at which triples of nodes from $C$ become vertices in $\mathcal{T}(C)$, 
and we will study the edge relations between the triangles of $\mathcal{T}(C)$. 
Recall that the triples we find may be quasi-core triangles and not necessarily core triangles
(see Definition \ref{def: quasi-core}). 
Let  $\psi(v_1) = (a,g_1)$, $\psi(v_2)=(b,g_2)$, and $\psi(v_3) = (c,g_3)$.
Overall, the edges could be deleted in such a way that makes it impossible to distinguish triangles 
$\psi^{-1}(K_3 \times \{g\})$ and $\psi^{-1}((a,g_1),(b,g_2),(c,g_3))$,
if the vertices $g,g_1,g_2,g_3$ are $\eps$-confusable (see Definition \ref{def: confusable}),
i.e. almost all of their neighbors in $G$ are in common.
Thus the need to allow quasi-core triangles in the reconstruction. 
For example, consider the case when $G$ is the $\epsilon$-noisy hypercube.
\change{This is the graph on vertices $\{0,1\}^\ell$, for $n = 2^\ell,$ where nodes are adjacent exactly when their hamming distance is $\eps \ell$.}
Here, every vertex $u$ in $G$ has neighbors whose neighborhoods
are all only an $\epsilon$ fraction different from $u$'s, 
so in $H$ it might be that $u$ has the same neighborhood as one--or many--of its neighbors.

\change{W}e have seen from Proposition \ref{prop:C-edges} that edges in $C$ can also be from nodes whose color 
classes are the same and whose $G$ classes have intersection size roughly $|\Gamma_H(v_1)|/4$, for
$|\Gamma_H(v_1)| \geq |\Gamma_H(v_2)|, |\Gamma_H(v_3)|$.
Such triangles $\{v_1,v_2,v_3 \}$ will be monochrome (see Definition \ref{def: monochrome}).
Now we can exactly describe all of the types of triangles in $\mathcal{T}(C)$.

\begin{lemma}\label{lem:tri-types}
Fix $\epsilon < \eps_0.$ \change{The vertices of $\mathcal{T}(C)$, which are all triangles of $C$,}
are all quasi-core or monochrome triangles.
\end{lemma}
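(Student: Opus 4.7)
The plan is to classify every triangle $\{v_1, v_2, v_3\} \in \mathcal{T}(C)$ according to how many of its three edges fall into case (1) versus case (2) of Proposition~\ref{prop:C-edges}. By definition of $C$, the three vertices have pairwise $\eps$-similar degree, so letting $|\Gamma_H(v_1)| \ge |\Gamma_H(v_2)| \ge |\Gamma_H(v_3)|$ we have $|\Gamma_H(v_i)| \ge (1-2\eps)|\Gamma_H(v_1)|$ for each $i$. Writing $\psi(v_i) = (t_i, g_i)$, each edge of the triangle is type (1) (colors agree) or type (2) (colors differ), so only three configurations are possible: all three colors equal, exactly two equal, or all three distinct. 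I will show the middle configuration cannot occur, and that the other two yield monochrome and quasi-core triangles respectively.

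The main obstacle is ruling out the mixed case, say $t_1 = t_2 \neq t_3$. Here the edge $(v_1,v_2)$ is type (1), giving the upper bound $|I_G(g_1,g_2)| \le (1+8\eps) |\Gamma_H(v_1)|/4$, while both edges incident to $v_3$ are type (2), giving $|I_G(g_1,g_3)| \ge (1-6\eps)|\Gamma_H(v_1)|/2$ and $|I_G(g_2,g_3)| \ge (1-6\eps)|\Gamma_H(v_2)|/2$. The idea is that $g_3$'s neighborhood must then contain almost all of both $g_1$'s and $g_2$'s neighborhoods, forcing $|I_G(g_1,g_2)|$ to be nearly as large as $|\Gamma_G(g_3)|$ rather than half of it. Concretely, since $\Gamma_G(g_1) \cap \Gamma_G(g_2) \cap \Gamma_G(g_3) \subseteq I_G(g_1,g_2)$, inclusion–exclusion inside $\Gamma_G(g_3)$ gives
\[
|I_G(g_1,g_2)| \ge |I_G(g_1,g_3)| + |I_G(g_2,g_3)| - |\Gamma_G(g_3)|.
\]
Using $|\Gamma_H(v_2)| \ge (1-2\eps)|\Gamma_H(v_1)|$ and the bound $|\Gamma_G(g_3)| \le \frac{1}{1-\eps}|\Gamma_H(v_1)|/2$ from Proposition~\ref{prop: bound-on-G}, the right-hand side is at least $(1-O(\eps))|\Gamma_H(v_1)|/2$, which contradicts the type (1) upper bound once $\eps < \eps_0 = 1/40$. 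I expect this inclusion–exclusion bookkeeping to be the technical heart of the proof; plugging in the explicit constants shows the required inequality reduces to something like $38\eps < 1$.

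Once the mixed configuration is eliminated, the remaining two cases are direct verifications of the definitions. If $t_1, t_2, t_3$ are all distinct, every edge is type (2), so for each pair with $|\Gamma_H(v_i)| \ge |\Gamma_H(v_j)|$ we get $|I_G(g_i,g_j)| \ge (1-6\eps)|\Gamma_H(v_i)|/2$, and Proposition~\ref{prop: bound-on-G} converts this to $|I_G(g_i,g_j)| > (1-9\eps)\max(|\Gamma_G(g_i)|,|\Gamma_G(g_j)|)$ because $(1-6\eps)(1-\eps) > 1-9\eps$; thus all three $G$-classes are pairwise $\eps$-confusable, and $\{v_1,v_2,v_3\}$ is quasi-core.

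If instead $t_1 = t_2 = t_3$, every edge is type (1). For the edges $(v_1,v_2)$ and $(v_1,v_3)$, the bound $(1-6\eps)|\Gamma_H(v_1)|/4 \le |I_G(g_i,g_j)| \le (1+8\eps)|\Gamma_H(v_1)|/4$ is already of the form required by Definition~\ref{def: monochrome}. For the edge $(v_2,v_3)$ the bound involves $|\Gamma_H(v_2)|$ instead; here I will use $\eps$-similarity, namely $(1-2\eps)|\Gamma_H(v_1)| \le |\Gamma_H(v_2)| \le |\Gamma_H(v_1)|$, to conclude that $(1-8\eps)|\Gamma_H(v_1)|/4 \le |I_G(g_2,g_3)| \le (1+9\eps)|\Gamma_H(v_1)|/4$, matching the monochrome definition. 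Combining the three cases exhausts all triangles in $\mathcal{T}(C)$.
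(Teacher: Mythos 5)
Your proof is correct and follows essentially the same route as the paper: rule out the mixed color configuration via the inclusion--exclusion bound $|I_G(g_1,g_2)| \ge |I_G(g_1,g_3)| + |I_G(g_2,g_3)| - |\Gamma_G(g_3)|$ (which is exactly the paper's $|I_G(g_1,g_2,g_3)| \ge |\Gamma_G(g_3)| - |\Gamma_G(g_3)\setminus\Gamma_G(g_1)| - |\Gamma_G(g_3)\setminus\Gamma_G(g_2)|$ rewritten), then verify the all-same and all-distinct cases directly against the monochrome and quasi-core definitions. Your constant tracking ($38\eps < 1$, $(1-6\eps)(1-\eps) > 1-9\eps$) matches the paper's, with your quasi-core bound being marginally tighter by avoiding an unnecessary $(1-2\eps)$ factor.
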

\begin{proof}
Fix a triangle $T = \{v_1,v_2,v_3\}$ in $\mathcal{T}(C)$, where $\psi(v_i)=(t_i,g_i)$, for $i \in [3]$.
Without loss of generality that $|\Gamma_H(v_1)| \geq |\Gamma_H(v_2)|\geq |\Gamma_H(v_3)|$.  

First, suppose that $t_1 = t_2$. For sake of deriving a contradiction, suppose also that $t_3 \neq t_1,t_2$. 
By Proposition \ref{prop:C-edges},
\[
|I_G(g_1,g_2)| \leq (1+8 \epsilon )\cdot \frac{|\Gamma_H(v_1)|}{4}.
\]
We will show that the above upper bound cannot hold when $t_3 \neq t_1,t_2$ 
by  lower bounding $|I_G(g_1,g_2)|$ with $|I_G(g_1,g_2,g_3)|$, 
where $I_G(g_1,g_2,g_3)$ the intersection of $ \cap_{i \in [3]}\Gamma_G(g_i)$.

Again by Proposition \ref{prop:C-edges}, for there to be an edge between $v_1$ and $v_3$, 
as well as between $v_2$ and $v_3$, 
it must be that
\[|I_G(g_1,g_3)| \geq (1-6 \epsilon)  |\Gamma_H(v_1)|/2 \quad \text{and} \quad |I_G(g_2,g_3)| \geq (1-6 \epsilon) |\Gamma_H(v_2)| /2.\]
Since $v_1, v_2,v_3$  all have $\eps$-similar degree, $ |\Gamma_H(v_2)|, |\Gamma_H(v_3)| \geq (1-2\epsilon)  |\Gamma_H(v_1)|$.
Using the lower bounds on $|I_G(g_2,g_3)|$ and $|I_G(g_1,g_3)|$, the $\eps$-similar degrees property, \change{and Proposition \ref{prop: bound-on-G},}
we have that
\begin{align*}
|I_G(g_1,g_2,g_3)|  &\geq |\Gamma_G(g_3)| - |\Gamma_G(g_3) \setminus \Gamma_G(g_1)| - |\Gamma_G(g_3) \setminus \Gamma_G(g_2)|  \\
                    &\geq |\Gamma_G(g_3)| - (|\Gamma_G(g_3)| -(1-6  \epsilon)  |\Gamma_H(v_1)|/2) -  (|\Gamma_G(g_3)| -(1-6  \epsilon)  |\Gamma_H(v_2)|/2)\\
                    &\geq \frac{1-6 \epsilon}{2}   \cdot ( |\Gamma_H(v_1)| +|\Gamma_H(v_2)| )-  |\Gamma_G(g_3)| \\
                    &\geq \frac{(1-6 \epsilon) ( 2-2 \epsilon )}{2}   \cdot |\Gamma_H(v_1)|-  |\Gamma_H(v_1)|/(2(1-\eps)).
\end{align*}
Putting it all together,
\[
 (1+8\epsilon) \cdot \frac{|\Gamma_H(v_1)|}{4} \geq |I_G(g_1,g_2)| \geq |I_G(g_1,g_2,g_3)| \geq \left (\frac{(1-6 \epsilon) ( 2-2 \epsilon )}{2}   - \frac{1}{2(1-\eps)} \right )\cdot |\Gamma_H(v_1)|
\]
is false for all \change{$\eps < \eps_0$,} so we reach a contradiction.\footnote{The choice of $\eps_0=1/40$ comes from $\eps_0$ 
being the largest reasonable looking fraction that this holds for.}
Therefore, if two vertices of $C$ in a triangle of $\mathcal{T}(C)$ have the same color class, then 
the third vertex also has that same color class. 

\smallskip

It follows that all triangles consist of vertices in $C$ that either have the same color class, 
or all different color classes. 
For a triangle with vertices in $C$ having all the same color class, by Proposition \ref{prop:C-edges} 
and the fact that the degrees are $\eps$-similar, for all distinct $i, j \in [3]$,
\[
(1-6 \epsilon) (1-2\epsilon)\cdot \frac{|\Gamma_H(v_1)|}{4} \leq  |I_G(g_i,g_j)|\leq (1+8 \epsilon) \cdot \frac{|\Gamma_H(v_1)|}{4}.
\]
Given that
$(1-6\epsilon) (1-2\epsilon) \geq (1-8\epsilon)$ and $1+8 \eps \leq 1+9 \eps$ for all $\eps < \eps_0$,
we can compare this inequality on $ |I_G(g_i,g_j)|$ with that from the definition of monochrome (Definition \ref{def: monochrome}) 
to see that such a triangle is monochrome.

For a triangle with vertices having all different color classes, 
using Proposition \ref{prop:C-edges}, the fact that $g_1,g_2,g_3$ are \change{$\eps$-}similar, 
and Proposition \ref{prop: bound-on-G},
for all distinct $i, j \in [3] $
\[
(1-6 \epsilon)(1-2\epsilon)(1-\epsilon) \cdot|\Gamma_G(g_1)| \leq |I_G(g_i,g_j)|\leq (1+8  \epsilon)|\Gamma_G(g_1)|.
\]
Note that the reasoning for the above bound is exactly the same as when the vertices of the triangle all had the same color class, 
but the corresponding inequality in Proposition \ref{prop:C-edges} has an additional factor of 2 here, 
and the other additional factor of 2 (and $1-\eps$ in the lower bound)
come from using $\Gamma_G$ to bound instead of $\Gamma_H$.
It follows that such a triangle is quasi-core since in comparing with the definitions of quasi-core and $\eps$-confusable
(Definitions \ref{def: quasi-core}, \ref{def: confusable})
\[
  (1-6\epsilon)(1-2
\epsilon)(1-\epsilon) \geq (1-9  \epsilon).
\]
\end{proof}

\hl{The graph $\mathcal{T}(C)$ will have several components.
The next lemma shows that core triangles are only compatible with core or quasi-core triangles.}

\begin{lemma}\label{lem:components}
  Fix $\epsilon < \eps_0.$ For $T \in \mathcal T(C)$
  quasi-core, we have that $T$ is only compatible with other
  quasi-core triangles.  Consequently, any component of
  $\mathcal{T}(C)$ consists of only monochrome triangles or of only
  quasi-core triangles.
\end{lemma}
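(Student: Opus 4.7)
The plan is to argue directly from the tensor structure of $P = K_3 \times G$: since $E(H) \subseteq E(P)$, two vertices of $H$ sharing the same $K_3$-color can never be adjacent in $H$, because in $P$ adjacency requires adjacency in $K_3$ and $K_3$ has no self-loops. The main claim will follow by showing that a quasi-core triangle cannot be compatible with a monochrome one, and then invoking Lemma~\ref{lem:tri-types} for the consequence.

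More concretely, I would first recall that by Lemma~\ref{lem:tri-types}, every $T' \in \mathcal{T}(C)$ is either quasi-core or monochrome, so it suffices to rule out a compatibility between a quasi-core $T_1$ and a monochrome $T_2$. Let $T_1 = \{u_1, u_2, u_3\}$ be quasi-core and $T_2 = \{v_1, v_2, v_3\}$ be monochrome, and let $t \in \{a,b,c\}$ be the common color class of the $v_j$'s (Definition~\ref{def: monochrome}). Because $T_1$ is quasi-core, its three vertices carry all three distinct colors of $K_3$ (Definition~\ref{def: quasi-core}), so exactly one vertex $u_k \in T_1$ has $\psi_{K_3}(u_k) = t$.

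The key step is then the observation that in $P$ the vertex $u_k$ has no neighbors among vertices of color $t$, hence $(u_k, v_j) \notin E(P)$ for every $j \in \{1,2,3\}$, and therefore $(u_k, v_j) \notin E(H)$ either. But the compatibility condition (Definition~\ref{def:compatible}) demands an indexing under which each vertex of $T_1$ is $H$-adjacent to exactly two vertices of $T_2$; in particular, whatever index is assigned to $u_k$, it must be adjacent to two of the $v_j$'s. This is a contradiction, so $T_1$ and $T_2$ are not compatible.

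For the consequence, since compatibility is symmetric, no monochrome triangle can be compatible with any quasi-core triangle either, so the compatibility relation on $\mathcal{T}(C)$ has no edges between the quasi-core vertices and the monochrome vertices. Each connected component of $\mathcal{T}(C)$ therefore lies entirely on one side, establishing the dichotomy. The ``hard part'' here is really only locating the right structural observation; once one notices that forbidden same-color adjacency in $P$ persists into $H$, the rest is immediate, so I do not anticipate any substantive obstacle beyond carefully handling the indexing in Definition~\ref{def:compatible}.
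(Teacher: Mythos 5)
Your proof is correct and rests on exactly the same structural observation as the paper's: since $E(H) \subseteq E(P)$ and same-color vertices can never be adjacent in a $K_3$ tensor, the compatibility relation in Definition~\ref{def:compatible} forces the colors of the partner triangle. You organize it as a proof by contradiction (monochrome cannot be a partner) whereas the paper argues directly that the compatibility indexing pins the partner's three colors to $a$, $b$, $c$, but these are trivially equivalent framings of the same argument.
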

\begin{proof}
  Suppose that $T',T \in \mathcal{T}(C)$ are compatible triangles
  where $T=\{\psi^{-1}(a,g_1), \psi^{-1}(b,g_2),\\\psi^{-1}(c,g_3)\}$ is
  quasi-core and 
  $T' = \{\psi^{-1}(t_1,g'_1), \psi^{-1}(t_2,g'_2),
  \psi^{-1}(t_3,g'_3)\}$.  \change{Assume the vertices of $T$ and $T'$ are
  labeled so that Condition 2 from Definition \ref{def:compatible} holds;
  so $t_1 \neq b, c$, and
  $t_2 \neq a,c$, and $t_3 \neq a,b$. That is,
  $t_1 = a, t_2 = b, t_3 = c$.} By Lemma \ref{lem:tri-types},
  all triangles are either quasi-core or monochrome, so
  $T'$ is quasi-core, and every component consists of only monochrome
  triangles or only quasi-core triangles.
\end{proof}

Given that all components of $\mathcal{T}(C)$ consist of either monochrome triangles or quasi-core triangles,
we will refer to the former type of component as monochrome and the latter as core.
In particular, our choice to refer to the latter components as core and not quasi-core is purposeful, 
as the next lemma shows that any component of $\mathcal{T}(C)$ containing a triangle 
with the node $u$ also contains the core triangle that $u$ is a part of.

\begin{lemma}\label{lem:quasi-equals-core}
  Fix $\epsilon < \eps_0.$ Fix $T \in \mathcal{T}(C)$ in core
  component $Y_j$ of $\mathcal{T}(C)$.  For all $g = \psi_G(u)$ for
  some $u \in T$, the core triangle $\psi^{-1}(K_3 \times \{g\})$ is in
  component $Y_j$ of $\mathcal{T}(C)$.
\end{lemma}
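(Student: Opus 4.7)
The plan is to connect $T$ to $K := \psi^{-1}(K_3 \times \{g\})$ through a single intermediate core triangle. Proposition~\ref{prop:core-contained} already places $K \in \mathcal{T}(C)$, so I just need a length-$2$ path of compatible triangles from $T$ to $K$ (note $T$ and $K$ always share the vertex $u$, so they cannot be directly compatible). If $T = K$ there is nothing to show; otherwise, since $Y_j$ is a core component, Lemma~\ref{lem:components} tells us $T$ is quasi-core, and after relabeling colors I may write $T = \{\psi^{-1}(a,g_1), \psi^{-1}(b,g_2), \psi^{-1}(c,g_3)\}$ with $g = g_1$.

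I will then produce some $g^* \in V(G)$ such that $K^* := \psi^{-1}(K_3 \times \{g^*\})$ is compatible with both $T$ and $K$, giving the path $T \sim K^* \sim K$ in $\mathcal{T}(C)$. Indexing $T, K, K^*$ by their $K_3$-coordinate, same-color pairs are non-edges automatically, while different-color pairs form $H$-edges exactly when the corresponding $G$-edge exists and was not deleted in passing from $P$ to $H$. This imposes three conditions on $g^*$: (i) $g^* \notin \{g_1,g_2,g_3\}$ for disjointness; (ii) $g^* \in I_G(g_1,g_2,g_3)$ so the requisite $P$-edges exist; and (iii) none of the at-most-ten edges between $K^*$ and $T \cup K$ (with $\psi^{-1}(a,g_1)$'s edges counted once) is deleted.

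To count valid $g^*$, I lower-bound $|I_G(g_1,g_2,g_3)|$ using $\eps$-confusability (Definition~\ref{def: confusable}): choosing $i_0$ maximizing $|\Gamma_G(g_{i_0})|$ gives $|\Gamma_G(g_{i_0}) \setminus \Gamma_G(g_i)| \le 9\eps |\Gamma_G(g_{i_0})|$ for each $i$, so by a union bound $|I_G(g_1,g_2,g_3)| \ge (1 - 18\eps) |\Gamma_G(g_{i_0})| \ge (1 - 18\eps) |\Gamma_G(g_1)|$. Each of the at most five vertices in $T \cup K$ loses only $2\eps |\Gamma_G(\cdot)|$ edges when passing from $P$ to $H$ (Proposition~\ref{prop: bound-on-G}), and each such deletion disqualifies at most one value of $g^*$ via (iii). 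Hence $O(\eps) |\Gamma_G(g_1)|$ candidates fail (iii) and only $3$ more fail (i); the surplus is strictly positive for $\eps < \eps_0$, so a valid $g^*$ exists and $K^* \in \mathcal{T}(C)$ follows from Proposition~\ref{prop:core-contained}.

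The main obstacle I anticipate is the $\eps$-bookkeeping rather than any conceptual subtlety: carefully normalizing between $|\Gamma_H|$ and $|\Gamma_G|$ across possibly-distinct $g_i$ via Proposition~\ref{prop: bound-on-G}, and tracking the shared vertex $\psi^{-1}(a,g_1)$ so its deletions are not double-counted in the bad-candidate tally. Once the constants are verified to work for $\eps < \eps_0 = 1/40$, the pigeonhole argument producing $g^*$ closes the proof.
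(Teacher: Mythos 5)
Your proposal is correct and follows essentially the same route as the paper: locate a single intermediate core triangle $\psi^{-1}(K_3 \times \{g^*\})$ with $g^* \in I_G(g_1,g_2,g_3)$, lower-bound $|I_G(g_1,g_2,g_3)| \ge (1-18\eps)\cdot(\cdot)$ via $\eps$-confusability, and then show that the deleted edges can kill at most $O(\eps)$-many candidates, leaving a surviving $g^*$. The only cosmetic differences are bookkeeping: you count deletions per vertex of $T \cup K$ (avoiding double-counting the shared vertex, so $10\eps$ rather than the paper's $12\eps$), and you separately subtract $3$ for $g^* \in \{g_1,g_2,g_3\}$, which is unnecessary since looplessness of $G$ already gives $g_i \notin I_G(g_1,g_2,g_3)$ — the paper implicitly relies on this and doesn't subtract.
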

\begin{proof}

  As in the statement of the lemma, fix $T \in \mathcal{T}(C)$ in component $Y_j$ of $\mathcal{T}(C)$. 
For any $u \in T$, take $g = \psi_G(u)$. We seek to show that
$\psi^{-1}(K_3 \times \{g\}) \in Y_j$.

  \change{
  
First off, if $T$ is core, there is nothing
more to prove, so suppose that $T$ is quasi-core, but not core.
Denote the vertices of $T$ as $u,v,w$, where without loss of
generality $\psi(u)=(a,g)$, $\psi(v)=(b,g')$, and $\psi(w)=(c,g'')$.
Overall, we will show that there is a core triangle
$\psi^{-1}(K_3 \times \{g^*\})$ that is disjoint from
$\psi^{-1}(K_3 \times \{g\})$ and $T$, but is also compatible with both
triangles. 

We give a few sentences describing our high-level strategy.  
Note that we can always consider the compatibility relation in Definition \ref{def:compatible} and $\psi^{-1}$ on the graph
$P = K_3 \times G$, as $H$ is formed from just removing edges in $P$.
As such, observe that in $P$, the triangle
$\psi^{-1}(K_3 \times \{g^*\})$ is compatible with (as in Definition \ref{def:compatible} but with respect to $P$) and vertex disjoint
from the copy of $T$ in $P$
and $\psi^{-1}(K_3 \times \{g\})$, when
$g^* \in I_G(g,g',g'')$.
When edges are removed to form $H$ from $P$, some triangles which were compatible are no longer.
It suffices to show that $| I_G(g,g',g'')|$ is large, i.e., bigger than
$(1-O(\epsilon)) \max\{\Gamma_G(g), \Gamma_G(g'), \Gamma_G(g'')\}$, and that
there exists at least one $\psi^{-1}(K_3 \times \{g^*\})$ 
whose edges required for the compatibility relation
remain \change{intact} in $H$.}
This proof follows very similarly to that of Lemma
\ref{lem:unbroken-comp}.

Since $T$ is quasi-core, the vertices $g,g',$ and $g''$ are all
$\eps$-confusable, i.e. they all have large pairwise intersection. 
If $|\Gamma_G(g)| \geq |\Gamma_G(g')| \geq |\Gamma_G(g'')| $,
then since $g,g',g''$ are $\epsilon$-confusable, 
$ |\Gamma_G(g)\setminus \Gamma_G(g')| \leq 9\eps |\Gamma_G(g)|$
and $ |\Gamma_G(g)\setminus \Gamma_G(\change{g''})| \leq 9\eps |\Gamma_G(g)|$.
Using these bounds, we can lower bound $|I_G(g,g',g'')|$ with
\begin{align*}
  |I_G(g,g',g'')| &\geq |\Gamma_G(g)|  - |\Gamma_G(g)\setminus \Gamma_G(g')| - |\Gamma_G(g)\setminus \Gamma_G(g'')|  \\
  |I_G(g,g',g'')| &\geq |\Gamma_G(g)|  - 18\eps|\Gamma_G(g)|  = (1-18\eps)|\Gamma_G(g)|.
\end{align*}
The choice of  $|\Gamma_G(g)|$ as the largest neighborhood was arbitrary
and only chosen to show that more generally, the following bound holds:
\[|I_G(g,g',g'')| \geq (1 - 18 \epsilon
)\max(|\Gamma_G(g)|, |\Gamma_G(g')|, |\Gamma_G(g'')|).\]

Now, we will show that not enough edges are deleted to destroy the necessary compatibility relations.
Since the number of edges in $P$ incident to $\psi^{-1}(K_3 \times \{g\})$ 
is $6 |\Gamma_G(g)|$,  
there are at most $6 \epsilon |\Gamma_G(g)|$ edges deleted between
$\psi^{-1}(K_3 \times \{g\})$ and another core triangle.
Thus, there are at least
\[(1 - 18 \epsilon)\max(|\Gamma_G(g)|, |\Gamma_G(g')|,
|\Gamma_G(g'')|)- 6 \epsilon|\Gamma_G(g)| \] 
core triangles
$\psi^{-1}(K_3 \times \{g^*\})$ that have $g^* \in I_G(g,g',g'')$ and
the 6 edges necessary for the compatibility relation to be preserved
between $\psi^{-1}(K_3 \times \{g^*\})$ and
$\psi^{-1}(K_3 \times \{g\})$. Let $J_g$ be the set of such $g^*$.

It suffices to prove that at least one $g^* \in J_g$ has that
$\psi^{-1}(K_3 \times \{g\})$ is compatible with $T$. Note that since
$g^{*} \in I_G(g,g',g'')$, every $g^* \in J_g$ is compatible with $T$
according to the edges of $P$. Thus, in order for
$\psi^{-1}(K_3 \times \{g\})$ to not be compatible with $T$ at least
one of the $6$ edges between them in $P$ must be deleted. However, at
most $6\eps\max(|\Gamma_G(g)|, |\Gamma_G(g')|, |\Gamma_G(g'')|)$
edges incident to $T$ were deleted from $P$ to get $H$. 

Therefore, the number of triangles of the form $\psi^{-1}(K_3 \times \{g^*\})$ that are compatible with both $T$ 
and $\psi^{-1}(K_3 \times \{g\})$ is at least
\[
  (1 - 30\eps)\max(|\Gamma_G(g)|, |\Gamma_G(g')|,
  |\Gamma_G(g'')|)
\]
We just require one such triangle to prove the statement, so it suffices that
$
  1 - 30 \epsilon >0$
to prove the claim, and this is true for all $\eps < \eps_0$.
\end{proof}

Now, putting together several of the previous lemmas, 
we will see that a core component of $\mathcal{T}(C)$ 
must either not intersect a component $C_i$ of $C$,
 or it must contain all the core triangles of $C_i$.

\begin{lemma}\label{lem:discrete-covering}
Fix $\epsilon < \eps_0.$ For any component $C_i$ of $C$ and any core component $Y_j$ of $\mathcal{T}(C)$, $Y_j$
either contains every core triangle of $C_i$ or does not contain any node from $C_i$.
\end{lemma}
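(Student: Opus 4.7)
The plan is to prove the contrapositive-style claim: if $Y_j$ contains any vertex from $C_i$, then $Y_j$ contains every core triangle of $C_i$. Everything flows from combining Lemma~\ref{lem:quasi-equals-core} (which lets us pass from a quasi-core triangle to a genuine core triangle through any of its vertices) with Lemma~\ref{lem:unbroken-comp} (which says that two vertices in the same $C$-component yield two core triangles in the same $\mathcal{T}(C)$-component).

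First I would assume some triangle $T \in Y_j$ contains a vertex $v \in V(C_i)$. Since $Y_j$ is a core component, Lemma~\ref{lem:components} forces $T$ to be quasi-core. Letting $g = \psi_G(v)$, Lemma~\ref{lem:quasi-equals-core} immediately places the genuine core triangle $\psi^{-1}(K_3 \times \{g\})$ into $Y_j$. So I have an anchor core triangle, living in $Y_j$, that shares the vertex $v$ with $C_i$.

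Next I would verify that this anchor triangle lies entirely inside $C_i$. By Proposition~\ref{prop:core-contained}, any two distinct vertices of a core triangle are adjacent in $C$, so the other two vertices of $\psi^{-1}(K_3 \times \{g\})$ are in the same $C$-component as $v$, namely $C_i$. The same observation shows that any core triangle ``of $C_i$'' really has all three of its vertices in $C_i$, which is exactly what Lemma~\ref{lem:unbroken-comp} wants as input.

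Finally, for any other core triangle $T' = \psi^{-1}(K_3 \times \{g'\})$ of $C_i$, I pick any vertex $v' \in T'$; then $v, v'$ lie in the common component $C_i$ of $C$, so Lemma~\ref{lem:unbroken-comp} puts $\psi^{-1}(K_3 \times \{g\})$ and $T'$ in the same component of $\mathcal{T}(C)$. Since the former is in $Y_j$, so is $T'$, completing the argument. There isn't really a major obstacle here: all the genuine combinatorial content has already been absorbed into Lemmas~\ref{lem:unbroken-comp},~\ref{lem:components}, and~\ref{lem:quasi-equals-core}, and the only subtle point to remember is that ``core component'' of $\mathcal{T}(C)$ allows arbitrary quasi-core triangles, so we must use Lemma~\ref{lem:quasi-equals-core} to upgrade to a bona fide core triangle before invoking Lemma~\ref{lem:unbroken-comp}.
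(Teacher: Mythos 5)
Your proposal is correct and follows essentially the same two-step argument as the paper: invoke Lemma~\ref{lem:quasi-equals-core} to upgrade the given quasi-core triangle containing $v$ to the genuine core triangle $\psi^{-1}(K_3 \times \{g\})$ inside $Y_j$, then invoke Lemma~\ref{lem:unbroken-comp} to pull every other core triangle of $C_i$ into $Y_j$. The extra remark that Proposition~\ref{prop:core-contained} keeps each core triangle inside a single $C$-component is a nice clarification of what ``core triangle of $C_i$'' means, but it does not change the underlying argument.
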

\begin{proof}
  Fix a component $C_i$ of $C$, and suppose $Y_j$ is a core component
  of $\mathcal{T}(C)$ that contains a triangle with some node
  $v \in C_i$. By Lemma~\ref{lem:quasi-equals-core}, the core triangle
  $T$ with $v \in T$ must also be in $Y_j$. To finish, by
  Lemma~\ref{lem:unbroken-comp}, $Y_j$ contains every node of
  $C_i$ (and thus every triangle of $C_i$).
\end{proof}

\subsection{Structure of core components: Proof of Theorem~\ref{thm:no-sparse-cuts}}

We say that $X \subseteq V(H)$ is \emph{atomic} if for every component
$C_i$ of $C$, we have that $C_i \cap X = \emptyset$ or
$C_i \subseteq X$. 
\change{In other words, a set of vertices is atomic if it is a union of components.}
 For each component $Y_j$ of $\mathcal{T}(C)$, let
$U_j$ be the vertices of $C$ (and thus $H$) which are covered by at
least one triangle in $Y_j$. Note that $U_j$ is atomic by
Lemma~\ref{lem:quasi-equals-core}. Let $H[U_j]$ be the subgraph of $H$
induced by the vertices $U_j$. First, we show that if $Y_j$ is a core
component, then the cut between $U_j$ and $H \setminus U_j$ is sparse.

\begin{proposition}\label{prop:core-cut}
  Let $Y_j$ be a core component of $\mathcal{T}(C)$. Let $X \subseteq
  U_j$ and $Z \subseteq H \setminus U_j$ be atomic. Then,
  \[
    |E(H) \cap (X \times Z)| \le \frac{5\eps}{1-\eps}
    \min(\vol_H(X), \vol_H(Z)).
  \]
\end{proposition}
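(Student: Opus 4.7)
The plan is to exploit the fact that any atomic set is a disjoint union of whole core triangles, and then show that each edge of $H$ crossing the cut $X \times Z$ can be ``charged'' to a deleted edge of $P$ incident to the cut. Combining this with the per-vertex deletion bound yields the claim.

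First I would verify the structural setup: since $X$ is atomic, it is a union of $C$-components. By Proposition~\ref{prop:core-contained} each core triangle $\psi^{-1}(K_3 \times \{g\})$ lies inside a single component of $C$, so $X = \psi^{-1}(K_3 \times G_X)$ for some $G_X \subseteq V(G)$, and likewise $Z = \psi^{-1}(K_3 \times G_Z)$ with $G_X \cap G_Z = \emptyset$. Next I would use that $Y_j$ is a core component together with Lemma~\ref{lem:quasi-equals-core} to argue that every core triangle with $g\in G_X$ lies in $Y_j$, while every core triangle with $g'\in G_Z$ lies in some other component of $\mathcal{T}(C)$. Hence for each pair $(g,g') \in G_X \times G_Z$ the two core triangles are \emph{not} compatible in $\mathcal T(C)$.

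The heart of the argument is the charging step. For $(g,g')\in E_G \cap (G_X\times G_Z)$, the tensor structure of $P$ contributes exactly $6$ edges of $P$ between $\psi^{-1}(K_3\times\{g\})$ and $\psi^{-1}(K_3\times\{g'\})$, all of which cross $X\times Z$. Failure of compatibility in $H$ (Definition~\ref{def:compatible}) forces at least one of these $6$ edges to be missing from $H$, so at most $5$ remain. Summing over all such pairs $(g,g')$ gives
\[
|E(H)\cap(X\times Z)| \;\le\; 5\,|E_G\cap(G_X\times G_Z)|,
\]
while the same count shows that the number of deleted edges between $X$ and $Z$ is at least $|E_G\cap(G_X\times G_Z)|$. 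Using the hypothesis that no vertex loses more than an $\eps$ fraction of its edges, the number of deletions incident to $X$ is at most $\eps\,\vol_P(X)$, and likewise for $Z$, so $|E_G\cap(G_X\times G_Z)|\le \eps\min(\vol_P(X),\vol_P(Z))$. Finally, the elementary inequality $\vol_P(S)\le\vol_H(S)/(1-\eps)$ (another consequence of the per-vertex deletion bound) converts $\vol_P$ into $\vol_H$ and produces the desired factor $\tfrac{5\eps}{1-\eps}$.

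I do not expect a serious obstacle; the main subtlety is to be careful with the ``atomic'' hypothesis and to confirm that a core triangle cannot be split between $X$ and $H\setminus U_j$, which is exactly what Proposition~\ref{prop:core-contained} and Lemma~\ref{lem:quasi-equals-core} guarantee. The one place where the counting is tight is recognizing that the $5/6$-savings (at most $5$ of the $6$ $P$-edges between a non-compatible pair survive in $H$) is what turns the crude bound $6\eps$ into the claimed $5\eps$, so I would make sure to keep track of this factor rather than bounding edges of $H$ by all edges of $P$.
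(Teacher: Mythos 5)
Your proposal is correct and follows essentially the same approach as the paper's proof: reduce to core triangles via atomicity (Proposition~\ref{prop:core-contained}), observe that incompatibility across the cut forces at least one of the six $P$-edges between each crossing pair of core triangles to be deleted, extract the $5$-to-$1$ charging, and convert $\vol_P$ to $\vol_H$ via the per-vertex deletion bound. The only cosmetic difference is that you route the counting through $|E_G\cap(G_X\times G_Z)|$ while the paper directly bounds $|E(H)\cap(X\times Z)|\le 5|E'\cap(X\times Z)|$; the two bookkeepings yield the identical bound.
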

\begin{proof}
  \change{Recall that we denote the full product graph from which $H$ is constructed by $P = K_3 \times G$.}
  By Proposition~\ref{prop:core-contained}, if $X$ is atomic, then the
  vertices of $X$ can be expressed as a disjoint union of core
  triangles. That is, there exists \change{$\psi_G(X) \subseteq G$ such that
  $\psi(X) =  K_3 \times \psi_G(X)$.}
  Define \change{$\psi_G(Z)$ likewise, and note that $\psi_G(X)$ and $\psi_G(Z)$} must be
  disjoint. Since $Y_j$ is core, each core triangle of $X$ must not be
  compatible with each core triangle of $Z$. In particular, since $X$
  and $Z$ are atomic, for any edge $(g, g')$ between \change{$\psi_G(X)$ and $\psi_G(Z)$},
  at least one of the six edges between
  $(\psi^{-1}(t,g),\psi^{-1}(t',g'))$ must be in
  $E' = E(P) \setminus E(H)$. In particular, this implies that
  \begin{align*}
    |E(H) \cap (X \times Z)|
    &\le 5|E' \cap (X \times Z)|\\
    &= 5\min\left(\sum_{u \in X}
      |\Gamma_{E'}(u)
      \cap Z|, \sum_{u \in Z} |\Gamma_{E'}(u) \cap X|\right)\\
    &\le 5\eps \min \left(\sum_{u \in X} |\Gamma_P(u)|, \sum_{u \in Z}
      |\Gamma_P(u)|\right)\\
    &\le \frac{5\eps}{1-\eps} \min \left(\sum_{u \in X} |\Gamma_H(u)|, \sum_{u \in Z} |\Gamma_H(u)|\right)\\
    &\le \frac{5 \eps}{1-\eps} \min(\vol_H(X), \vol_H(Z)),
  \end{align*}
  as desired.
\end{proof}

We further show that if $Y_j$ is a core component, then
$H[U_j]$ can be colored by a simple algorithm.

\begin{proposition}\label{prop:core-color}
  Given a core component $Y_j$ of $\mathcal{T}(C)$, one can in $O(|Y_j|^2)$
  time find subsets
  $\mathcal A_j, \mathcal B_j, \mathcal C_j$ of $U_j$ such that
  $\mathcal A_j, \mathcal B_j,$ and $\mathcal C_j$ equal
  $\psi^{-1}_{K_3}(a) \cap U_j$, $\psi^{-1}_{K_3}(b) \cap U_j$, and
  $\psi^{-1}_{K_3}(c) \cap U_j$, up to a permutation of the three
  sets.
\end{proposition}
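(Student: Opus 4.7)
The plan is to run a breadth-first search over the compatibility graph of $Y_j$, propagating a consistent $3$-coloring outward from an arbitrary starting triangle. Concretely, pick any $T_0 = \{u_1, u_2, u_3\} \in Y_j$ and place $u_1, u_2, u_3$ in buckets $\mathcal{A}_j, \mathcal{B}_j, \mathcal{C}_j$ respectively. Since $Y_j$ is a core component, Lemma \ref{lem:components} together with Lemma \ref{lem:tri-types} guarantees that $T_0$ is quasi-core, so by Definition \ref{def: quasi-core} the three vertices of $T_0$ have three distinct underlying color classes. This initial choice therefore corresponds to some permutation $\sigma$ of $\{a,b,c\}$ such that $u_i$ lands in bucket $\sigma(\psi_{K_3}(u_i))$.

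The propagation step works as follows. Suppose an already-colored triangle $T = (u_1, u_2, u_3)$ is compatible with a not-yet-colored $T' = (v_1, v_2, v_3)$. Once the labeling of $T$ is fixed, the labeling of $T'$ from Definition \ref{def:compatible} is uniquely determined by inspecting the six $H$-edges between $T$ and $T'$: each $v_j$ is the unique vertex of $T'$ that is not adjacent in $H$ to $u_j$. Because $H \subseteq K_3 \times G$, every edge of $H$ joins vertices of distinct $K_3$-colors, so for each $i \neq j$ we have $\psi_{K_3}(v_j) \neq \psi_{K_3}(u_i)$. Combined with the fact that $T$ is quasi-core, so that $\{\psi_{K_3}(u_1), \psi_{K_3}(u_2), \psi_{K_3}(u_3)\} = \{a,b,c\}$, this forces $\psi_{K_3}(v_j) = \psi_{K_3}(u_j)$. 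Placing $v_j$ into the same bucket as $u_j$ therefore extends the assignment $\sigma \circ \psi_{K_3}$ consistently.

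The key thing to check is global consistency: a vertex $v \in U_j$ that lies in several triangles of $Y_j$ must receive the same bucket along every compatibility path reaching it from $T_0$. This follows from the previous paragraph, because at every propagation step the bucket of $v_j$ is dictated solely by $\psi_{K_3}(v_j)$ and the fixed permutation $\sigma$, independently of which already-colored neighbor drives the update. Hence once the BFS terminates, the three buckets coincide with $\psi^{-1}_{K_3}(a) \cap U_j$, $\psi^{-1}_{K_3}(b) \cap U_j$, and $\psi^{-1}_{K_3}(c) \cap U_j$ up to the permutation $\sigma$.

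For the runtime, each of the $|Y_j|$ triangles of $Y_j$ is processed once, and at each step we search for an already-colored compatible neighbor and read off the correct indexing from at most six edges of $H$. A straightforward implementation costs $O(|Y_j|)$ per step, giving the claimed $O(|Y_j|^2)$ bound. The main obstacle is really just verifying that the local propagation is well-defined and globally coherent --- that the Definition \ref{def:compatible} indexing of $T'$ is uniquely pinned down once $T$ is labeled, and that this local rule stitches into a single consistent assignment on all of $U_j$ --- both of which reduce cleanly to $H$ being a subgraph of $K_3 \times G$ combined with quasi-coreness of every triangle in $Y_j$.
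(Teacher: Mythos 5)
Your proposal is correct and follows essentially the same approach as the paper: pick an arbitrary triangle in $Y_j$, initialize the three buckets from it, and propagate via the compatibility structure (the paper phrases it as a depth-first search, you as BFS, but the argument is identical). If anything, you spell out in slightly more detail why the non-adjacent pairs in compatible quasi-core triangles must share a color class (two ``$\neq$'' constraints against the other two colors pin down the third), a step the paper states but leaves implicit.
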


\begin{proof}

  Pick an arbitrary triangle $T_0 = \{u,v,w\} \in Y_j$. We choose to
  initialize
  $\mathcal A = \{u\}, \mathcal B = \{v\}, \mathcal C =
  \{w\}$. Initialize $S = \{T_0\}$. We now iterate the following
  procedure.
  
  Consider a triangle $T' \in Y_j \setminus S$ which is compatible
  with some $T \in S$. Assume that $T = \{u_a,v_b,w_c\}$, with
  $u_a \in \mathcal A$, $v_b \in \mathcal B$ and $w_c \in \mathcal
  C$. Since $T$ and $T'$ are compatible, there is a unique vertex
  $u' \in T'$ which is non-adjacent to $u_a$ in $H$. We place this
  vertex in $\mathcal A$. Likewise, we place the unique $v' \in T'$
  non-adjacent to $v_b$ in $\mathcal B$ and the unique $w' \in T'$
  non-adjacent to $w_c$ in $\mathcal C$. Finally, we add $T'$ to
  $S$. This procedure will iterate until $S = Y_j$ because $Y_j$ is
  connected.

  It suffices to show that if $Y_j$ is core, then
  $\mathcal A, \mathcal B, \mathcal C$ will respect the color classes
  induced by $\psi^{-1}_{K_3}$. To see why, each triangle $T$ of $Y_j$
  is quasi-core by Lemma~\ref{lem:components}, so its three vertices
  land in the different color classes induced by
  $\psi^{-1}_{K_3}$. Further, for any pair $T, T'$ of compatible
  triangles the non-adjacent pairs of vertices must lie in the same
  color class. Thus, at every step of the algorithm, the sets
  $\mathcal A, \mathcal B, \mathcal C$ are consistent with
  $\psi^{-1}_{K_3}$. Therefore, the final
  $\mathcal A, \mathcal B, \mathcal C$ must precisely be
  $\psi^{-1}_{K_3}(a) \cap U_j$, $\psi^{-1}_{K_3}(b) \cap U_j$, and
  $\psi^{-1}_{K_3}(c) \cap U_j$, up to a permutation of the three
  sets.
\end{proof}

We now have the tools to prove Theorem \ref{thm:no-sparse-cuts}.

\begin{proof}[Proof of Theorem \ref{thm:no-sparse-cuts}]
We claim that if $G$ is a $3\eps$-edge-expander, then there is a
single core component $Y_j \in \mathcal T(C)$.
\change{Therefore, we can loop over all the $Y_j$'s (which can be done efficiently)
and output one that gives a full 3-coloring of $H$.
There exists a core component by Lemma \ref{lem:components}
and we can consider one such component as input to Proposition~\ref{prop:core-color}.
As we loop over the $Y_j$'s, we will encounter components that are not core,
but the procedure in the proof of Proposition~\ref{prop:core-color} can still be applied on each component, 
where if a 3-coloring is not produced on $H$ as a result, then that component is definitively not core.}

Consider a core component $Y_j$. Let $G_j$ be the vertices of $G$
which correspond to vertices of $U_j$. Since $G$ is a $3\eps$-edge-expander,
we know that
\[
  |E(G) \cap (G_j \times (G \setminus G_j))| \ge 3\eps
  \min(\vol_G(G_j), \vol_G(G\setminus G_j)),
\]
Thus, since each edge of $G$ becomes $6$ edges in $P$, we have that
\begin{align*}
  |E(P) \cap (U_j \times (V \setminus U_j))|
  &= 6|E(G) \cap (G_j \times (G \setminus G_j))|\\
  &\ge 18\eps
  \min(\vol_G(G_j), \vol_G(G\setminus G_j))\\
  &= 9\eps\min(\vol_P(U_j), \vol_P(V \setminus U_j))\\
  &\ge 9\eps\min(\vol_H(U_j), \vol_H(V \setminus U_j)).
\end{align*}
Also, note that by the definition of $E'$ (the edges deleted from $P$), \[|E'
\cap (U_j \times (P \setminus U_j))| \le \eps \min(\vol_P(U_j),
\vol_P(V \setminus U_j)) \le \frac{\eps}{1-\eps} \min(\vol_H(U_j),
\vol_H(V \setminus U_j)).\]
Thus, 
\[
|E(H) \cap (U_j \times (V \setminus U_j))| \ge \left(9 - \frac{1}{1-\eps}\right)\eps\min(\vol_H(U_j), \vol_H(V \setminus U_j))
\]

Since $U_j$ is atomic by Lemma \ref{lem:quasi-equals-core}, 
we can use Proposition~\ref{prop:core-cut}
to see that
\begin{align*}
  |E(H) \cap (U_j \times (H \setminus U_j))| \le \frac{5\eps}{1-\eps}
    \min(\vol_H(U_j), \vol_H(H \setminus U_j)).
\end{align*}
Thus, since $9-\frac{1}{1-\eps} > \frac{5 \eps}{1-\eps}$ 
for $\eps < \eps_0$, we must have that
$\min(\vol_H(U_j), \vol_H(H \setminus U_j)) = 0$.
\change{Every vertex of $H$ has at least one edge coming from it,
since in constructing $H$ at least a $(1-\epsilon)$ fraction of edges were preserved from the underlying tensor, 
and therefore $Y_j$ does
indeed cover all of $H$, as desired.}
\end{proof}

\subsection{Core Factoring Algorithm}

Leading up to the proof of Theorem~\ref{thm:main}, we now use Proposition~\ref{prop:core-color} to prove our Core
Factoring Lemma
(Lemma~\ref{lem:quasi-core-match}). Algorithm~\ref{alg:core-factoring}
outlines our approach to Lemma~\ref{lem:quasi-core-match}.
The proof of Lemma~\ref{lem:quasi-core-match} also relies on 
Propositions \ref{prop:deg}, \ref{prop:3-int}, and \ref{prop:link}, 
\hl{which are stated within the proof} of Lemma~\ref{lem:quasi-core-match}.

\change{For the algorithm, we define a matching on a weighted bipartite graph 
as the \emph{max-min matching} (also known as a \emph{bottleneck} matching)
to be the matching that
maximizes the minimum weight of any edge in the matching (e.g., \cite{punnen1994improved} and references therein).
One efficient algorithm for max-min matching 
is to sort the edges by weight
and iteratively delete the smallest one until no matching
remains. The last edge deleted is the objective value. This can be made more efficient with a binary search (c.f., \cite{bottleneck}).}

\begin{algorithm}[ht]
  \caption{ Core Factoring Algorithm}\label{alg:core-factoring}
  \begin{algorithmic}
    \small \State \textbf{Input:} A component $Y_j$ of
    $\mathcal{T}(C)$ and $U \subseteq U_j$ atomic.
    \State \textbf{Output:} \textsf{FAIL} or a
    factorization $K_3 \times \widetilde{G}_U$ of
    $\widetilde{H}_U$ on vertex set $U$ with maps
    $\psi^{U}_{K_3} : U \to K_3$ and
    $\psi^{U}_{G} : U \to \widetilde{G}_U$
    \State
\State Compute $\mathcal A_j, \mathcal B_j, \mathcal C_j \subseteq H[U_j]$
according to Proposition~\ref{prop:core-color}.
\State Let $\mathcal A = \mathcal A_j \cap U$, $\mathcal B = \mathcal
B_j \cap U$, $\mathcal C = \mathcal C_j \cap U$
\If{$\mathcal A, \mathcal B, \mathcal C$ have unequal sizes or is not a 3-coloring of
$H[U]$} \Return \textsf{FAIL} \EndIf
\State \hspace{-1cm} \textit{ // Max-min matching phase}
\State Construct graph
$(\mathcal{A}\dot{\cup }\mathcal{B}\dot{\cup } \mathcal{C}, E)$,
where $E = (\mathcal{A} \times \mathcal{B}) \cup (\mathcal B
\times \mathcal C)$.

\For{Each edge $(u,v)$ of $E$} assign weight
$2|I_H(u,v)| / \max\{|\Gamma_H(u)|, |\Gamma_H(v)|\}$. \EndFor

\State Find a max-min matching $\mathcal{M}_1$ between $\mathcal{A}$
and $\mathcal{B}$

\State Find a max-min matching $\mathcal{M}_2$
between $\mathcal{B}$ and $\mathcal{C}$

\If{objective value of either $\mathcal{M}_1$ or $\mathcal{M}_2$ is
less than $1 - 6\eps$} \Return \textsf{FAIL} \EndIf

\State \hspace{-1cm} \textit{ // Building the tensor decomposition}

\State Initialize $V(\widetilde{G}_U) = \{g_v : v \in \mathcal B\}$ and $V(\widetilde{H}_U) = U$.

\For{each $v$ in $\mathcal{B}$} \State Let $u \in \mathcal A$ be
unique $(u,v) \in \mathcal{M}_1$, $w \in \mathcal C$ be unique
$(v,w) \in \mathcal M_1$ \State Set $\psi_{K_3}^{U}(u) = a$,
$\psi_{K_3}^{U}(v) = b$, $\psi_{K_3}^{U}(w) = c$.  \State Set
$\psi_{G}^{U}(u) = \psi_G^{U}(v) = \psi_G^{U}(w) = g_v$.
\For{each $v'$ in $\mathcal B$} \State Let $u'$ be unique
$(u',v') \in \mathcal M_1$, $w'$ be unique $(v',w') \in \mathcal
M_2$. \If{any of $(u,v'),(u,w'),(v,u'),(v,w'),(w,u'),(w,v') \in H$}
\State Add $(g_v,g_{v'})$ to $E(\widetilde{G}_U)$ \State Add
$(u,v'),(u,w'),(v,u'),(v,w'),(w,u'),(w,v')$ to $E(\widetilde{H}_U)$
\EndIf \EndFor \EndFor \If{$|E(H[U]) \Delta E(\widetilde{H}_U)| \le
     260\eps |E(H[U])| + |U|$} \Return
$\widetilde{G}_U$,$\widetilde{H}_U$, $\psi_{K_3}^{U}$,
$\psi_G^{U}$ \Else\ \Return \textsf{FAIL}\EndIf
  \end{algorithmic}
\end{algorithm}

Recall for the next lemma that $U_j$ are the vertices of $H$ that 
are in some triangle in $Y_j$.
\begin{lemma}[Core Factoring Lemma]\label{lem:quasi-core-match}
  Fix $\epsilon < \eps_0.$ Fix $Y_j$ a component of $\mathcal
  T(C)$. Let $U$ be an atomic subset of $U_j$.
  In $O(|Y_j|^3)$ time, \change{Algorithm \ref{alg:core-factoring} outputs} either
  \textsf{FAIL} or a graph $\widetilde{H}_U$ on $U$
  with a factorization $K_3 \times \widetilde{G}_U$, as given by maps
  $\psi^{U}_{K_3} : U \to K_3$ and $\psi^{U}_{G} : U \to \widetilde{G}_U$
  with the following properties:
  \begin{itemize}
   \item[(a)] $\psi^{U}_{K_3}$ is a 3-coloring of $H[U]$.
   \item[(b)] $|E(H[U]) \Delta E(\widetilde{H}_U)| \le
     260\eps \vol_H(U) + |U|$.
   \item[(c)] If $Y_j$ is a core component, the algorithm
     will never output \textsf{FAIL}.
   \end{itemize}
  \end{lemma}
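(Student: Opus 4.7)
The plan is to handle the three properties in turn, with most of the work concentrated in part (c).

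\textbf{Parts (a) and (b)} are immediate from the algorithm. For (a), the explicit check that $\mathcal{A}, \mathcal{B}, \mathcal{C}$ form a valid 3-coloring of $H[U]$ (and that they have equal sizes) is a precondition for proceeding; thereafter $\psi^{U}_{K_3}$ is defined so as to inherit this partition. For (b), the algorithm only outputs $\widetilde{H}_U$ after explicitly verifying $|E(H[U]) \Delta E(\widetilde{H}_U)| \le 260\eps|E(H[U])| + |U|$, and since $|E(H[U])| \le \vol_H(U)/2$, the stated bound follows.

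\textbf{Part (c)} is the substantive work: when $Y_j$ is a core component, every check in the algorithm must pass. First, Proposition~\ref{prop:core-color} gives that $\mathcal{A}_j, \mathcal{B}_j, \mathcal{C}_j$ are precisely the color classes of $\psi^{-1}_{K_3}$ on $U_j$ up to permutation. Since $U$ is atomic and each core triangle contributes one vertex to each class, the restrictions $\mathcal{A}, \mathcal{B}, \mathcal{C}$ have equal size and form a 3-coloring of $H[U]$. For the matching threshold, consider the canonical matching that pairs vertices within each core triangle of $U$: by the proof of Proposition~\ref{prop:core-contained}, each such pair $(u,v)$ has $|I_H(u,v)| \ge (1-6\eps)|\Gamma_H(u)|/2$ with $\eps$-similar degrees, so the edge weight $2|I_H(u,v)|/\max(|\Gamma_H(u)|,|\Gamma_H(v)|)$ is at least $1-6\eps$. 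Hence both max-min matchings $\mathcal{M}_1, \mathcal{M}_2$ achieve objective $\ge 1-6\eps$.

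The \emph{main obstacle} is showing the final edge-difference check passes. Let $T_v = \{u_v, v, w_v\}$ denote the triple assigned to $v \in \mathcal{B}$ via $\mathcal{M}_1, \mathcal{M}_2$. The uniform $(1-6\eps)$ weight bound, combined with Propositions~\ref{prop:error} and \ref{prop: bound-on-G}, forces the three $G$-classes inside each $T_v$ to be pairwise $O(\eps)$-confusable, so every $T_v$ behaves essentially like a quasi-core triangle. Then for distinct triples $T_v, T_{v'}$, Proposition~\ref{prop:error} implies that the number of $H[U]$-edges between them is within $O(\eps)\max(\vol_H(T_v), \vol_H(T_{v'}))$ of either $0$ or $6$. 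The discrepancy between $H[U]$ and $\widetilde{H}_U$ (which adds all $6$ or none) can therefore be charged to two sources: edges of $E' = E(P)\setminus E(H)$ incident to $U$, which total at most $\eps \vol_P(U) \le \tfrac{\eps}{1-\eps}\vol_H(U)$, and the symmetric differences $|\Gamma_G(g) \setminus \Gamma_G(g')|$ across the three confusable $G$-classes of each triple, which by confusability are also $O(\eps)$ of the local degree. Each deleted edge is charged to only $O(1)$ pairs of triples, and the additive $|U|$ term absorbs the at most three intra-triple edges per $T_v$. Extracting the constant $260$ requires careful bookkeeping across these charges, but the basic per-pair accounting is straightforward once uniform confusability of each $T_v$ is in hand.
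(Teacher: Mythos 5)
Your overall plan mirrors the paper's proof: parts (a) and (b) are immediate from the algorithm's explicit checks, the max-min matching objective is lower bounded via the argument behind Proposition~\ref{prop:core-contained} when $Y_j$ is core, and the reconstruction error is charged to (i) deleted edges of $E'$ and (ii) the spread of the three hidden $G$-classes within each matched triple. That is indeed the paper's decomposition, so the high-level route is correct.

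However, the final step---``the basic per-pair accounting is straightforward once uniform confusability of each $T_v$ is in hand''---glosses over the two parts of the argument that actually carry the load, and as written the sketch has a gap that would not close without them. First, the matching objective only directly controls the pairs $(u,v) \in \mathcal{M}_1$ and $(v,w) \in \mathcal{M}_2$; it gives you nothing directly about the $(u,w)$ pair. So ``pairwise $O(\eps)$-confusable'' does not follow immediately from the weight bound. One first needs the degree-similarity statement (Proposition~\ref{prop:deg}), which transfers control from the two matched pairs to all three vertices of $T_v$, and only then can one bound the \emph{disjunction} $\Delta(g_u,g_v,g_w)$ (Proposition~\ref{prop:3-int}). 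Second, and more importantly, your charging step asserts ``each deleted edge is charged to only $O(1)$ pairs of triples,'' but the spurious edges of $\widetilde{H}_U$ arising from source (ii) are not charged to deleted edges at all---they are charged to elements of the disjunctions. The key observation that makes this finite is missing from your sketch: if $T_v$ and $T_{v'}$ are weakly linked purely because of the tensor structure (and not because of deletions), then some $g$ in the disjunction of one triple is a $G$-class of the other; and a fixed $g \in \Delta(g_u,g_v,g_w)$ can occur as a $G$-class in at most \emph{three} other matched triples, namely those containing $\psi^{-1}(a,g)$, $\psi^{-1}(b,g)$, $\psi^{-1}(c,g)$. Without this multiplicity bound the sum over pairs of triples is not controlled by $\sum_v |\Delta(g_u,g_v,g_w)|$, and the $O(\eps\vol_H(U))$ bound does not follow. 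So the approach is the right one, but the ``straightforward bookkeeping'' is in fact the content of Propositions~\ref{prop:deg}--\ref{prop:link}, and your sketch would need to supply both the degree-similarity step and the per-disjunction-element multiplicity argument to be complete.
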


  \begin{proof} The reader may use Algorithm \ref{alg:core-factoring}
    as an outline for this proof. Using
    Proposition~\ref{prop:core-color}, we can find
    $\mathcal A_j, \mathcal B_j, \mathcal C_j \subseteq U_j$ which
    cover every triangle of $Y_j$. Note that if $Y_j$ is core, then
    $|\mathcal A_j| = |\mathcal B_j| = |\mathcal C_j|$ by
    Lemma~\ref{lem:quasi-equals-core}, as $U_j$ can be partitioned by
    the core triangles of $Y_j$. Further, if $Y_j$ is core, then
    $\mathcal A_j, \mathcal B_j, \mathcal C_j$ induce a 3-coloring
    of $H[U_j]$. Let $\mathcal A = \mathcal A_j \cap U$,
    $\mathcal B = \mathcal B_j \cap U$, and
    $\mathcal C = \mathcal C_j \cap U$. Since $U$ is atomic,
    $\mathcal A_j, \mathcal B_j, \mathcal C_j$ each intersect each
    core triangle of $U$ in exactly one vertex. Thus, the partition from
    $\mathcal A, \mathcal B, \mathcal C$ induces a 3-coloring of
    $H[U]$ and each set has equal size. 
    We can safely report
    \textsf{FAIL} if one of these conditions fails to hold.

    \paragraph{Max-min matching phase} Consider the tripartite graph on $U$ partitioned into
    $\mathcal{A},\mathcal{B},$ and $\mathcal{C}$, with edge set
    $(\mathcal A \times \mathcal B) \cup (\mathcal B \times \mathcal C)$.
  For every edge $(u,v)$ in this tripartite graph, assign a weight
    to it of $2|I_H(u,v)| / |\Gamma_H(u)|$, for
    $|\Gamma_H(u)| \geq |\Gamma_H(v)|$.  
    We \hl{find a max-min}
    matching between sets $\mathcal{A}$ and
    $\mathcal{B}$,  $\mathcal M_1$, and between sets $\mathcal{B}$ and
    $\mathcal{C}$, $\mathcal M_2$. 
    Define the \emph{quality} of the \change{matchings} on
    $(\mathcal{A}, \mathcal{B},\mathcal{C})$ to be the minimum
    objective of the max-min matchings $\mathcal M_1$ and
    $\mathcal M_2$.

    We now show that if $Y_j$ is core, the \change{quality of the matchings} is at least
    $1-6\eps$. If node $u$ with $\psi(u) = (t,g)$ is in \change{$\mathcal{A}$,}
    then, by Lemma~\ref{lem:quasi-equals-core}, for distinct
    $t',t'' \neq t$, one of $\psi^{-1}(t',g), \psi^{-1}(t'',g)$ is in
    $\mathcal{B}$ and the other is in $\mathcal{C}$.  Letting $v$ be
    such that $\psi(v) \in \{(t',g), (t'',g)\}$, recall Equation
    \ref{eqn: core-big-int} states that
    \[\change{\left (1-6 \eps \right ) \cdot \max(|\Gamma_H(u)|,
    |\Gamma_H(v)|)/2 \leq |I_H(u,v)|}.\]  This implies that the quality
    of the matchings is at least $1-6 \epsilon$, since that value is
    achievable when the matching edges are all a subset of the core
    triangle edges . Thus, if the quality is less than $1-6\eps$, we
    may safely output \textsf{FAIL}, as $Y_j$ cannot be core.
  
    \paragraph{Building the tensor decomposition} It remains to build
    $\widetilde{G}_U$ and $\widetilde{H}_U$. For each
    $v \in \mathcal B$, let $g_v$ be a vertex of
    $V(\widetilde{G}_U)$. We let the vertices of $\widetilde{H}_U$ be
    $U$.
    For each $v \in \mathcal B$, let $T_v$ be the triangle
    $\{u,v,w\}$, where $u \in \mathcal A$ is such that $(u,v) \in
    \mathcal M_1$ and $w \in \mathcal C$ is such that $(v,w) \in
    \mathcal M_2$. By definition of $\mathcal M_1$ and $\mathcal M_2$,
    the $T_v$'s partition $U$.

    For every pair $v, v' \in \mathcal B$, we add $(g_v, g_{v'})$ to
    $E(\widetilde{G}_U$) if $T_v$ and $T_{v'}$ share at least one edge
    in $H$. Likewise, if $T_v$ and $T_{v'}$ share at least one edge in
    $H$, we add $(x,y)$ to $E(\widetilde{H}_U)$ for all $x \in T_v$
    and $y \in T_{v'}$ with $x$ and $y$ elements of distinct
    $\mathcal A, \mathcal B, \mathcal C$.  For each $u \in \mathcal A$, we define $\psi^{U}_{K_3}(u) =
    a$. Likewise, for each $v \in \mathcal B$, $\psi^{U}_{K_3}(v) = b$ and
    each $w \in \mathcal C$, $\psi^{U}_{K_3}(w) = c$. For each $v \in
    \mathcal B$, we define $\psi^{U}_G(x) = g_v$ for all $x \in T_v$. Note by definition
    $\widetilde{H}_U = K_3 \times \widetilde{G}_U$ as induced by
    $\psi_{K_3}^{U}$ and $\psi_{G}^{U}$. 

    \paragraph{Bounding reconstruction error} It is clear at this
    point that if the output does not \textsf{FAIL} then conditions (a),
    (b), (c) of the lemma hold. \change{Note in particular that condition (b) holds due to the line preceding the \textsf{else} statement.}
     It suffices to prove that the
    algorithm does not return \textsf{FAIL} when $Y_j$ is a core
    component. In particular, it suffices to check condition (b) when
    $Y_j$ is core. Observe that
    \begin{align*}
      |E(\widetilde{H}_U) \Delta E(H_U)|
      &= |E(\widetilde{H}_U) \setminus E(H_U)| + |E(H_U) \setminus E(\widetilde{H}_U)|.
    \end{align*}
    Note that $(x, y) \in E(H_U) \setminus E(\widetilde{H}_U)$ only if
    $x,y \in T_v$ for some $v \in \mathcal B$, as any edge between
    different $T_v$'s is accounted for in $\widetilde{H}_U$. Thus,
    $|E(H_U) \setminus E(\widetilde{H}_U)| \le 3|\mathcal B| = |U|.$

    Now, we bound $|E(\widetilde{H}_U) \setminus E(H_U)|$. 
    For simplicity of the remainder of the argument, we assume without
    loss of generality that $\mathcal A = \psi^{-1}_{K_3}(a) \cap U,
    \mathcal B = \psi^{-1}_{K_3}(b) \cap U,$ and $\mathcal C = \psi^{-1}_{K_3}(c) \cap U$
    (recall by Proposition~\ref{prop:core-color} that this is always true
    up to permutation of $\mathcal A, \mathcal B, \mathcal C$). We
    observe the following structural property about the matched triangles.

    \begin{proposition}\label{prop:deg}
      \change{Assume $Y_j$ is a core component, with $\mathcal A, \mathcal B,
      \mathcal C$ its color classes when restricted to $U$.} Fix
      $T_v = \{u \in \mathcal A,v \in \mathcal B,w \in \mathcal
      C\}$ found by the matching algorithm. Then,
      \[(1-14\eps)\max(|\Gamma_H(u)|, |\Gamma_H(v)|, |\Gamma_H(w)|)
        \le \min(|\Gamma_H(u)|, |\Gamma_H(v)|, |\Gamma_H(w)|).\]
    \end{proposition}
    
        \begin{proof}[Proof of Proposition~\ref{prop:deg}]
          Note that by the defining property of $\mathcal M_1$ and
          $\mathcal M_2$, we have that
          \begin{align}
            (1 - 6\eps) \max(|\Gamma_H(u)|, |\Gamma_H(v)|) / 2 &\le
                                                                 |I_H(u,v)|
            \label{eq:uv}\\
            (1 - 6\eps) \max(|\Gamma_H(v)|, |\Gamma_H(w)|) / 2 &\le
                                                                 |I_H(v,w)|\label{eq:vw}
          \end{align}
          Further, since $\psi_{K_3}(u) = a, \psi_{K_3}(v) = b,
          \psi_{K_3}(w) = c$, we have that any vertex $x \in I_H(u,v)$
          must have $\psi_{K_3}(x) = c$. Thus,
          \begin{align*}
            |I_H(u,v)| &\le \min(|\Gamma_H(u) \cap \psi^{-1}_{K_3}(c)|,
            |\Gamma_H(v) \cap \psi^{-1}_{K_3}(c)|)\\
                       &\le \min(|\Gamma_P(u) \cap \psi^{-1}_{K_3}(c)|,
                         |\Gamma_P(v) \cap \psi^{-1}_{K_3}(c)|)\\
            &= \frac{1}{2}\min(|\Gamma_P(u)|, |\Gamma_P(v)|)\\
                       &\le
            \frac{1}{2(1-\eps)} \min(|\Gamma_H(u)|, |\Gamma_H(v)|).
          \end{align*}
          We can further show that $2(1-\eps)|I_H(v,w)| \le
          \min(|\Gamma_H(v)|, |\Gamma_H(w)|)$. Thus, when combined with
          (\ref{eq:uv}) and (\ref{eq:vw}), we get that
          \begin{align*}
            (1 - 7\eps) \max(|\Gamma_H(u)|, |\Gamma_H(v)|) &\le
            \min(|\Gamma_H(u)|, |\Gamma_H(v)|),\\
            (1 - 7\eps) \max(|\Gamma_H(v)|, |\Gamma_H(w)|) &\le
            \min(|\Gamma_H(v)|, |\Gamma_H(w)|).
          \end{align*}
          By suitably combining these two inequalities, we can further
          deduce that \[(1-14\eps)\max(|\Gamma_H(u)|,|\Gamma_H(w)) \le
            \min(|\Gamma_H(u)|,|\Gamma_H(w)|).\]
          To see why the result follows, assume that $x \in \{u,v,w\}$
          has the highest degree and $y \in \{u,v,w\}$ has the lowest. If
          $x = y$, the result is obvious. Otherwise, we pick the
          inequality of the previous three featuring $x$ and $y$.
        \end{proof}
    
    Given vertices $g_1, g_2, g_3 \in G$, define their \emph{disjunction}
    $\Delta(g_1,g_2,g_3)$ to be
    \[
      \Delta(g_1,g_2,g_3) = |(\Gamma_G(g_1) \cup \Gamma_G(g_2) \cup
      \Gamma_G(g_3)) \setminus I_G(g_1,g_2,g_3)|
    \]
    We give a bound on the disjunction of the triangles we matched.

    \begin{proposition}\label{prop:3-int}
      Assume $Y_j$ is a core component, with $\mathcal A, \mathcal B,
      \mathcal C$ its color classes when restricted to $U$.
      Fix
      $T_v = \{u \in \mathcal A,v \in \mathcal B,w \in \mathcal
      C\}$. Let $g_u, g_v, g_w \in G$ be such that $u = \psi^{-1}(a,g_u), v =
      \psi^{-1}(b,g_v), w = \psi^{-1}(c,g_w)$. Then,
      \[
        |\Delta(g_u,g_v,g_w)| \le 50\eps\min(|\Gamma_H(u)|,|\Gamma_H(v)|,|\Gamma_H(w)|).
      \]
    \end{proposition}

    \begin{proof}[Proof of Proposition~\ref{prop:3-int}]
      Note that
      \[
        |\Delta(g_u,g_v,g_w)| \le |\Gamma_G(g_u)| + |\Gamma_G(g_v)| +
        |\Gamma_G(g_w)| - 3|I_G(g_u, g_v, g_w)|.
      \]
      Further,
      \[
        |I_G(g_u, g_v, g_w)| \ge |\Gamma_G(g_v)| - |\Gamma_G(g_v)
        \setminus I_G(g_u,g_v)| - |\Gamma_G(g_v) \setminus I_G(g_v, g_w)|. 
      \]
      See that via (\ref{eq:uv})
      \begin{align*}
        |\Gamma_G(g_v) \setminus I_G(g_u,g_v)| &=
        \frac{1}{2}|\Gamma_P(v)| - |I_P(u, v)|\\ &\le
        \frac{1}{2(1-\eps)}|\Gamma_H(v)| - |I_H(u,v)|\\ &\le
        \left(\frac{1}{2-2\eps} -
                                                          \frac{1-6\eps}{2}\right)|\Gamma_H(v)|\\
                                               &\le 4\eps |\Gamma_H(v)|,
      \end{align*}
      where the last line follows for $\eps < \eps_0$.
      Thus, working out the symmetric bound for $|\Gamma_G(g_v)
      \setminus I_g(g_v, g_w)|$, we have that
      \begin{align*}
        |\Delta(g_u,g_v,g_w)| &\le |\Gamma_G(g_u)| + |\Gamma_G(g_w)| -
        2|\Gamma_G(g_v)| + 24 \eps |\Gamma_H(v)|\\
        &\le \frac{1}{2(1-\eps)} |\Gamma_H(u)| +
          \frac{1}{2(1-\eps)}|\Gamma_H(w)| - (1 - 24\eps) |\Gamma_H(v)|.
      \end{align*}
      Further using Proposition~\ref{prop:deg}, we get that
      \begin{align*}
        |\Delta(g_u,g_v,g_w)| &\le
        \frac{1}{(1-\eps)(1-14\eps)}\min(|\Gamma_H(u)|,|\Gamma_H(v)|,|\Gamma_H(w)|)\\
        &\ \ \ \ \ \ \ \ \ - (1 - 24\eps)\min|\Gamma_H(u)|,|\Gamma_H(v)|,|\Gamma_H(w)|)\\ &\leq 50 \eps\min(|\Gamma_H(u)|,|\Gamma_H(v)|,|\Gamma_H(w)|),
      \end{align*}
      for $\eps < \eps_0$, as desired.
    \end{proof}

    Recall from Definition \ref{def:compatible} that triangles
    $T_v = \{u, v, w\}$ and $T_{v'} = \{u',v',w'\}$ (with
    $u,u' \in \mathcal A,$ $v,v' \in \mathcal B$,
    $w,w'\in \mathcal C$) are compatible if
    \[(u,v'),(u,w'),(v,u'),(v,w'),(w,u'),(w,v') \in E(H).\] 
    We will call triangles
    $T_v$ and $T_{v'}$ \emph{weakly linked} if at least one but
    not all of those six pairs is an edge of $H$. 
    The edges in $E(\widetilde{H}_U) \setminus E(H_U)$ come from 
    weakly linked triangles. 
    Thus, \hl{we can upper bound} $|E(\widetilde{H}_U) \setminus E(H_U)|$ with
    5 times the number of weakly linked triangles, 
    as each pair of weakly linked triangles contributes at most 5 edges to $E(\widetilde{H}_U) \setminus E(H_U)$.

    \begin{proposition}\label{prop:link}
      Assume $Y_j$ is a core component, with
      $\mathcal A, \mathcal B, \mathcal C$ its color classes when
      restricted to $U$.  There are at most $52\eps \vol_H(U)$ weakly
      linked pairs of triangles. 
    \end{proposition}

    \begin{proof}
      \change{Assume that
      $T_v =\{u,v,w\}$ and 
      $T_{v'} = \{u',v',w'\}$ are weakly linked.}
      They can be weakly
      linked for two reasons. The first is that $T_v$ and $T_{v'}$ are
      compatible in $P[U]$, but at least one edge was deleted going
      from $P$ to $H$. The second case is that $T_v$ and $T_{v'}$ are
      also weakly linked in $P$. For the first type, since at most
      $2\eps \vol_H(U)$ edges were deleted in $P[U]$, this is also
      an upper bound on the number of weakly linked pairs of triangles of
      this type. Thus, the remainder of the proof is bounding the
      number of pairs of the second type.

      Let $\psi_G(u) = g_{u}$, and so
      forth. Since $T_v$ and $T_{v'}$ are weakly linked, we know that
      at least one (but not all) of
      $(g_u,g_{v'}), (g_u, g_{w'}), (g_v,g_{u'}), (g_v,g_{w'}), (g_w,
      g_{u'}), (g_w, g_{v'})$ is an edge of $G$. From this, we can
      deduce that either one of $g_u, g_v, g_w$ is contained in
      $\Delta(g_{u'},g_{v'},g_{w'})$ or one of
      $g_{u'}, g_{v'}, g_{w'}$ is contained in
      $\Delta(g_u, g_v, g_w)$. Further, a single
      $g \in \Delta(g_u,g_v, g_w)$ can \change{appear in} at most $3$ other
      triangles--namely those that the vertices
      $\psi^{-1}(a,g)$, $\psi^{-1}(b,g)$, $ \psi^{-1}(c,g)$ appear
      in. Therefore, we can upper bound the number of weakly linked
      triangles of this type by
      \[
        \sum_{\substack{v \in \mathcal B\\\{u,v,w\} = T_v}} 3|\Delta(g_u, g_v, g_w)|
      \]
      Using Proposition~\ref{prop:3-int}, we can upper bound this
      quantity by 
      \[
        \sum_{\substack{v \in \mathcal B\\\{u,v,w\} = T_v}} 50\eps
        (\Gamma_H(u) + \Gamma_H(v) + \Gamma_H(w)) = 50\eps \vol_H(U),
      \]
      where we use the fact that
      $\mathcal A \cup \mathcal B \cup \mathcal C$ is a partition of
      $U$. Therefore, there are at most $52\eps \vol_H(U)$ weakly
      linked pairs of triangles.
    \end{proof}
    
    Thus, by Proposition~\ref{prop:link}, we have that
    $|E(\widetilde{H}_U) \setminus E(H_U)| \le
    5\cdot 52\eps \vol_H(U)$. Therefore,
    $|E(\widetilde{H}_U) \Delta\\E(H_U)| \le 260\eps
    \vol_H(U) + |U|$ as desired.

    For the runtime, note that each phase of the algorithm can be done in
    $O(|Y_j|^3)$ time, including finding the max-min matching
    by performing a binary search on the bottleneck edge weight and
    then using a bipartite matching algorithm~\cite{CLRS}. 
  \end{proof}

\subsection{Proof of Theorem~\ref{thm:main}}

\begin{algorithm}[ht]
  \caption{Main Algorithm}\label{alg:main}
  \begin{algorithmic}
    \small
    \State \textbf{Input:} The graph $H$
    \State \textbf{Output:} Graph $\widetilde{H} = K_3 \times
\widetilde{G}$ on $V(H)$ with $ |E(H) \triangle E(\widetilde{H})| \leq  O(\epsilon |E(H)| )$, and $\widetilde{G}$
\State
    \State Construct graphs $C$, $\mathcal T(C)$. Compute connected
    components of $\mathcal{T}(C)$, $Y_1,\ldots, Y_\ell$
    \State Set $S = V(H)$ and $\mathcal F = \{\}$
    \For{each $Y_j$ in $Y_1, \hdots, Y_\ell$}
    \State Let $U_j$ be vertices induced by $Y_j$. 
    \If{two triangles of $Y_j$ share a common vertex} go to $Y_{j+1}$
    in for loop \EndIf 
    \State Run Algorithm \ref{alg:core-factoring} on $Y_j$ with $U =
    U_j \cap S$.
    \If{output is \textsf{FAIL}} go to $Y_{j+1}$ in for loop \EndIf
    \If{$|E(H) \cap (U \times (S \setminus U))| \ge
      5\eps/(1-\eps)\min(\vol_H(U), \vol_H(S \setminus U))$} go to
    $Y_{j+1}$ in for loop \EndIf
    \State Add factorization of $H[U]$ to $\mathcal F$.
    \State Set $S \leftarrow S \setminus U$.
    \EndFor
    \Return disjoint union of $\mathcal F$.
  \end{algorithmic}
\end{algorithm} 

We are now ready to prove Theorems \ref{thm:main}.

\begin{proof}[Proof of Theorem \ref{thm:main}]
  We set $S = V(H)$ and loop over the components $Y_j$ \change{of}
  $\mathcal T(C)$. We shall maintain the invariant that $S$ is
  atomic. For each component, we run
  Algorithm~\ref{alg:core-factoring} with $U = U_j \cap S$ (assuming
  $U \neq \emptyset$\change{)}, which by
  Lemma~\ref{lem:quasi-core-match} must either output \textsf{FAIL} or a
  factoring of $K_3 \times \widetilde{G}_U$ of $H[U]$ with error
  bounded by $260\eps \vol_H(U) + |U|$. If \textsf{FAIL} is output, we
  know that $Y_j$ cannot be core, so we can continue.

  By Proposition~\ref{prop:core-cut}, we know that if $Y_j$ is core
  then
  \change{$|E(H) \cap (U \times (S \setminus U))| \le\frac{5\eps}{1-\eps}
  \min(\vol_H(U),\vol_H(S \setminus U))$}. Thus, we can throw out
  any $Y_j$ which does not have this property.

  Since each vertex of $S$ is covered by a core component, we shall
  accept at least one $Y_j$, we can replace
  $S$ with $S \setminus U$ and continue in the loop. Note that by the
  time we loop over all the $Y_j$'s, we must have $S = \emptyset$,
  because each vertex of $S$ is included in some core
  component. Further, at each step $U_j$ is atomic so $U_j \cap S$ is
  atomic and $S \setminus U_j$ is also atomic. Thus, the invariants
  are maintained throughout the algorithm.

  Our final factorization is the disjoint union of all the
  factorizations found throughout the algorithm. Let $\widetilde{H}$
  be this graph, and let $U^{(1)}, \hdots,
  U^{(\ell)}$ be the vertices which induce the components of
  $\widetilde{H}$. The total error is then \change{(by Proposition \ref{prop:core-cut} and Lemma~\ref{lem:quasi-core-match})}
  \begin{align*}
    |E(\widetilde{H}) \Delta E(H)| &\le \sum_{i = 1}^{\ell}
                                     \left(|E(\widetilde{H}[U^{(i)}]) \Delta
                                     E(H[U^{(i)}])| +
                                     |E(H) \cap (U^{(i)} \times
                                     (U^{(i+1))} \cup \cdots \cup U^{(\ell)})|\right)\\
                                   &\le \sum_{i=1}^{\ell} \left(260 \eps \vol_H(U^{(i)}) +
                                     |U^{(i)}| +
                                     \frac{5\eps}{1-\eps}\vol_H(U^{(i)})\right)\\
                                    &\change{\le \sum_{i=1}^{\ell} \left(\left (520 \eps +\frac{10\eps}{1-\eps}\right )|E(H[U^{(i)}])|+
                                    |U^{(i)}| \right) =  \left (520 \eps +\frac{10\eps}{1-\eps}\right ) |E(H)|+
                                    |V(H)| }  \\
                                   &\le 550 \eps |E(H)|,
  \end{align*} 
  in the last line, we use that the sum of the degrees of the vertices of
  $H$ is twice the number of edges and the fact that the $|V(H)|$ term
  can be absorbed into the constant \change{for $\eps =\Omega( |V(H)|/|E(H)|)$.}

  For the runtime, computing the connected components of
  $\mathcal T(C)$ takes $O(n^6)$ time. Consider iteration $j$ of the
  loop. Note that checking if two triangles of $Y_j$ overlap takes
  $O(n)$ time at most. If $|Y_j| > n$, then the loop iteration takes
  $O(|Y_j|) = O(n)$ time, as the unique coverage of the vertices of
  $U$ must fail. Otherwise, if $|Y_j| \le n$, the loop iteration takes
  $O(|Y_j|^3 + n^2) = O(n^3)$ time as each $|Y_j| = O(n)$.  We know
  that $\ell = O(n^3)$, so the runtime is $O(n^6)$. 
\end{proof}

We remark that the runtime of our algorithm does not depend on $\eps$.
If one does not know $\eps$ before running the algorithm and is only given a graph $H$, 
one can binary search on its possible values to find a value for $\eps$ 
close to the smallest one for which $H$ is $\eps$-near some triangle tensor.
This only increases the runtime by a factor of $\log(1/\eps)$.

\subsection{Extensions}\label{sec:ext}

There are a \change{number of settings in which we believe Theorem~\ref{thm:main} and
Theorem~\ref{thm:no-sparse-cuts} can be extended. 
One such setting is allowing $G$ to be a small-set expander, and we discuss this below in detail.
We propose a few other settings, which require more formal study in order to obtain a reconstruction goal, in Section~\ref{sec:conclusion}.}

\paragraph{$G$ is a small-set expander} Recall that for
Theorem~\ref{thm:no-sparse-cuts}, we prove that $H$ can be efficiently
$3$-colored, if $G$ is a $3\eps$-edge-expander by showing that there
can be at most one core triangle component, which we can efficiently
3-color by Lemma~\ref{lem:quasi-core-match}. If instead a graph
$H$ has the property that there are at most $k$ core components, we
can in $n^{O(k)}$ time brute force guess which $k$ are the core ones,
determine their colorings and combine them to color $H$\footnote{We leave open the question of finding (or ruling out) \change{an FPT algorithm.
 That is, one with a runtime of
  $n^{O(1)} + f(k)$, where $f$ does not depend on $n$.}}.

We can prove such a property for $H$ with a broader class of graphs
$G$, which are known as \emph{small set expanders}. We say that $G$ is
a $(\delta, \alpha)$-small set expander if for every set $S$ of size at
most $\delta \cdot n$, we have that
\[
  |E(G) \cap S \times (V(G) \setminus S)| \ge \alpha \vol_G(S).
\] 

Using methods similar to the proof of
Theorem~\ref{thm:no-sparse-cuts}, we can show that if $G$ is a $(1/k,
\Omega(\eps))$-small set expander, then $H$ has at most $k$ core
components, and thus can be 3-colored in $n^{O(k)}$ time.

One interesting example here is the $\beta$-noisy hypercube, the
graph on $\{0,1\}^\ell$, for $n = 2^\ell,$ where nodes are adjacent if and
only if their hamming distance is $\beta \ell$.  This graph is a
$(1/\log(n), 1/2)$-small set expander (see for instance, \cite{zhao2021generalizations}), and we can obtain a 3-coloring in
quasi-polynomial time with our algorithm.

\section{Hardness}\label{sec: hardness}
In this section, we will show that certain natural tensor
reconstruction criteria are in fact NP-hard to achieve and prove
Theorem~\ref{thm:no-3-col}. As an immediate corollary of
Theorem~\ref{thm:no-3-col}, this rules out a \emph{monotone
  reconstruction} for tensor graphs.

\begin{corollary}
  Given as input an $H$ which is $\eps$-near $K_3 \times G$
  for some $G$, it is \textsf{NP}-hard to find a graph
  $\widetilde{H} = K_3 \times \widetilde{G}$ on the same vertex set as $H$ with
  $\widetilde{H} \supseteq H$.
\end{corollary}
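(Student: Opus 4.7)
The plan is to reduce from Theorem~\ref{thm:no-3-col}, which states that 3-coloring such near-tensor graphs $H$ is already \textsf{NP}-hard. The core observation is that any graph of the form $K_3 \times \widetilde{G}$ is canonically 3-colorable: the projection $\pi_{K_3} \colon V(K_3) \times V(\widetilde{G}) \to V(K_3)$ sending $(t, g) \mapsto t$ is a valid 3-coloring by the very definition of the tensor product, since an edge $((t,g),(t',g'))$ requires $(t,t') \in E(K_3)$ and hence $t \neq t'$.

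First, I would argue that any proper coloring of a supergraph restricts to a proper coloring of a subgraph: if $\widetilde{H} \supseteq H$ and $c$ is a proper coloring of $\widetilde{H}$, then for every $(u,v) \in E(H) \subseteq E(\widetilde{H})$ we have $c(u) \neq c(v)$, so $c$ is also a proper coloring of $H$. Combining these two observations, given any tensor supergraph $\widetilde{H} = K_3 \times \widetilde{G}$ of $H$ on the same vertex set, we immediately extract a proper 3-coloring of $H$ by reading off the $K_3$ coordinate of each vertex; this takes linear time in the size of the output.

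Second, I would conclude the hardness by contraposition. Suppose there were a polynomial-time algorithm $\mathcal{A}$ that, on input $H$ which is $\eps$-near $K_3 \times G$ for some $G$, produces a tensor supergraph $\widetilde{H} = K_3 \times \widetilde{G}$ with $\widetilde{H} \supseteq H$ and $V(\widetilde{H}) = V(H)$. Then composing $\mathcal{A}$ with the extraction step above yields a polynomial-time algorithm for producing a valid 3-coloring of $H$. This directly contradicts Theorem~\ref{thm:no-3-col}, so no such algorithm $\mathcal{A}$ can exist unless $\textsf{P} = \textsf{NP}$.

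There is essentially no obstacle here beyond correctly invoking Theorem~\ref{thm:no-3-col}; the corollary is ``immediate'' in the sense that the monotone reconstruction problem is strictly harder than extracting a 3-coloring. The only minor subtlety is verifying that the extracted 3-coloring indeed uses at most three colors on $V(H)$, which is automatic because $V(\widetilde{H}) = V(H)$ and every vertex of a $K_3$-tensor receives one of three $K_3$-labels.
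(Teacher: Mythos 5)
Your proof is essentially correct and takes the same route as the paper: a tensor supergraph $\widetilde{H} = K_3 \times \widetilde{G}$ is $3$-colorable by its $K_3$-coordinate, this coloring restricts to $H \subseteq \widetilde{H}$, and hence finding $\widetilde{H}$ efficiently would contradict Theorem~\ref{thm:no-3-col}. One small gap, which the paper patches by invoking Imrich's factorization algorithm: you ``read off the $K_3$ coordinate of each vertex,'' but the algorithm is only asked to output the \emph{graph} $\widetilde{H}$ on $V(H)$, not a labeling of its vertices by pairs $(t,g) \in V(K_3)\times V(\widetilde{G})$; to extract the $3$-coloring one must first recover such a factorization of $\widetilde{H}$, which is where Imrich's polynomial-time algorithm comes in (or, equivalently, you would need to explicitly add the assumption that the putative reconstruction algorithm outputs the factorization as well).
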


\begin{proof}
  Assume we found such an $\widetilde{H}$. Then by
  applying Imrich's algorithm, we can factor a copy of $K_3$ out of
  $\widetilde{H}$ and thus determine a 3-coloring of
  $\widetilde{H}$. Since $\widetilde{H} \supseteq H$, this is also a
  3-coloring of $H$. Thus, by Theorem~\ref{thm:no-3-col}, finding
  $\widetilde{H}$ is \textsf{NP}-hard.
\end{proof}

\subsection{\change{3-Coloring with Equality}}

\change{We show that Theorem~\ref{thm:no-3-col} follows from the hardness of a problem we call ``3-coloring with equality.''}\footnote{\change{In the hardness of approximation community, this problem can be viewed as a special case of a problem known as \emph{Label Cover}}.} \change{An instance of 3-coloring with equality consists of a set of vertices with two edge sets: $(V, E_{\neq}, E_{=})$. We call the edges $E_{\neq}$ \emph{coloring constraints} and the edges $E_{=}$ \emph{equality constraints}. An \emph{assignment} is a map $\psi : V \to [3]$. We say this assignment \emph{satisfies} the instance if for all $\{u,v\} \in E_{\neq}$, we have that $\psi(u) \neq \psi(v)$ and for all $\{u,v\} \in E_{=}$ we have that $\psi(u) = \psi(v)$.}

\change{We say that a vertex $v \in V$ is \emph{$\eps$-loose} if} at most an $\eps$ fraction of its neighboring constraints are coloring constraints. \change{We say an instance of 3-coloring with equality is $\eps$-loose if every vertex is $\eps$-loose.} We now show that \change{3-coloring with equality} is
NP-hard even when \change{the instance is} $\eps$-loose for small $\eps$.

\begin{lemma}\label{lem:eps-coloring}
  Assume that $\eps \in (0,1)$ is constant. Given an instance
  $(V,E_{\neq},E_{=})$ of \change{3-coloring with equality} which is
  $\eps$-loose, it is NP-hard to determine if the instance is
  satisfiable.
\end{lemma}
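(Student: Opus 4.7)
The plan is to reduce from the standard (NP-hard) 3-coloring problem. Given an input 3-coloring instance $G = (V, E)$ with maximum degree $\Delta$, I will construct an $\eps$-loose 3-coloring with equality instance $(V', E_{\neq}, E_{=})$ that is satisfiable if and only if $G$ is 3-colorable. The core idea is to ``dilute'' every vertex of $G$ into many copies joined by equality constraints, so that each copy carries at most one coloring constraint amid a large number of equality constraints.

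Concretely, fix $M = \max(\Delta, \lceil 1/\eps \rceil)$. For every $v \in V$, introduce $M$ fresh copies $v^{(1)}, \ldots, v^{(M)}$ into $V'$, and put the complete graph on these copies into $E_{=}$. This guarantees that in any satisfying assignment $\psi$, all copies of a given $v$ receive the same color, which I will call $\chi(v)$. Next, for each original edge $\{u,v\} \in E$, select a copy $u^{(i)}$ and a copy $v^{(j)}$ not already used to carry a coloring constraint, and add $\{u^{(i)}, v^{(j)}\}$ to $E_{\neq}$. Since each vertex $v \in V$ has at most $\Delta \le M$ incident edges in $G$, this allocation is feasible and ensures every copy participates in at most one coloring constraint.

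To verify $\eps$-looseness, observe that each copy $v^{(i)}$ has exactly $M-1$ equality neighbors and at most $1$ coloring neighbor, so the fraction of coloring constraints at that vertex is at most $1/M \le \eps$. For correctness, completeness is immediate: if $\chi : V \to [3]$ is a proper 3-coloring of $G$, then $\psi(v^{(i)}) := \chi(v)$ satisfies every equality constraint and, since $\chi(u) \ne \chi(v)$ for every $\{u,v\} \in E$, also every coloring constraint. For soundness, any satisfying $\psi$ is constant on the copies of each $v$ (because the copies form a clique in $E_{=}$), so the induced map $\chi$ is well-defined, and the coloring constraints corresponding to each $\{u,v\} \in E$ force $\chi(u) \ne \chi(v)$. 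The construction is polynomial in $n$ (with $\eps$ treated as a constant), completing the reduction.

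I do not foresee a significant technical obstacle; the argument is a routine gadget reduction, and the only mild subtlety is balancing the parameter $M$ so that it simultaneously (i) is large enough to distribute the original coloring constraints one per copy and (ii) produces a small enough coloring-to-equality ratio at every vertex. Both requirements are easily handled by the single choice $M = \max(\Delta, \lceil 1/\eps \rceil)$.
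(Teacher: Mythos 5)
Your proof is correct and uses the same essential idea as the paper: reduce from 3-coloring by diluting each original vertex's coloring constraints with a large number of equality constraints. The gadget differs in detail. The paper keeps each original vertex $v$ in place (so it retains all $d_v$ coloring constraints) and attaches $\lceil d_v/\eps\rceil$ pendant ``cloud'' vertices to $v$ by a star of equality constraints; the cloud vertices have one equality constraint each and no coloring constraints. You instead replace $v$ by $M = \max(\Delta, \lceil 1/\eps\rceil)$ copies linked by a clique of equality constraints, and spread the $d_v$ coloring constraints so each copy carries at most one.

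Both gadgets give $\eps$-looseness and the correct completeness/soundness, so the result holds either way. The paper's star gadget is more economical (it adds $O(|E|/\eps)$ new equality edges rather than the $\Theta(n M^2)$ your cliques introduce, and the cloud size $n_v$ adapts to the local degree rather than using the global maximum degree $\Delta$), and it is slightly simpler to analyze since the original edges and vertices remain untouched. Your version has the modest advantage that every vertex in the new instance sees at most one coloring constraint, and a connected graph (not necessarily a clique) on the copies would suffice. Since the only requirement for the lemma is a polynomial-time reduction, neither difference affects correctness.
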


\begin{proof}
  We reduce from the NP-hardness of 3-coloring. \change{Let $G := (V_3, E_3)$ be an instance of $3$-coloring. Let $d_v$ be the degree of each $v \in V_3$. We now define our instance $H := (V, E_{\neq}, E_{=})$ of 3-coloring with equality.}

  For each $v \in \change{V_3}$, create $n_v := \lceil d_v/\eps\rceil$ new
  vertices $v_1, \hdots, v_{n_v}$, which we call $v$'s \emph{cloud
    vertices}. \change{Let $C_v$ be this set of cloud vertices.}

  \change{We then define
  \begin{align*}
    V &= V_3 \cup \bigcup_{v \in V_3} C_v\\
    E_{\neq} &= E_3\\
    E_{=} &= \bigcup_{v \in V_3} \bigcup_{v' \in C_v} \{v, v'\}.
  \end{align*}}
  First, note that $H\change{= (V, E_{\neq}, E_{=})}$ is $\eps$-loose, as each vertex
  corresponding to a vertex of $G$ is $\eps$-loose. Further, each
  of the cloud vertices has a single equality constraint and thus is
  $\eps$-loose.

  Now, if $G$ is 3-colorable, then $H$ can be satisfied by copying
  the coloring for the copies of the vertices in $G$, and then
  assigning the cloud vertices to have the same color as the vertex
  they are adjacent to. Likewise, if $H$ is satisfied, the cloud
  vertices can be deleted to yield a 3-coloring of $G$. Thus, we
  have completed the reduction.
\end{proof}

\subsection{\change{The Tensor Reduction}}

\change{We} give an efficient
reduction from an instance of \change{3-coloring with equality} to a tensor
reconstruction problem (such as in \cite{DMR09,BG16}). \change{Let $(V, E_{\neq}, E_{=})$ be our instance of 3-coloring with equality, and let $(V', E')$ be the graph for our tensor reconstruction problem.}

\change{Our vertex set is $V' = V \times [3]^3$, that is there are $27$ copies of each vertex.  For each $v \in V$ and $x, y \in [3]^3$, we have that $\{(v, x), (v, y)\} \in E'$ if and only if $x_i \neq y_i$ for all $i \in [3]$. In other words, the subgraph induced by $\{v\} \times [3]^{3}$ is isomorphic to the tensor product of $3$ copies of $K_3$.}

\change{For each $\{u, v\} \in E_{\neq}$ and $x, y \in [3]^3$, we have an edge between $\{(u,x), (v,y)\} \in E'\}$ if and only if $x_i \neq y_j$ for all $(i, j) \in K_3 := ([3], \neq)$. For each $\{u,v\} \in E_{=}$ and $x, y \in [3]^3$, we have $\{(u,x), (v,y)\} \in E'$ if and only if $x_i \neq y_i$ for all $i \in [3]$. }

Theorem~\ref{thm:no-3-col} follows from the following two observations (which
are proved in the following to subsections).

\begin{lemma}[Completeness]\label{lem:completeness}
  \change{For $c_{\textsf{loose}} < 1/3$, given} a $c_{\textsf{loose}}\eps$-loose instance \change{$(V,E_{\neq}, E_{=})$ of $3$-coloring with equality} which is satisfiable, the \change{tensor} reduction produces a graph $H$ which is $\eps$-near $K_3 \times G$ for some $G$.
\end{lemma}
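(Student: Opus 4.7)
Let $\psi : V \to [3]$ be a satisfying assignment of the given 3-coloring-with-equality instance, write $c_v := \psi(v)$, and identify each $(v, x) \in V(H)$ with $(x_{c_v}, (v, x_{-c_v})) \in V(K_3) \times V(G)$, where $x_{-c_v} \in [3]^{[3]\setminus\{c_v\}}$ holds the two non-color coordinates of $x$. The plan is to exhibit a graph $G$ on $V \times [3]^2$ so that under this identification $E(H) \subseteq E(K_3 \times G)$ and every vertex of $K_3 \times G$ loses at most an $\eps$ fraction of its edges to $H$.

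The graph $G$ is defined by cases. Within each cloud, and across every equality edge $\{u,v\} \in E_{=}$ (where $c_u = c_v$ because $\psi$ satisfies the equality), put $(u, p) \sim_G (v, q)$ iff $p$ and $q$ differ in both coordinates. Across a coloring edge $\{u,v\} \in E_{\neq}$ (so $c_u \neq c_v$), put $(u, p) \sim_G (v, q)$ iff the three \emph{pure} conditions $p_i \neq q_j$ hold, taken over pairs $(i,j)$ with $i \in [3] \setminus \{c_u\}$, $j \in [3] \setminus \{c_v\}$, and $i \neq j$. Between unrelated clouds, add no $G$-edges. Then $E(H) \subseteq E(K_3 \times G)$ is immediate by casework: on within-cloud and equality edges the $H$-condition $x_i \neq y_i$ (all $i$) splits cleanly into the $K_3$ condition $x_{c_v} \neq y_{c_v}$ together with the $G$-condition on the two remaining coordinates, and in fact $H$ and $K_3 \times G$ agree edge-by-edge here; on a coloring edge, the six $H$-conditions $x_i \neq y_j$ ($i \neq j$) include the $K_3$ condition ($x_{c_u} \neq y_{c_v}$) and the three pure conditions defining $\sim_G$. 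The two remaining \emph{mixed} conditions each involve $x_{c_u}$ or $y_{c_v}$, so they cannot be absorbed into $G$; they are the only source of edges in $K_3 \times G \setminus H$.

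A short case analysis on a vertex $(v, x)$---splitting on whether its two non-color coordinates coincide and on how they compare to $x_{c_v}$---shows that each coloring edge $\{u, v\} \in E_{\neq}$ contributes at most $6$ excess edges at $(v, x)$ (the extremal case arising when the two non-color coordinates are equal to each other but different from $x_{c_v}$). Writing $d^{=}_v$ and $d^{\neq}_v$ for the number of equality and coloring edges at $v$, the $K_3 \times G$-degree of $(v, x)$ is at least $8(1 + d^{=}_v)$ from its within-cloud and equality contributions (on which $H$ and $K_3 \times G$ coincide), so the missing fraction at $(v, x)$ is at most $\tfrac{6\, d^{\neq}_v}{8(1 + d^{=}_v)}$. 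The $c_{\textsf{loose}}\eps$-looseness of $v$ yields $d^{\neq}_v \le \tfrac{c_{\textsf{loose}}\eps}{1 - c_{\textsf{loose}}\eps} d^{=}_v$, so the fraction is bounded by $\tfrac{3\, c_{\textsf{loose}}\eps}{4(1 - c_{\textsf{loose}}\eps)}$, which is at most $\eps$ whenever $c_{\textsf{loose}}(3 + 4\eps) \le 4$---a condition comfortably implied by $c_{\textsf{loose}} < 1/3$ and $\eps \le 1$. The main obstacle is the per-vertex excess bound of $6$ for coloring edges: this is where the mixed conditions interact with the color coordinates, and it is precisely what pins the proof to the $c_{\textsf{loose}} < 1/3$ regime.
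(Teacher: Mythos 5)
Your proof is correct, and it takes a genuinely different route from the paper in the construction of $G$. For a coloring constraint $\{u,v\} \in E_{\neq}$, the paper simply puts a \emph{complete} bipartite graph between the two size-$9$ clouds in $G$, which gives $18$ tensor edges from $(u,x)$ into $v$'s cloud in $K_3 \times G$ and therefore a (crude) per-coloring-constraint excess bound of $18$. You instead thin $G$ by keeping only the $G$-pairs satisfying the three ``pure'' inequality conditions, so that the tensor already enforces the $K_3$ coordinate plus those three, and only the two ``mixed'' conditions (the ones entangling a color coordinate with a non-color coordinate) can fail. Your case analysis (split on the multiplicity pattern of $(x_1,x_2,x_3)$) correctly gives excess $\in \{0,2,3,6\}$ with the extremal value $6$ at $x_2 = x_3 \neq x_1$, and the resulting bound $\tfrac{3\,c_{\textsf{loose}}\eps}{4(1-c_{\textsf{loose}}\eps)} \le \eps$ is indeed implied by $c_{\textsf{loose}} < 1/3$ (in fact by $c_{\textsf{loose}} < 4/7$ for $\eps \le 1$). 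The two proofs thus diverge in the choice of approximating tensor $P$: yours is the sparser, sharper one, showing the factor $18/8$ in the paper can be improved to $6/8$, at the cost of a slightly longer case analysis; the paper's is shorter at the expense of a worse constant. Both are more than enough for the stated $c_{\textsf{loose}} < 1/3$, and the rest of your argument (the bijection $(v,x) \mapsto (x_{c_v},(v,x_{-c_v}))$, the edge-by-edge agreement on intra-cloud and equality pairs, the lower bound $8(1+d^=_v)$ on the tensor degree, and the conversion of $c_{\textsf{loose}}\eps$-looseness into $d^{\neq}_v \le \tfrac{c_{\textsf{loose}}\eps}{1-c_{\textsf{loose}}\eps}d^=_v$) all check out. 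One minor quibble: your closing remark that the bound of $6$ ``pins the proof to the $c_{\textsf{loose}} < 1/3$ regime'' overstates the constraint --- $6$ gives you room up to $4/7$, and the hypothesis $c_{\textsf{loose}} < 1/3$ is chosen by the paper for other (cruder) reasons, so your argument is not actually tight against the stated threshold.
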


\begin{lemma}[Soundness]\label{lem:soundness}
  Given an instance \change{$(V,E_{\neq},E_{=})$ of $3$-coloring
    with equality} with no satisfying assignment, the \change{tensor} reduction
    produces a graph $\change{H}$ which is not 3-colorable.
\end{lemma}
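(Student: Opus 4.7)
The plan is to prove the contrapositive: assume the reduction produces a 3-colorable $H$, and extract a satisfying assignment for the original instance. This is a gadget-reduction soundness argument whose backbone is the Greenwell--Lovasz characterization of 3-colorings of $K_3 \times K_3 \times K_3$, combined with a finite case analysis on what the two kinds of edge gadgets enforce.

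First I would analyze each ``cloud'' $\{v\} \times [3]^3$ in isolation. By construction, the subgraph on $\{v\} \times [3]^3$ is isomorphic to $K_3 \times K_3 \times K_3$, whose 3-colorings (by Greenwell--Lovasz) must factor through one of the three coordinate projections: concretely, any 3-coloring of this cloud has the form $(v,x)\mapsto \sigma_v(x_{i_v})$ for some coordinate $i_v\in[3]$ and some permutation $\sigma_v\in S_3$. Thus any 3-coloring of $H$ is described by the pair $(i_v,\sigma_v)$ for each $v\in V$, and the proof reduces to understanding what the cross-cloud edges impose on these pairs.

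Next I would analyze the two gadgets one at a time. For an equality edge $\{u,v\}\in E_=$, the bipartite structure between $\{u\}\times[3]^3$ and $\{v\}\times[3]^3$ is the same tensor-cube pattern ($x_i\neq y_i$ for all $i$), so a short case split shows: if $i_u\neq i_v$, then for any $a,b\in[3]$ one can find adjacent $x,y$ with $x_{i_u}=a$, $y_{i_v}=b$, forcing $\sigma_u(a)\neq\sigma_v(b)$ for all $a,b$, which is impossible; hence $i_u=i_v$, and then picking $x,y$ with $x_{i_u}=a$, $y_{i_u}=b$ for any $a\neq b$ forces $\sigma_u=\sigma_v$. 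For an inequality edge $\{u,v\}\in E_{\neq}$, the condition $x_i\neq y_j$ for all $i\neq j$ is more delicate, but evaluating at the ``constant'' patterns $x=(a,a,a)$ (whose edge-neighbors in the other cloud are exactly $\{a'\neq a\}^3$) will pin down $\sigma_u=\sigma_v$ regardless of whether $i_u$ equals $i_v$; then exhibiting a single explicit adjacent pair like $x=(1,2,2)$, $y=(1,3,3)$ with $x_1=y_1$ rules out $i_u=i_v$, leaving $i_u\neq i_v$ as the only possibility.

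Finally, given these two implications, define $\chi:V\to[3]$ by $\chi(v)=i_v$. The equality case yields $\chi(u)=\chi(v)$ whenever $\{u,v\}\in E_=$, and the inequality case yields $\chi(u)\neq\chi(v)$ whenever $\{u,v\}\in E_{\neq}$; so $\chi$ satisfies the instance, contradicting the assumption. The main obstacle is the finite but slightly fiddly verification of the $E_{\neq}$ gadget --- in particular, finding the adjacent pair $x,y$ that rules out $i_u=i_v$ despite $\sigma_u=\sigma_v$. Everything else is routine once Greenwell--Lovasz is invoked.
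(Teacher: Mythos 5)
Your proposal follows essentially the same approach as the paper's proof: invoke Greenwell--Lov\'asz to decode each cloud's coloring as $(v,x)\mapsto \sigma_v(x_{i_v})$, then do a gadget case analysis to show that equality edges force $i_u=i_v$ and inequality edges force $i_u\neq i_v$, and define the satisfying assignment via $v \mapsto i_v$. The only difference is in the $E_{\neq}$ analysis. The paper assumes $i_u=i_v=1$ WLOG and, for every pair $(\bar x_1,\bar y_1)\in[3]^2$, exhibits adjacent $x,y$ with those prescribed first coordinates (with two explicit constructions depending on whether $\bar x_1=\bar y_1$), concluding $\eta_u(\bar x_1)\neq\eta_v(\bar y_1)$ for all pairs, which is impossible for permutations. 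You instead first pin down $\sigma_u=\sigma_v$ using the constant patterns $(a,a,a)$ (a nice observation: their neighbors across the gadget are exactly $\{a'\neq a\}^3$, so $\sigma_u(a)\notin\{\sigma_v(a'):a'\neq a\}$ forces $\sigma_u(a)=\sigma_v(a)$), and then use a single adjacent pair with $x_1=y_1$ to derive a contradiction. Be aware that your example $x=(1,2,2),y=(1,3,3)$ only directly rules out $i_u=i_v=1$; the cases $i_u=i_v\in\{2,3\}$ follow from the coordinate-permutation symmetry of the $E_{\neq}$ gadget (or two more explicit pairs), which you should state explicitly. With that small addition, the argument is complete and correct.
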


\begin{proof}[Proof of Theorem~\ref{thm:no-3-col}]
  Let $\mathcal A$ be an algorithm for $3$-coloring graphs $\eps$-near
  $K_3 \times G$ for some $G$. By Lemma ~\ref{lem:eps-coloring}, it
  suffices to show that a polynomial-time modification of
  $\mathcal A$ can solve $c_{\textsf{loose}}\eps$-loose instances of \change{3-coloring with equality.}

  Our modified algorithm $\mathcal A'$ is to take an instance \change{$(V,E_{\neq},E_{=})$ of $3$-coloring
    with equality}, apply the \change{tensor}
  reduction (which is polynomial-time) to produce a graph
  $\widetilde{H}$ and then apply $\mathcal A$ to $\widetilde{H}$. We then
  check that the output of $\mathcal A$ is indeed a 3-coloring of
  $\widetilde{H}$. If so, we output ``satisfiable.'' Otherwise, we output
  ``unsatisfiable.''

  If the \change{3-coloring with equality} instance is satisfiable, by
  Lemma~\ref{lem:completeness}, we have that $\widetilde{H}$ is
  $\eps$-near a tensor with $K_3$. Thus, $\mathcal A$ will output a
  valid 3-coloring, and thus we shall output ``satisfiable'' for
  this \change{3-coloring with equality} instance.

  On the other hand, if the \change{3-coloring with equality} instance is not satisfiable,
  then $\widetilde{H}$ is not even 3-colorable by Lemma
  \ref{lem:soundness}.  Thus, even if $\mathcal A$ outputs a coloring,
  it is not a valid 3-coloring, and thus we shall output
  ``unsatisfiable.''

  Thus, finding a 3-coloring of a graph $\eps$-near $K_3 \times G$
  is NP-hard.
\end{proof}

\subsection{Completeness: Proof of Lemma~\ref{lem:completeness}}

Let $\psi : V \to [3]$ be a satisfying assignment to our $c_{\textsf{loose}}\eps$-loose \change{instance $(V,E_{\neq},E_{=})$ of $3$-coloring with equality. Let $H$ be the graph produced by the tensor reduction.} We show how to \change{approximately} factor this graph as a $K_3$ times another graph \change{$G$. We first build the graph $G$ and then construct the approximate isomorphism between $K_3 \times G$ and $H$.}

\change{The vertices of $G$ are $V \times [3]^2$. For any $v \in V$ and $x, y \in [3]^2$, we have that $\{(v,x), (v,y)\} \in E(G)$ if and only if $x_1 \neq y_1$ and $x_2 \neq y_2$. Likewise, for any $\{u, v\} \in E_{=}$ and $x, y \in [3]^2$, we have that $\{(u,x), (v,y)\} \in E(G)$ if and only if $x_1 \neq y_1$ and $x_2 \neq y_2$. For each $\{u, v\} \in E_{\neq}$ and $x, y \in [3]^2$, we always have that $\{(u,x), (v,y)\} \in E(G)$.}

\change{We now construct a map $\pi : [3] \times V(G) \to V(H)$. For each $(v, x) \in V(G)$ and $y \in [3]$, we map the pair $(y, (v, x))$ based on the value of $\psi(v)$:
\[
\pi(y, (v,x))= \begin{cases}
                   (v, (y, x_1, x_2)) & \psi(v) = 1\\
                   (v, (x_1, y, x_2)) & \psi(v) = 2\\
                   (v, (x_1, x_2, y)) & \psi(v) = 3.
                 \end{cases}
\]}

\change{We now show that $E(H) \subseteq E(\pi(K_3 \times G))$, and that the lack of equality is only due to edges added to $G$ via edges of $E_{\neq}$.}

\change{Observe that for any edge $\{y, y'\} \in K_3$ and $\{(v, x), (v, x')\} \in V(G)$, we have that the edge arising from the tensor product $\{\pi(y, (v,x)), \pi(y', (v,x'))\}$ is an edge of $H$ as $y \neq y', x_1 \neq x'_1$ and $x_2 \neq x'_2$. Further, this is a bijection between the subgraph of $H$ induced by $\{v\} \times [3]^3$ and the tensor product of $K_3$ with the subgraph of $G$ induced by $\{v\} \times [3]^2$.

Likewise, if $\{u, v\} \in E_{=}$, for any $\{y, y'\} \in K_3$ and $\{(u, x), (v, x')\} \in V(G)$, we have that the corresponding product $\{\pi(y, (u,x)), \pi(y', (v,x'))\}$ is an edge of $H$. Further, the edges of $H$ arising from $\{u, v\} \in E_{=}$ are in bijection with the tensor product of $K_3$ with the subgraph of $G$ with the edges added due to $\{u, v\}$.

If $\{u, v\} \in E_{\neq}$, we no longer have such a bijection between $K_3$ times the edges added to $G$ and the edges added to $H$. However, we can prove that every such edge of $H$ corresponds to an edge in $K_3 \times G$. Assume without loss of generality that $\psi(u) = 1$ and $\psi(v) = 2$. Consider $x, y \in [3]^3$ with $x_i \neq y_j$ for all $(i,j) \in K_3$. Then, note that
\begin{align*}
  \pi^{-1}(u, x) &= (x_1, (u, (x_2, x_3)))\\
  \pi^{-1}(v, y) &= (y_2, (v, (y_1, y_3)))\\
\end{align*}
As stated $x_1 \neq y_2$, so $\{x_1, y_2\} \in K_3$. Further, $\{u, v\} \in E_{\neq}$ so $\{(u, (x_2, x_3)), (v, (y_1, y_3))\} \in E(G)$.
}

\change{Therefore, $E(H) \subseteq E(\pi(K_3 \times G))$, and that the lack of equality is only due to edges added to $G$ via edges of $E_{\neq}$.}

\change{To finish, for every $(u,x) \in V(H)$, we need to bound the number of incident edges in $E(\pi(K_3 \times G)) \setminus E(H)$. For each $\{u, v\} \in E_{\neq}$, there are $18$ edges between $(u,x)$ and vertices of the form $(v,y) \in \pi(K_3 \times G)$. Further, for each $(u,v) \in E_{=}$, there are $8$ edges between $(u,x)$ and vertices of the form $(v,y) \in V(H)$. Finally, there are $8$ edges between $(u,x)$ and vertices of the form $(u,y) \in V(H)$.

Let $a_{u} = |\{v \in V : \{u,v\} \in E_{=}\}|$ and $b_{u} = |\{v \in V : \{u,v\} \in E_{\neq}\}|$. Since our 3-coloring with equality instance is $c_{\textsf{loose}}\eps$-loose, we have that $b_u \le c_{\textsf{loose}}\eps(a_u + b_u)$.  Therefore, the fraction of edges of $\pi(K_3 \times G))$ incident to $(u,x)$ which are in $E(\pi(K_3 \times G)) \setminus E(H)$ is bounded by
\[
  \frac{18b_u}{8 + 8a_u + 18b_u} \le \frac{18b_u}{8a_u + 8b_u} = \frac{18}{8} c_{\textsf{loose}}\eps < \eps,
\]
as $c_{\textsf{loose}} < 1/3$.

Therefore, the graph $H$ produced by the tensor reduction is $\eps$-near $K_3 \times G$, as desired.
}
\subsection{Soundness: Proof of Lemma~\ref{lem:soundness}}

We prove the soundness via the contrapositive. Assume there exists a
3-coloring of the graph formed via the \change{tensor} reduction. That is, \change{there exists a map $\psi : V \times [3]^3 \to [3]$ forming a 3-coloring.} A key result, taken from the literature, is that \change{for each $v \in V$, the induced map $\psi(v, -) : [3]^3 \to [3]$} can be unambiguously decoded \change{to a single color $\phi(v) \in [3]$.}

\begin{claim}[\cite{Greenwell1974}]\label{claim:unique}
  Given a graph $T$ which is the tensor of $L \change{\ge 1}$ copies of $K_3$ and a
  3-coloring $c : [3]^L \to [3]$ of $T$, there exists a
  $i \in [L]$ and a permutation $\eta : [3] \to [3]$ such that for all
  $x \in [3]^L$, we have that $c(x) = \eta(x_i)$.
\end{claim}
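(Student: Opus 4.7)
The claim is a classical result of Greenwell and Lovász~\cite{Greenwell1974}; my plan is to prove it by induction on $L$. The base case $L=1$ is immediate: a proper $3$-coloring $c:[3]\to[3]$ of $T=K_3$ is a bijection $\eta$, and taking $i=1$ gives $c(x_1)=\eta(x_1)$.

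For the inductive step ($L\ge 2$), view $T=K_3^{\otimes L}$ as $K_3\times H$ where $H=K_3^{\otimes(L-1)}$. The factor $H$ is connected (routine induction) and has the property that every vertex lies in a triangle: for any $x\in V(H)$, the three points $x$, $x+(1,\ldots,1)$, $x+(2,\ldots,2)$ (coordinates mod $3$) pairwise differ in every coordinate, hence form a triangle in $H$. For each $h\in V(H)$ define the fiber restriction $f_h:[3]\to[3]$ by $f_h(k)=c(k,h)$; the coloring condition is that $f_h(k)\ne f_{h'}(k')$ whenever $(h,h')\in E(H)$ and $k\ne k'$. I will establish the following dichotomy: either every $f_h$ equals a common permutation $\eta$, in which case $c(k,h)=\eta(k)$ and we set $i=1$; or every $f_h$ is constant, in which case writing $f_h\equiv\phi(h)$ makes $\phi$ a proper $3$-coloring of $H$, and applying the inductive hypothesis to $\phi$ yields $c(x)=\eta(x_j)$ for some $j\in\{2,\ldots,L\}$.

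The dichotomy is proved through three sub-claims. \textbf{(a)} If $f_h$ is a bijection and $h'\sim h$, then $f_{h'}=f_h$: for each $k'$, the forbidden set $\{f_h(k):k\ne k'\}$ equals $[3]\setminus\{f_h(k')\}$ since $f_h$ is a permutation, so $f_{h'}(k')=f_h(k')$. \textbf{(b)} No $f_h$ has image of size exactly~$2$: suppose for contradiction $f_h(1)=f_h(2)=c_0\ne c_1=f_h(3)$ and let $h_1,h_2$ be the other two vertices of a triangle through $h$ in $H$; the constraints from the edges $h\sim h_1$ and $h\sim h_2$ force $f_{h_j}(1)=f_{h_j}(2)=c_2$ for $j=1,2$ (where $c_2$ is the third color), but then the edge $h_1\sim h_2$ demands $f_{h_1}(1)\ne f_{h_2}(2)$, a contradiction. \textbf{(c)} If $f_h$ is constant $c_0$ and $h'\sim h$, then $\mathrm{Im}(f_{h'})\subseteq[3]\setminus\{c_0\}$, and combined with~(b) this forces $f_{h'}$ to be constant as well. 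Together, (a) and~(c) show that the sets $\{h:f_h\text{ is a permutation}\}$ and $\{h:f_h\text{ is constant}\}$ are each unions of connected components of $H$; connectivity of~$H$ then yields the dichotomy.

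The main obstacle is sub-claim~(b), the only step that uses the triangle structure of~$H$ and thereby rules out ``mixed'' two-to-one fibers. The argument above handles one assignment of the doubled color and doubled indices, but to be fully rigorous one must check that every relabeling produces the same collision on the third edge of the triangle. Sub-claim~(a) and the propagation in~(c) are then routine consequences of the definitions.
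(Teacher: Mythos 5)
The paper cites this claim from Greenwell and Lov\'asz~\cite{Greenwell1974} without providing its own proof, so there is no in-paper argument to compare against. Your self-contained inductive proof is correct. The base case $L=1$ is immediate; for $L \ge 2$ the fiber maps $f_h(k) = c(k,h)$ and the trichotomy on $|\mathrm{Im}(f_h)| \in \{1,2,3\}$ drive the argument, and the three sub-claims hold up under scrutiny. In particular, the worry you flag about sub-claim~(b) and relabelings is unfounded: if $f_h(a)=f_h(b)=c_0\ne c_1=f_h(d)$ with $\{a,b,d\}=[3]$, then the constraints from $h\sim h_1$ and $h\sim h_2$ force $f_{h_1}(a)=f_{h_1}(b)=f_{h_2}(a)=f_{h_2}(b)=c_2$ regardless of the labeling, and the edge $h_1 \sim h_2$ gives the collision $f_{h_1}(a) = f_{h_2}(b) = c_2$ with $a\ne b$. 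Sub-claims~(a) and~(c), together with connectivity of $H=K_3^{\otimes(L-1)}$ and the observation that every vertex of $H$ lies in a triangle, then yield the dichotomy: either all $f_h$ agree with a single permutation $\eta$ (giving $i=1$), or all $f_h$ are constant and $\phi(h) = f_h(\cdot)$ is a proper $3$-coloring of $H$ to which the inductive hypothesis applies. The index bookkeeping in that second case ($y_j = x_{j+1}$) is also right. This is a clean, elementary route to the result that avoids appealing to the more general machinery in the original reference.
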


\begin{proof}[Proof of Lemma~\ref{lem:soundness}]
For each $v \in V$, let $i_v$ and $\eta_v$ be such that for all \change{$x \in [3]^3$, $\psi(v, x) = \eta_v(x_{i_v})$. For all $v \in V$, define $\phi(v) = i_v$. We claim that $\phi$ is a satisfying assignment to the instance $(V, E_{\neq}, E_{=})$ of 3-coloring with equality.

First, consider any equality constraint $\{u,v\} \in E_{=}$. We seek to show that $\phi(u) = \phi(v)$. Assume for sake of contradiction that $\phi(u) \neq \phi(v)$. Without loss of generality, we can assume that $\phi(u) = 1$ and $\phi(v) = 2$. Then, note that for any $x, y \in [3]^3$, we have that $\psi(u,x) = \eta_u(x_{\phi(u)}) = \eta_u(x_1)$ and $\psi(v,y) = \eta_v(y_2)$. Note that for any choice of $\bar{x}_1, \bar{y}_2 \in [3]$, there exists $x, y \in [3]^3$ with $x_1 = \bar{x}_1$, $y_2 = \bar{y}_2$ and $x_i \neq y_i$ for $i \in [3]$. Thus, $\eta_u(\bar{x}_1) \neq \eta_v(\bar{y}_2)$ for all $\bar{x}_1, \bar{y}_2 \in [3]$, a contradiction. Therefore, $\phi(u) = \phi(v)$.

Second, consider any coloring constraint $\{u,v\} \in E_{\neq}$. We seek to show that $\phi(u) \neq \phi(v)$. Assume for sake of contradiction that $\phi(u) = \phi(v)$. Without loss of generality, we can assume that $\phi(u) = \phi(v) = 1$. Thus, for any $x, y \in [3]^3$, we have that $\psi(u,x) = \eta_u(x_{\phi(u)}) = \eta_u(x_1)$ and $\psi(v,y) = \eta_v(y_1)$. 

Note that for any choice of $\bar{x}_1, \bar{y}_1 \in [3]$, there exists $x, y \in [3]^3$ with $x_1 = \bar{x}_1$, $y_1 = \bar{y}_1$ and $x_i \neq y_j$ for $i, j \in [3]$ with $i \neq j$. If $\bar{x}_1 \neq \bar{y}_1$, set $x = (\bar{x}_1, \bar{x}_1, \bar{x}_1)$ and $y = (\bar{y}_1, \bar{y}_1, \bar{y}_1)$. If $\bar{x}_1 = \bar{y}_1$, set $x = (\bar{x}_1, \bar{x}_1+1, \bar{x}_1 + 1)$ and $y = (\bar{x}_1, \bar{x}_1+2, \bar{x}_1 + 2)$.

Thus, $\eta_u(\bar{x}_1) \neq \eta_v(\bar{y}_1)$ for all $\bar{x}_1, \bar{y}_1 \in [3]$, a contradiction. Therefore, $\phi(u) \neq \phi(v)$. Therefore $\phi$ is a satisfying assignment of the 3-coloring with equality instance.}
\end{proof}

This completes the proof of Theorem~\ref{thm:no-3-col}. \change{We now prove the following corollary.

\begin{corollary}
  For any $\delta \in (0,1)$, given as input a graph $\widetilde{H}$
  on $N$ vertices which is $\eps = N^{-\delta} $-near $K_3 \times G$
  for some $G$, it is \textsf{NP}-hard to find a $3$-coloring of
  $\widetilde{H}$.
\end{corollary}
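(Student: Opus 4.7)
The plan is to apply the two-step reduction from the proof of Theorem~\ref{thm:no-3-col}, but with a much looser intermediate instance of $3$-coloring with equality, chosen so that the resulting tensor graph is $\eps$-near with $\eps$ polynomially small in the number of output vertices.

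Concretely, given a $3$-coloring instance $G = (V_3, E_3)$ with $n$ vertices and $m$ edges (WLOG $m = \Omega(n)$, else we solve the instance directly), I would repeat the construction of Lemma~\ref{lem:eps-coloring} but increase the cloud size per vertex $v \in V_3$ from $\lceil d_v/\eps \rceil$ to $\lceil d_v/\gamma \rceil$, where $\gamma = c_1 m^{-\delta/(1-\delta)}$ for a small constant $c_1 = c_1(\delta, c_{\textsf{loose}})$ to be fixed at the end. The same argument as in Lemma~\ref{lem:eps-coloring} shows the result is a $\gamma$-loose instance of $3$-coloring with equality on $|V| = n + \sum_v \lceil d_v/\gamma \rceil = \Theta(m/\gamma) = \Theta(m^{1/(1-\delta)})$ vertices (cloud and isolated vertices are trivially $\gamma$-loose, since they have no coloring constraints), which is satisfiable iff $G$ is $3$-colorable. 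Applying the tensor reduction then produces a graph $\widetilde{H}$ on $N = 27|V| = \Theta(m^{1/(1-\delta)})$ vertices in polynomial time.

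By Lemma~\ref{lem:completeness}, if $G$ is $3$-colorable, then $\widetilde{H}$ is $\eps_0$-near $K_3 \times G'$ for some $G'$, where $\eps_0 := \gamma/c_{\textsf{loose}} = \Theta(m^{-\delta/(1-\delta)})$. Since $N^{-\delta} = \Theta(m^{-\delta/(1-\delta)})$ as well, picking $c_1$ small enough forces $\eps_0 \le N^{-\delta}$, so $\widetilde{H}$ is also $N^{-\delta}$-near the same tensor (being $\eps_0$-near is strictly stronger than being $\eps'$-near for any $\eps' \ge \eps_0$) and hence is a valid input to any algorithm $\mathcal A$ claimed to $3$-color $N^{-\delta}$-near tensors. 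Conversely, by Lemma~\ref{lem:soundness}, if $G$ is not $3$-colorable, then $\widetilde{H}$ is not even $3$-colorable, so $\mathcal A$ cannot return a valid coloring; checking whether $\mathcal A$'s output is a proper $3$-coloring therefore decides $3$-coloring in polynomial time, yielding \textsf{NP}-hardness.

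The main obstacle is getting the polynomial scaling right: $N$ grows like $1/\gamma$ while $\eps_0$ shrinks linearly in $\gamma$, so the required exponent in $\gamma = m^{-\delta/(1-\delta)}$ is determined (up to constants) by solving $\gamma \asymp (1/\gamma)^{-\delta} \cdot m^{-\delta}$, which forces $\gamma \asymp m^{-\delta/(1-\delta)}$. The constant $c_1$ is then chosen small enough to absorb the factors of $27$, $1/c_{\textsf{loose}}$, and the implicit $O(\cdot)$ constants in $|V|$ so that the inequality $\eps_0 \le N^{-\delta}$ holds.
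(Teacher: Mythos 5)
Your proposal is correct and follows essentially the same strategy as the paper: re-run the composition of the equality reduction (Lemma~\ref{lem:eps-coloring}) and the tensor reduction with a polynomially small looseness parameter, then verify the scaling gives $\eps_0 \le N^{-\delta}$ via Lemmas~\ref{lem:completeness} and~\ref{lem:soundness}. The only cosmetic differences are that the paper starts from the NP-hardness of $3$-coloring on $4$-regular graphs (Dailey) and parametrizes by a constant exponent $\eta$ chosen so that $\eta/(1+\eta) > \delta$, whereas you normalize by $m$ (handling varying degrees and requiring $m = \Omega(n)$ WLOG) and set the looseness $\gamma = c_1 m^{-\delta/(1-\delta)}$ to match the target exponent exactly, then absorb constants into $c_1$; both routes yield the same conclusion.
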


\begin{proof}
  We perform the same reductions as Theorem~\ref{thm:no-3-col} from 3-coloring to 3-coloring with equality to tensor reconstruction. However, we need to verify that each step of the reduction accommodates non-constant $\eps$. We start with an instance $(V, E)$ of 3-coloring on $n$ vertices such that every vertex has degree $4$. This is NP-hard by a result of Dailey~\cite{dailey1980uniqueness}. We now perform the reduction of Lemma~\ref{lem:eps-coloring}, but $\eps = n^{-\eta}$ for some $\eta > 0$ to be specified later. In particular, each vertex $v \in V$ becomes a cloud of $\lceil 4n^{\eta}\rceil$ vertices. Thus, the instance $(V', E'_{\neq}, E'_{=})$ of 3-coloring with equality has $|V'| \le n(1 + \lceil 4n^{\eta}\rceil) \le 6n^{1+\eta}$ vertices and is $n^{-\eta}$-loose. The tensor reduction then produces an instance $\widetilde{H}$ of tensor reconstruction on $N := 27(6n^{1+\eta})$ vertices. By Lemma~\ref{lem:completeness} and Lemma~\ref{lem:soundness}, it is NP-hard to distinguish whether $\widetilde{H}$ is $\eps'=\frac{n^{-\eta}}{3}$-near a $K_3$-tensor or is not 3-colorable. In terms of $N$, we have that $\eps' = O(N^{-\eta/(1+\eta)})$. If we pick $\eta$ sufficiently large such that $\frac{\eta}{1+\eta}$ is strictly greater than $\delta \in (0, 1)$, then $\eps' < N^{-\delta}$, as desired.
\end{proof}
}

\section{Conclusions and Open Questions}\label{sec:conclusion}

Inspired by finding novel tractable instances of the 3-coloring
problem, in this paper we have studied the \change{efficient} 3-colorability
of graphs that are approximately of the form $K_3 \times G$. In
particular, if $G$ is a mild expander, then 3-coloring is indeed
possible. However, it is NP-hard for general $G$, although weaker
tensor reconstruction criteria hold, \change{such as the $\ell_1$ reconstruction goal achieved in Theorem \ref{thm:main}}. 
There are a number of directions
which could be pursued to extend the results in this paper. We list a
few of the most promising ones.

\change{
First, we believe one can extend our results to more general settings.
We discuss a few such settings that seem like very tractable future work, 
as we think here, one  can make use of our techniques.}

\paragraph{Allowing deletions and insertions} One can change the
model from strictly deletions to allowing deletions and insertions, so
long as the insertions respect the tensor structure.  In other words,
the adversary can insert an edge between $u$ and $v$ only if
$\psi_{K_3}(u) \neq \psi_{K_3}(v)$ and $\psi_{G}(u) \neq \psi_{G}(v)$. We observe that this model can be
reduced to the deletion model that we have studied.\footnote{A notable
  analogy in coding theory is that any error-correcting code for a
  deletion channel is also an error-correcting code for the
  insertion-deletion channel (c.f. \cite{levenshtein1966binary}).} Let
$G' \supset G$ be the graph where $g$ and $g'$ are \change{connected if there is
an edge added } between and $u$ and $v$ with $\psi_G(u) = g$ and
$\change{\psi_G(v) =g'}$. Observe that $H$, the graph the adversary produced, is a
subset of $P' := K_3 \times G'$. Further, for any vertex $u$ of $H$,
the number of incident edges in $P' \setminus H$ is at most
$6\eps |\Gamma_{P'}(u)| \le \frac{6\eps}{1-6\eps} |\Gamma_{P}(u)|$.
\change{Thus, we belive one can analyze $H$ as the result of deleting
$O(\eps)$ fraction of the edges from each vertex of $P'$, and 
perhaps obtain similar guarantees to those in Theorems \ref{thm:main} and
\ref{thm:no-sparse-cuts}.}

Further, one can allow insertions of edges $(u,v)$ for which
$\psi_{G}(u) = \psi_{G}(v)$.  At most two such edges can be added per
vertex and so $\eps$ can be increased very slightly in each of the
inequalities to accommodate these edges. Thus, we conjecture that the most general
insertion model our algorithm can withstand is one in which the
original color classes are respected.

\paragraph{Extending past $K_3$}
We \change{believe that one can use analogous arguments to the ones presented in this section
to replace $K_3$ with any fixed size \emph{core graph}, which is a graph such that every homomorphism on it is an isomorphism~\cite{hell1992core}.
Some examples of core graphs are cliques and an odd-length cycles.}
Call this fixed graph $F$, and let $f$ denote the number of
vertices in $F$.  \change{In these settings, there are an analogue of core
triangles.  First, one can build an analog of the graph $C$ that connects
nodes that are candidates to be from the same $F$ component--that is,
the sizes of the pairwise neighborhoods are proportional to the sizes
of the pairwise neighborhoods of $F$. } Then, one could form the equivalent of
the graph $\mathcal{T}(C)$ by considering tuples of size $f$ that form
copies of $F$ in $C$, and such tuples are compatible with each other
if they have the edges between them in $H$ that we would expect from
the tensor. We conjecture that one can then prove that these components are either
``core'' or ``monochrome.,'' though it is likely that the constants and runtime will get worse as $f$
increases.

\change{\begin{question}
Let $F$ be a fixed size clique or an odd-length cycle and 
let $G$ be a $O(\eps)$-expander. Let $H$ be $\eps$-near $F
\times G$. 
Given $H$, can one efficiently find a tensor graph $\widetilde{H}$
which is $\eps$-near $F \times G$?
\end{question}}

\paragraph{A spectral variant of Imrich's algorithm} The algorithms
we present in this paper are highly combinatorial. Yet, as evidenced
by Theorem~\ref{thm:no-sparse-cuts}, the spectral properties of $G$ are essential for
learning some structure of the \change{underlying} graph. Thus, a natural goal would be to
find a truly spectral variant of Imrich's algorithm (e.g., via semidefinite
programming) which allows for robust reconstruction. In particular,
the following result could potentially result from such an investigation.

\begin{question}
  Let $F$ and $G$ be $O(\eps)$-expanders. Let $H$ be $\eps$-near $F
  \times G$. Given $H$, can one efficiently find a tensor graph $\widetilde{H}$
  which is $\eps$-near $F \times G$?
\end{question}

Perhaps related to this question is the work of
Trevisan~\cite{trevisan2012max} which relates high-quality
approximations of MAX CUT to spectral properties of the input
graph.

\paragraph{Alternative reconstruction goals} 
We show that when $H$ is $\eps$-near $K_3 \times G$,
we can find a tensor $\widetilde{H} = K_3 \times \widetilde{G}$ 
with $|E(\widetilde{H}) \Delta E(H)|\leq O(\eps |E(H)|)$.
Since deletions are bounded on each vertex, 
another seemingly reasonable reconstruction goal is to try to find $\widetilde{H}$
that is close to $H$ on every \change{vertex}, e.g. 
for all $v \in H$, $|\Gamma_H(v) \Delta \Gamma_{\widetilde{H}}(v)|\leq O(\eps |\Gamma_H(v)|)$. 
This is just one other example reconstruction goal, and 
there may be other natural, tractable reconstruction goals for robust tensor factoring
that are worth defining and studying.

\paragraph{Extensions to other deletion models} The deletion model we
assumed in this paper is that $\eps$ fraction of the edges incident to
each vertex were deleted. Another natural model is that $\eps$
fraction of the total edges are deleted. Analyzing this model appears
more difficult as many vertices could have a large fraction of their
neighbors deleted, which cause our heuristics to fail.

Another natural model to consider is random deletions with
i.i.d. deletion probability $\eps$. With high probability, the results
of the main theorems apply, but it may be possible to 3-color
random deletions of $K_3 \times G$ for an adversarial
$G$. ``Semi-random'' graph coloring problems such as this have been of
interest in the literature (c.f., \cite{feige1998heuristics}).

\paragraph{Interesting subclasses of 3-colorable graphs}
We might be able to better understand hardness of the 3-coloring problem 
by studying interesting special cases.
Indeed, this was the motivation behind our work, 
as well as that of Blum in coloriong random 3-colorable graphs \cite{Blum94},
Arora and Ge \cite{arora2011new} in studying 3-colorable graphs with low threshold rank,
and Kawarabayashi and Thorup \cite{kawarabayashi2017coloring} in focusing on graphs with high degree.
Based on these previous works, a natural extension of these ideas is the following question:
\begin{question}
  What other structural properties of a 3-colorable graph imply one can color it 
with few (at most $O(\log n)$) colors efficiently?
\end{question}

\paragraph{Deeper connections to coloring} This study was inspired by
trying to better understand how to color 3-colorable graphs with a
small number of colors. An important direction is to better understand
the implications our results have for the approximate coloring
problem. For instance, if a graph can be efficiently partitioned into
approximate tensors, can we use our reconstruction algorithm on the
approximate tensor pieces and some other meta algorithm on the full
graph.  More generally, we ask the following:
\begin{question}
  Can we bootstrap our algorithm to give
approximate colorings of general graphs? 
\end{question}

\change{\paragraph{Stronger hardness results} One may seek to
  strengthen Theorem~\ref{thm:no-3-col} by imposing a weaker
  reconstruction criteria than finding a 3-coloring of
  $\widetilde{H}$. By adapting the methods of \cite{BG16}, it should
  be possible to show that it is NP-hard to find even a 4-coloring of
  $\widetilde{H}$. However, the approach we take is useless even for
  six colors, per observations of \cite{BG16,BBKO21}, unless one makes
  conditional assumptions like in \cite{DMR09}.

  One may also seek to instead show that it is NP-hard to reconstruct
  $\widetilde{H}$ with respect to some reconstruction metric. We not
  currently aware of a method to obtain such results.}

\section*{Acknowledgments}

This project wouldn't have been possible without the support of Tselil
Schramm, particularly for helping us shape the right questions and
models for our investigation.

We thank Janardhan Kulkarni for valuable discussions on coloring during the early
stages of the project. We thank Venkatesan Guruswami for useful feedback on the
project and Kasper Lindberg for helpful comments on an early draft of the paper.
Finally, we thank Microsoft Research, Redmond, for virtually
hosting us, leading to the creation of this project.

\change{We thank anonymous reviewers for their careful reading and
  numerous helpful comments, including correcting an error in
  Section~\ref{sec: hardness}.}

\clearpage

\bibliographystyle{siamplain}
\bibliography{ref}

\begin{thebibliography}{10}

\bibitem{AC06}
{\sc S.~Arora and E.~Chlamt{\'{a}}c}, {\em New approximation guarantee for
  chromatic number}, in Proceedings of the 38th Annual {ACM} Symposium on
  Theory of Computing, Seattle, WA, USA, May 21-23, 2006, J.~M. Kleinberg, ed.,
  {ACM}, 2006, pp.~215--224, \url{https://doi.org/10.1145/1132516.1132548}.

\bibitem{arora2011new}
{\sc S.~Arora and R.~Ge}, {\em New tools for graph coloring}, in Approximation,
  Randomization, and Combinatorial Optimization. Algorithms and Techniques,
  Springer, 2011, pp.~1--12.

\bibitem{BBKO21}
{\sc L.~Barto, J.~Bul{\'{\i}}n, A.~A. Krokhin, and J.~Opr\v{s}al}, {\em
  Algebraic approach to promise constraint satisfaction}, J. {ACM}, 68 (2021),
  pp.~28:1--28:66.

\bibitem{berger1990better}
{\sc B.~Berger and J.~Rompel}, {\em A better performance guarantee for
  approximate graph coloring}, Algorithmica, 5 (1990), pp.~459--466.

\bibitem{Blum94}
{\sc A.~Blum}, {\em New approximation algorithms for graph coloring}, J. {ACM},
  41 (1994), pp.~470--516, \url{https://doi.org/10.1145/176584.176586}.

\bibitem{BG16}
{\sc J.~Brakensiek and V.~Guruswami}, {\em {New Hardness Results for Graph and
  Hypergraph Colorings}}, in 31st Conference on Computational Complexity (CCC
  2016), R.~Raz, ed., vol.~50 of Leibniz International Proceedings in
  Informatics (LIPIcs), Dagstuhl, Germany, 2016, Schloss
  Dagstuhl--Leibniz-Zentrum fuer Informatik, pp.~14:1--14:27,
  \url{https://doi.org/http://dx.doi.org/10.4230/LIPIcs.CCC.2016.14}.

\bibitem{Chlamtac07}
{\sc E.~Chlamt{\'{a}}c}, {\em Approximation algorithms using hierarchies of
  semidefinite programming relaxations}, in 48th Annual {IEEE} Symposium on
  Foundations of Computer Science {(FOCS} 2007), October 20-23, 2007,
  Providence, RI, USA, Proceedings, {IEEE} Computer Society, 2007,
  pp.~691--701, \url{https://doi.org/10.1109/FOCS.2007.13}.

\bibitem{chung97}
{\sc F.~R.~K. Chung}, {\em Spectral graph theory}, vol.~92 of CBMS Regional
  Conference Series in Mathematics, Published for the Conference Board of the
  Mathematical Sciences, Washington, DC; by the American Mathematical Society,
  Providence, RI, 1997.

\bibitem{CLRS}
{\sc T.~H. Cormen, C.~E. Leiserson, R.~L. Rivest, and C.~Stein}, {\em
  Introduction to algorithms}, MIT press, 2022.

\bibitem{dailey1980uniqueness}
{\sc D.~P. Dailey}, {\em Uniqueness of colorability and colorability of planar
  4-regular graphs are np-complete}, Discrete Mathematics, 30 (1980),
  pp.~289--293.

\bibitem{DMR09}
{\sc I.~Dinur, E.~Mossel, and O.~Regev}, {\em Conditional hardness for
  approximate coloring}, SIAM Journal on Computing, 39 (2009), pp.~843--873,
  \url{https://doi.org/10.1137/07068062X}.

\bibitem{DS10}
{\sc I.~Dinur and I.~Shinkar}, {\em On the conditional hardness of coloring a
  4-colorable graph with super-constant number of colors}, in Approximation,
  Randomization, and Combinatorial Optimization. Algorithms and Techniques,
  M.~Serna, R.~Shaltiel, K.~Jansen, and J.~Rolim, eds., vol.~6302 of Lecture
  Notes in Computer Science, Springer Berlin Heidelberg, 2010, pp.~138--151.

\bibitem{wienertrees}
{\sc A.~A. Dobrynin, R.~Entringer, and I.~Gutman}, {\em Wiener index of trees:
  theory and applications}, Acta Applicandae Mathematica, 66 (2001),
  pp.~211--249.

\bibitem{signal-processing}
{\sc D.~E. Dudgeon and R.~M. Mersereau}, {\em Multidimensional digital signal
  processing}, Prentice-Hall, 1984.

\bibitem{bottleneck}
{\sc D.W.}, {\em Max-min weighted matching},
  \url{https://cs.stackexchange.com/a/148611}.

\bibitem{Feder92}
{\sc T.~Feder}, {\em Product graph representations}, J. Graph Theory, 16
  (1992), pp.~467--488, \url{https://doi.org/10.1002/jgt.3190160508}.

\bibitem{feige1998heuristics}
{\sc U.~Feige and J.~Kilian}, {\em Heuristics for finding large independent
  sets, with applications to coloring semi-random graphs}, in Proceedings 39th
  Annual Symposium on Foundations of Computer Science (Cat. No. 98CB36280),
  IEEE, 1998, pp.~674--683.

\bibitem{feige-intgap}
{\sc U.~Feige, M.~Langberg, and G.~Schechtman}, {\em Graphs with tiny vector
  chromatic numbers and huge chromatic numbers}, SIAM Journal on Computing, 33
  (2004), pp.~1338--1368.

\bibitem{FeigenbaumH89-factorable}
{\sc J.~Feigenbaum and R.~W. Haddad}, {\em On factorable extensions and
  subgraphs of prime graphs}, Discrete Mathematics, 2 (1989), pp.~197--218,
  \url{https://doi.org/10.1137/0402017}.

\bibitem{FeigenbaumHS85}
{\sc J.~Feigenbaum, J.~Hershberger, and A.~A. Sch{\"{a}}ffer}, {\em A
  polynomial time algorithm for finding the prime factors of cartesian-product
  graphs}, Discrete Applied Mathematics, 12 (1985), pp.~123--138,
  \url{https://doi.org/10.1016/0166-218X(85)90066-6}.

\bibitem{FeigenbaumS92-strong}
{\sc J.~Feigenbaum and A.~A. Sch{\"{a}}ffer}, {\em Finding the prime factors of
  strong direct product graphs in polynomial time}, Discrete Mathematics, 109
  (1992), pp.~77--102, \url{https://doi.org/10.1016/0012-365X(92)90280-S}.

\bibitem{frankl-intgap}
{\sc P.~Frankl and V.~R{\"o}dl}, {\em Forbidden intersections}, Transactions of
  the American Mathematical Society, 300 (1987), pp.~259--286.

\bibitem{Greenwell1974}
{\sc D.~Greenwell and L.~Lov{\'a}sz}, {\em Applications of product colouring},
  Acta Mathematica Academiae Scientiarum Hungarica, 25 (1974), pp.~335--340,
  \url{https://doi.org/10.1007/BF01886093},
  \url{http://link.springer.com/article/10.1007/BF01886093} (accessed
  2017-01-28).

\bibitem{GK04}
{\sc V.~Guruswami and S.~Khanna}, {\em On the hardness of 4-coloring a
  3-colorable graph}, Discrete Mathematics, 18 (2004), pp.~30--40,
  \url{https://doi.org/10.1137/S0895480100376794}.

\bibitem{GS20}
{\sc V.~Guruswami and S.~Sandeep}, {\em d-to-1 hardness of coloring 3-colorable
  graphs with o (1) colors}, in 47th International Colloquium on Automata,
  Languages, and Programming (ICALP 2020), Schloss Dagstuhl-Leibniz-Zentrum
  f{\"u}r Informatik, 2020.

\bibitem{hammack2011handbook}
{\sc R.~H. Hammack, W.~Imrich, and S.~Klav{\v{z}}ar}, {\em Handbook of product
  graphs}, vol.~2, CRC press Boca Raton, 2011,
  \url{https://hassan.kharazi.net/blog/wp-content/uploads/2014/01/Hammack-R.-Imrich-W.-Klavzar-S.-Handbook-of-product-graphs-2ed.-CRC-2011ISBN-1439813043O533s_MAc_.pdf}.

\bibitem{hell1992core}
{\sc P.~Hell and J.~Ne{\v{s}}et{\v{r}}il}, {\em The core of a graph}, Discrete
  Mathematics, 109 (1992), pp.~117--126.

\bibitem{HellmuthIKS09}
{\sc M.~Hellmuth, W.~Imrich, W.~Kl{\"{o}}ckl, and P.~F. Stadler}, {\em
  Approximate graph products}, Eur. J. Comb., 30 (2009), pp.~1119--1133,
  \url{https://doi.org/10.1016/j.ejc.2008.09.006}.

\bibitem{HIKS09-strong}
{\sc M.~Hellmuth, W.~Imrich, W.~Kl{\"{o}}ckl, and P.~F. Stadler}, {\em Local
  algorithms for the prime factorization of strong product graphs}, Math.
  Comput. Sci., 2 (2009), pp.~653--682,
  \url{https://doi.org/10.1007/s11786-009-0073-y}.

\bibitem{HellmuthIK13-star}
{\sc M.~Hellmuth, W.~Imrich, and T.~Kupka}, {\em Partial star products: {A}
  local covering approach for the recognition of approximate cartesian product
  graphs}, Math. Comput. Sci., 7 (2013), pp.~255--273,
  \url{https://doi.org/10.1007/s11786-013-0156-7}.

\bibitem{hochstrasser1992note}
{\sc B.~Hochstrasser}, {\em A note on winkler's algorithm for factoring a
  connected graph}, Discrete Mathematics, 109 (1992), pp.~127--132.

\bibitem{imrich1998factoring}
{\sc W.~Imrich}, {\em Factoring cardinal product graphs in polynomial time},
  Discrete Mathematics, 192 (1998), pp.~119--144.

\bibitem{imrich2007recognizing}
{\sc W.~Imrich and I.~Peterin}, {\em Recognizing cartesian products in linear
  time}, Discrete Mathematics, 307 (2007), pp.~472--483.

\bibitem{janzamin2019spectral}
{\sc M.~Janzamin, R.~Ge, J.~Kossaifi, and A.~Anandkumar}, {\em Spectral
  learning on matrices and tensors}, Foundations and Trends{\textregistered} in
  Machine Learning, 12 (2019), pp.~393--536.

\bibitem{karger1998approximate}
{\sc D.~Karger, R.~Motwani, and M.~Sudan}, {\em Approximate graph coloring by
  semidefinite programming}, Journal of the ACM (JACM), 45 (1998),
  pp.~246--265.

\bibitem{karp1975computational}
{\sc R.~M. Karp}, {\em On the computational complexity of combinatorial
  problems}, Networks, 5 (1975), pp.~45--68.

\bibitem{kawarabayashi2012combinatorial}
{\sc K.-i. Kawarabayashi and M.~Thorup}, {\em Combinatorial coloring of
  3-colorable graphs}, in 2012 IEEE 53rd Annual Symposium on Foundations of
  Computer Science, IEEE, 2012, pp.~68--75.

\bibitem{kawarabayashi2017coloring}
{\sc K.-i. Kawarabayashi and M.~Thorup}, {\em Coloring 3-colorable graphs with
  less than $n^{1/5}$ colors}, Journal of the ACM (JACM), 64 (2017), pp.~1--23.

\bibitem{KLS93}
{\sc S.~Khanna, N.~Linial, and S.~Safra}, {\em On the hardness of approximating
  the chromatic number}, in Theory and Computing Systems, 1993., Proceedings of
  the 2nd Israel Symposium on the, Jun 1993, pp.~250--260,
  \url{https://doi.org/10.1109/ISTCS.1993.253464}.

\bibitem{LCKFG10}
{\sc J.~Leskovec, D.~Chakrabarti, J.~M. Kleinberg, C.~Faloutsos, and
  Z.~Ghahramani}, {\em Kronecker graphs: An approach to modeling networks}, J.
  Mach. Learn. Res., 11 (2010), pp.~985--1042,
  \url{https://dl.acm.org/citation.cfm?id=1756039}.

\bibitem{levenshtein1966binary}
{\sc V.~I. Levenshtein}, {\em Binary codes capable of correcting deletions,
  insertions, and reversals}, in Soviet physics doklady, vol.~10, Soviet Union,
  1966, pp.~707--710.

\bibitem{lu2016tensor}
{\sc C.~Lu, J.~Feng, Y.~Chen, W.~Liu, Z.~Lin, and S.~Yan}, {\em Tensor robust
  principal component analysis: Exact recovery of corrupted low-rank tensors
  via convex optimization}, in Proceedings of the IEEE conference on computer
  vision and pattern recognition, 2016, pp.~5249--5257.

\bibitem{modi2010unified}
{\sc K.~Modi, T.~Paterek, W.~Son, V.~Vedral, and M.~Williamson}, {\em Unified
  view of quantum and classical correlations}, Physical review letters, 104
  (2010), p.~080501.

\bibitem{moitra2014algorithmic}
{\sc A.~Moitra}, {\em Algorithmic aspects of machine learning}, Lecture notes,
  (2014).

\bibitem{punnen1994improved}
{\sc A.~P. Punnen and K.~Nair}, {\em Improved complexity bound for the maximum
  cardinality bottleneck bipartite matching problem}, Discrete Applied
  Mathematics, 55 (1994), pp.~91--93.

\bibitem{roughgarden2021beyond}
{\sc T.~Roughgarden}, {\em Beyond the worst-case analysis of algorithms},
  Cambridge University Press, 2021.

\bibitem{big-data}
{\sc A.~Sandryhaila and J.~M. Moura}, {\em Big data analysis with signal
  processing on graphs: Representation and processing of massive data sets with
  irregular structure}, IEEE Signal Processing Magazine, 31 (2014), pp.~80--90.

\bibitem{trevisan2012max}
{\sc L.~Trevisan}, {\em Max cut and the smallest eigenvalue}, SIAM Journal on
  Computing, 41 (2012), pp.~1769--1786.

\bibitem{Kronecker-product}
{\sc C.~F. Van~Loan}, {\em The ubiquitous kronecker product}, Journal of
  computational and applied mathematics, 123 (2000), pp.~85--100.

\bibitem{Kronecker-approx}
{\sc C.~F. Van~Loan and N.~Pitsianis}, {\em Approximation with kronecker
  products}, in Linear algebra for large scale and real-time applications,
  Springer, 1993, pp.~293--314.

\bibitem{wiesniak2020distance}
{\sc M.~Wie{\'s}niak, P.~Pandya, O.~Sakarya, and B.~Woloncewicz}, {\em Distance
  between bound entangled states from unextendible product bases and separable
  states}, Quantum Reports, 2 (2020), pp.~49--56.

\bibitem{wigderson1983improving}
{\sc A.~Wigderson}, {\em Improving the performance guarantee for approximate
  graph coloring}, Journal of the ACM (JACM), 30 (1983), pp.~729--735.

\bibitem{winkler1987factoring}
{\sc P.~Winkler}, {\em Factoring a graph in polynomial time}, European Journal
  of Combinatorics, 8 (1987), pp.~209--212.

\bibitem{WZ20}
{\sc M.~Wrochna and S.~{\v{Z}}ivn{\`y}}, {\em Improved hardness for
  h-colourings of g-colourable graphs}, in Proceedings of the Fourteenth Annual
  ACM-SIAM Symposium on Discrete Algorithms, SIAM, 2020, pp.~1426--1435.

\bibitem{zhang2018recovery}
{\sc H.~Zhang, V.~Sharan, M.~Charikar, and Y.~Liang}, {\em Recovery guarantees
  for quadratic tensors with limited observations}, arXiv preprint
  arXiv:1811.00148,  (2018).

\bibitem{zhao2021generalizations}
{\sc Y.~Zhao}, {\em Generalizations and Applications of Hypercontractivity and
  Small-Set Expansion}, PhD thesis, Carnegie Mellon University, 2021.

\bibitem{ZmazekZ01}
{\sc B.~Zmazek and J.~Zerovnik}, {\em On recognizing cartesian graph bundles},
  Discrete Mathematics, 233 (2001), pp.~381--391,
  \url{https://doi.org/10.1016/S0012-365X(00)00254-5}.

\bibitem{ZmazekZ07-strong}
{\sc B.~Zmazek and J.~Zerovnik}, {\em Weak reconstruction of strong product
  graphs}, Discrete Mathematics, 307 (2007), pp.~641--649,
  \url{https://doi.org/10.1016/j.disc.2005.09.049}.

\end{thebibliography}

\end{document}